\newif\ifarxiv
\newcommand{\gettikzxy}[3]{%
  \tikz@scan@one@point\pgfutil@firstofone#1\relax
  \edef#2{\the\pgf@x}%
  \edef#3{\the\pgf@y}%
}
    \newtheorem{observation}[theorem]{Observation}
    \newtheorem{rrule}{Reduction Rule}
    \theoremstyle{definition}
    \newtheorem{construction}{Construction}
    \let\cl@chapter\undefined
    \newtheorem{rrule}{Reduction Rule}
    \spnewtheorem{observation}{Observation}[section]{\bfseries}{\itshape}
    \spnewtheorem{construction}{Construction}[subsection]{\bfseries}{\normalfont}
\crefname{table}{Table}{Tables}
\crefname{figure}{Figure}{Figures}
\crefname{theorem}{Theorem}{Theorems}
\Crefname{theorem}{Thm.}{Thms.}
\crefname{definition}{Definition}{Definitions}
\crefname{corollary}{Corollary}{Corollaries}
\Crefname{corollary}{Cor.}{Cors.}
\crefname{observation}{Observation}{Observations}
\Crefname{observation}{Obs.}{Obs.}
\crefname{lemma}{Lemma}{Lemmas}
\crefname{example}{Example}{Examples}
\crefname{reduction}{Reduction}{Reductions}
\crefname{rrule}{Reduction Rule}{Reduction Rules}
\crefname{construction}{Construction}{Constructions}
\crefname{remark}{Remark}{Remarks}
\crefname{subsection}{Section}{Sections}
\crefname{section}{Section}{Sections}
\crefname{proposition}{Proposition}{Propositions}
\Crefname{proposition}{Prop.}{Props.}
\crefname{algorithm}{Algorithm}{Algorithms}
\newcommand{\problemdef}[3]{
	\begin{center}
		\begin{minipage}{0.96\textwidth}%
			\noindent
			\textsc{#1}
			\begin{compactdesc}
			 \item[\textbf{Input:}]  #2
			 \item[\textbf{Question:}]  #3
			\end{compactdesc}
		\end{minipage}
	\end{center}
}
\newcommand{\mymath}[1]{\ensuremath{\mathbb{#1}}}
\newcommand{\N}{\mymath{N}}
\newcommand{\wone}{{\mathrm{W[1]}}}
\DeclareMathOperator{\poly}{poly}
\newcommand{\coNP}{\ensuremath{\mathrm{coNP}}}
\newcommand{\NP}{\ensuremath{\mathrm{NP}}}
\newcommand{\XP}{\ensuremath{\mathrm{XP}}}
\newcommand{\W}[1]{\ensuremath{\mathrm{W}[#1]}}
\newcommand{\FPT}{\ensuremath{\mathrm{FPT}}}
\newcommand{\classP}{\ensuremath{\mathrm{P}}}
\newcommand{\unlessPK}{\ensuremath{\coNP\subseteq \NP/\poly}}
\renewcommand{\cal}[1]{\mathcal{#1}}
\newcommand{\calS}{\cal{S}}
\newcommand{\prob}[1]{\textsc{#1}}
\newcommand{\vc}{\prob{Vertex Cover}}
\newcommand{\VC}{\vc{}}
\newcommand{\msvc}{\prob{Multistage Vertex Cover}}
\newcommand{\MSVC}{\msvc{}}
\newcommand{\TG}{\cal{G}}
\newcommand{\TE}{\cal{E}}
\newcommand{\UG}{\ensuremath{G_{\downarrow}}}
\newcommand{\sydi}[2]{\ensuremath{#1\triangle #2}}
\newcommand{\sydic}[2]{\ensuremath{|\sydi{#1}{#2}|}}
\newcommand{\sdt}[2]{#1\diamond#2}
\newcommand{\yes}{\texttt{yes}}
\newcommand{\no}{\texttt{no}}
\newcommand{\RD}{$(\Rightarrow)$}
\newcommand{\LD}{$(\Leftarrow)$}
\newcommand{\tref}[1]{{\scriptsize (\Cref{#1})}}
\newcommand{\ttref}[2]{{\scriptsize (\Cref{#1},~\Cref{#2})}}
\newcommand{\lqed}{\ifarxiv{}\else{}\qed\fi{}}
\newcommand{\cqed}{\hfill$\blacklozenge$}
\newcommand{\etal}{{et~al.}}
\newcommand{\smooth}{smooth}
\newcommand{\solset}{\ensuremath{\Psi}}
\newcommand{\smod}[1]{\ensuremath{\widehat{#1}}}
\newcommand{\mytitle}{Multistage Vertex Cover}
\newcommand{\ceq}{\ensuremath{\coloneqq}}
\newcommandx{\set}[2][1=1]{\ensuremath{\{#1,\ldots,#2\}}}
\newcommandx{\tlog}[3][1=,3=]{\log_{#1}^{#3}(#2)}
\newcommandx{\dist}[2][1=]{\ensuremath{\operatorname{dist}_{#1}(#2)}}
\newcommandx{\ith}[2][1=th]{#2\nobreakdash-#1}
  \title{\mytitle%
  }
  \author{Till Fluschnik}
  {Technische Universität Berlin, Algorithmics and Computational Complexity, Germany}
  {till.fluschnik@tu-berlin.de}
  {https://orcid.org/0000-0003-2203-4386}
  {Supported by the DFG, project TORE (NI 369/18).}
  \author{Rolf Niedermeier}
  {Technische Universität Berlin, Algorithmics and Computational Complexity, Germany}
  {rolf.niedermeier@tu-berlin.de}
  {https://orcid.org/0000-0003-1703-1236}
  {}
  \author{Valentin Rohm}
  {Technische Universität Berlin, Algorithmics and Computational Complexity, Germany}
  {valentinl.rohm@campus.tu-berlin.de}
  {}
  {}
  \author{Philipp Zschoche}
  {Technische Universität Berlin, Algorithmics and Computational Complexity, Germany}
  {zschoche@tu-berlin.de}
  {https://orcid.org/0000-0001-9846-0600}
  {}
  \authorrunning{T.~Fluschnik, R.~Niedermeier, V.~Rohm, P.~Zschoche}%
  \keywords{parameterized algorithmics, NP-completeness, temporal graphs, data reduction}%
    \title{\mytitle\thanks{An extended abstract of this paper appeared in 
    \emph{Proceedings of the 14th International Symposium on
    Parameterized and Exact Computation (IPEC’19)}, LIPIcs, Schloss
    Dagstuhl - Leibniz-Zentrum f\"ur Informatik, 2019, 148(14):1--14~\cite{FluschnikNRZ19}.
    T.\ Fluschnik acknowledges support by the DFG, project TORE (NI/369-18).}
    }
    \author{Till~Fluschnik, Rolf~Niedermeier, Valentin~Rohm, and Philipp~Zschoche}
    \institute{
    Till~Fluschnik \at
    Technische Universität Berlin, Algorithmics and Computational Complexity, Germany\\
    \email{till.fluschnik@tu-berlin.de}           %
    \and
    Rolf~Niedermeier \at
    Technische Universität Berlin, Algorithmics and Computational Complexity, Germany\\
    \email{rolf.niedermeier@tu-berlin.de}           %
    \and
    Valentin~Rohm \at
    Technische Universität Berlin, Algorithmics and Computational Complexity, Germany\\
    \email{valentinl.rohm@campus.tu-berlin.de}           %
    \and
    Philipp~Zschoche \at
    Technische Universität Berlin, Algorithmics and Computational Complexity, Germany\\
    \email{zschoche@tu-berlin.de}           %
    }
    \date{Received: date / Accepted: date}
\begin{document}
\sloppy
\allowdisplaybreaks

\maketitle

\begin{abstract}  
Covering all edges of a graph by a small number of vertices, 
this is the NP-complete \prob{Vertex Cover} problem.
It is among the most
fundamental graph-algorithmic problems. 
Following a recent trend in studying temporal graphs 
(a sequence of graphs, 
so-called layers,
over the same vertex set but, 
over time,
changing edge sets), 
we initiate the study of 
\prob{Multistage Vertex Cover}. 
Herein, given a temporal graph, 
the goal is to 
find for each layer of the temporal graph a small vertex 
cover \emph{and} to guarantee that two vertex cover sets of 
every two consecutive layers differ not too much (specified by a given parameter).
We show that, 
different from classic \prob{Vertex Cover} and
some other dynamic or temporal variants of it, 
\prob{Multistage Vertex Cover} is computationally hard even in fairly restricted settings.
On the positive side, however, 
we also spot several fixed-parameter tractability 
results based on some of the most natural parameterizations.
\end{abstract}

\section{Introduction}\label{sec:intro}

\vc{} %
asks, given an undirected graph~$G$ and an integer~$k\geq 0$, whether at most~$k$ vertices can be deleted from~$G$ such that the remaining graph contains no edge.
\VC{} is NP-complete and it is a formative problem of algorithmics and 
combinatorial optimization.
We study a \emph{time-dependent}, ``\emph{multistage}''  version, 
namely a variant of \VC{} on temporal graphs.
A \emph{temporal graph}~$\TG$ is a tuple~$(V,\TE,\tau)$ consisting of a set~$V$ of vertices, a discrete time-horizon~$\tau$, and a set of temporal edges~$\TE\subseteq \binom{V}{2}\times \set{\tau}$.
Equivalently, a temporal graph~$\TG$ can be seen as a vector~$(G_1,\ldots,G_\tau)$ of static graphs (\emph{layers}), where each graph is defined over the same vertex set~$V$.
Then, our specific goal is to find a small vertex cover~$S_i$ for each layer~$G_i$ 
such that 
the size of the symmetric difference~$\sydi{S_i}{S_{i+1}}=(S_i\setminus S_{i+1})\cup(S_{i+1}\setminus S_i)$ of the vertex covers~$S_i$ and~$S_{i+1}$ 
of every two consecutive layers~$G_i$ and~$G_{i+1}$ is small.
Formally, we thus introduce and study the following problem (see~\cref{fig:introex} for an illustrative example).

\problemdef{\msvc}
{A temporal graph~$\TG = (V,\TE,\tau)$ and two integers~$k\in\N,\ell\in\N_0$.}
{Is there a sequence~$\cal{S}=(S_1,\ldots,S_\tau)$ such that
\begin{compactenum}[(i)]
 \item for all~$i\in\{1,\ldots,\tau\}$, it holds true that $S_i\subseteq V$ is a
	 size-at-most-$k$ vertex cover for layer~$G_i$, and
 \item for all~$i\in\{1,\ldots,\tau-1\}$, it holds true that $\sydic{S_i}{S_{i+1}}\leq \ell$?
\end{compactenum}
}

Throughout this paper we assume that $0 < k < |V|$ 
because otherwise we have a trivial instance.
In our model, we follow the recently proposed \emph{multistage}~\cite{BampisELP18,GuptaTW14,BampisET19,EMS14,BEK19,HHKNRS19,FNSZ20,CTW20} view on classical optimization problems on temporal graphs.

In general, the motivation behind a multistage variant of a classical problem 
such as \prob{Vertex cover} is that the 
environment changes over time (here reflected by the changing edge 
sets in the temporal graph) 
and a corresponding adaptation of the current solution comes with a cost.
In this spirit, the parameter~$\ell$ in the definition of 
\MSVC{} allows to model that only moderate changes concerning the 
solution vertex set may be wanted when moving from one layer to the subsequent 
one. Indeed, in this sense $\ell$~can be interpreted as 
a parameter measuring the degree of (non-)conservation~\cite{HN13,AEFRS15}.

It is immediate that \MSVC{} is \NP-hard as it generalizes \vc{} ($\tau=1$).
We will study its parameterized complexity regarding the problem-specific parameters~$k$, 
$\tau$,  
$\ell$, 
and some of their combinations, 
as well as restrictions to temporal graph classes~\cite{CasteigtsFQS12,FluschnikMNZ18}.

\begin{figure}[h]
    \centering
    \begin{tikzpicture}
        \def\xr{0.88}
        \def\yr{1}
        \tikzstyle{xnode}=[circle,fill,scale=2/3,draw];
        \tikzstyle{hnode}=[fill=green,draw=none,circle,scale=1.3,draw,opacity=0.125];
        \tikzstyle{xedge}=[thick,-];
        \newcommand{\tikzlayer}[1]{
            \draw[rounded corners, gray, very thin] (0.1*\xr,0.1*\yr) rectangle (3.4*\xr,1.9*\yr);
            \node at (-0.3*\xr,1.25*\yr)[label=90:{#1}]{};
            \node (v1) at (1*\xr,1.5*\yr)[xnode,label=180:{$v_1$}]{};
            \node (v2) at (2.5*\xr,1.5*\yr)[xnode,label=0:{$v_2$}]{};
            \node (v3) at (2.5*\xr,0.5*\yr)[xnode,label=00:{$v_3$}]{};
            \node (v4) at (1*\xr,0.5*\yr)[xnode,label=180:{$v_4$}]{};
        }

        \begin{scope}[]
        \tikzlayer{$G_1$}
        \draw[xedge] (v2) -- (v3) -- (v4) -- (v2);
        \node at (v3)[hnode]{};
        \node at (v2)[hnode]{};
        \end{scope}

        \begin{scope}[xshift=\xr*4.5cm]
        \tikzlayer{$G_2$}
        \draw[xedge] (v3) -- (v1);
        \node at (v3)[hnode]{};
        \end{scope}

        \begin{scope}[xshift=\xr*9cm]
        \tikzlayer{$G_3$}
        \draw[xedge] (v2) -- (v3) -- (v4) -- (v1) -- (v2);
        \node at (v1)[hnode]{};
        \node at (v3)[hnode]{};
        \end{scope}
    \end{tikzpicture}
    \caption{An illustrative example with temporal graph~$\TG=(G_1,G_2,G_3)$ over the vertex set~$V=\{v_1,\dots,v_4\}$. A solution~$\calS=(\{v_2,v_3\},\{v_3\},\{v_1,v_3\}$) for~$k=2$ and~$\ell=1$ is highlighted.}
    \label{fig:introex}
\end{figure}

\ifarxiv{}
\subparagraph{Related Work.}
\else{}
\paragraph{Related Work.}
\fi{}
The literature on vertex covering is extremely rich, 
even when focusing on parameterized complexity studies. 
Indeed,
\vc{} %
can be seen as ``drosophila'' of parameterized algorithmics.
Thus, we only consider \VC{} studies 
closely related to our setting.
First, we mention in passing that \VC{} is studied in dynamic graphs~\cite{IwataO14,AlmanMW17} and graph stream models~\cite{ChitnisCEHMMV16}.
More importantly for our work, 
Akrida~\etal~\cite{AkridaMSZ18} studied a variant of \VC{} on temporal graphs.
Their model significantly differs from ours: 
they want an edge to be covered at least once over every time window of some given size~$\Delta$.
That is, 
they define a temporal vertex cover as a set~$S\subseteq V\times \set{\tau}$ such that,
for every time window of size~$\Delta$ and for each edge~$e=\{v,w\}$ appearing in a layer contained in the time window,
it holds that~$(v,t)\in S$ or~$(w,t) \in~S$ for some~$t$ in the time window with~$(e,t)\in\TE$.
For their model,
Akrida~\etal{} ask whether such an~$S$ of small cardinality exists.
Note that if~$\Delta>1$, 
then for some~$t\in\set{\tau}$
the set~$S_t\ceq \{v\mid (v,t)\in S\}$ is not necessarily a vertex cover of layer~$G_t$.
For~$\Delta=1$, each~$S_t$ must be a vertex cover of~$G_t$.
However, in Akrida~\etal's model the size of each~$S_t$ as well as the size of the symmetric difference between each~$S_{t}$ and~$S_{t+1}$ may strongly vary. 
They provide several hardness results and algorithms 
(mostly referring to approximation or exact algorithms, 
but not to parameterized complexity studies).

A second  related line of research, not directly referring to temporal 
graphs though, 
studies reconfiguration problems
which arise when we wish to find a step-by-step transformation
between two feasible solutions of a problem such that
all intermediate results are feasible solutions as well \cite{ito2011complexity,gopalan2009connectivity}.
Among other reconfiguration problems,
Mouawad~\etal~\cite{mouawad2017parameterized,mouawad2018vertex}
studied \prob{Vertex Cover Reconfiguration}: 
given a graph~$G$, 
two vertex covers~$S$ and~$T$ each of size at most~$k$, 
and an integer~$\tau$,
the question is whether there is a sequence~$(S=S_1,\dots,S_\tau=T)$ such that each $S_t$ is 
a vertex cover of size at most $k$.
The essential difference to our model is that from one ``sequence element'' 
to the next only one vertex may be changed 
and that the input graph does not change over time.
Indeed, there is an easy reduction of this model to ours while the opposite direction 
is unlikely to hold. 
This is substantiated by the fact that Mouawad~\etal~\cite{mouawad2017parameterized} showed that 
\prob{Vertex Cover Reconfiguration} is fixed-parameter tractable when 
parameterized by vertex cover size~$k$ while we show
W[1]-hardness for the corresponding case of \MSVC{}.

Finally, there is also a close relation to the research on 
dynamic parameterized problems~\cite{AEFRS15,KST18}. 
Krithika~\etal~\cite{KST18} studied \prob{Dynamic Vertex Cover} where
one is given two graphs on the same vertex set and a vertex cover for one 
of them together with the guarantee that the cardinality of the 
symmetric difference 
between the two edge sets is upper-bounded by a parameter~$d$. The task then is 
to find a vertex cover for the second graph that is ``close enough'' (measured by a second parameter) to the vertex cover of the first graph.
They show fixed-parameter tractability and a linear kernel with respect to~$d$.

\ifarxiv{}
\subparagraph*{Our Contributions.}
\else{}
\paragraph{Our Contributions.}
\fi{}

Our results, 
focusing on the three perhaps most natural parameters, 
are summarized in \cref{tab:results}.%
\begin{table}[t]
\centering
\caption{Overview of our results.
The column headings describe the restrictions on the input and each row corresponds to a parameter.
 p-NP-hard, PK, and NoPK abbreviate para-NP-hard, polynomial-size problem kernel, and no problem kernel of polynomial size unless~\unlessPK{}.
 }
  \setlength{\tabcolsep}{8pt}
 \begin{tabular}{@{}r|llll@{}}\toprule
            & \multicolumn{2}{l}{general layers} & tree layers & one-edge layers \\ 
            & $0\leq\ell<2k$ & $\ell\geq 2k$ & $0\leq \ell <2k$ & $1\leq \ell<2$ \\\midrule
            & \NP-hard & \NP-hard
            & \NP-hard & \NP-hard \\
            & & & {\small(\Cref{thm:npahrdcases}(i))} & {\small(\Cref{thm:npahrdcases}(ii))} \\\midrule
  $\tau$    
            & p-\NP-hard & p-\NP-hard
            &  p-\NP-hard &  FPT, PK  \\
            & \tref{thm:npahrdcases} & \tref{thm:npahrdcases}
            & \tref{thm:npahrdcases} &  \tref{thm:preproctau}  \\
  $k$       & \XP, \W{1}-h.,          & FPT$^\dagger$, NoPK  & \XP, \W{1}-h.  & \emph{open}, NoPK \\
            & \tref{thm:xpwhardness}  & \ttref{obs:turedu}{thm:preprock} & \ttref{thm:xpwhardness}{rem:whardnesstree} & \tref{thm:preprock}  \\
            
  $k+\tau$  
            & FPT, PK & FPT, PK 
            & FPT, PK & FPT, PK \\
            & \tref{thm:PKktau} & \tref{thm:PKktau}  
            & \tref{thm:PKktau} & \tref{thm:PKktau} \\
  \bottomrule
 \end{tabular}
 \label{tab:results}
\end{table}
We highlight a few specific results.
\msvc{} remains~\NP-hard even if every layer 
consists of only one edge;  
not surprisingly, 
the corresponding hardness 
reduction exploits 
an unbounded number~$\tau$ of time layers.
If one only has two layers, however, one of them 
being a tree and the other being a path, then again \msvc{}
already becomes NP-hard.
\MSVC{} parameterized by solution size~$k$ is fixed-parameter tractable if~$\ell\geq 2k$,
but becomes \W{1}-hard if~$\ell<2k$.
Considering the tractability results for \prob{Dynamic Vertex Cover} \cite{KST18}
and \prob{Vertex Cover Reconfiguration} \cite{mouawad2017parameterized}, 
this hardness is surprising,
and it is our most technical result.
Furthermore,
\MSVC{} parameterized by $k$ with $\ell\geq2k$
does not admit a problem kernel of polynomial size unless \unlessPK{}.
Finally,
for the combined parameter~$k+\tau$ 
we obtain polynomial-sized problem kernels 
(and thus fixed-parameter tractability) 
in \emph{all} cases without any further constraints.

\ifarxiv{}
\subparagraph{Outline.}
\else{}
\paragraph{Outline.}
\fi{}
In \cref{sec:prelims},
we provide some preliminaries.
For \MSVC{},
we give some first and general observations in \cref{sec:firstobs},
study the parameterized complexity regarding~$k$ in~\cref{sec:paramvc},
and 
discuss the possibilities for efficient data reduction in~\cref{sec:dataredu}.
We conclude in~\cref{sec:conclusion}.

\section{Preliminaries}
  \label{sec:prelims}

We denote by~$\N$ and~$\N_0$ the natural numbers excluding and including zero, respectively.
For two sets~$A$ and~$B$,
we denote by~$\sydi{A}{B} \ceq (A\setminus B) \cup (B \setminus A)=(A\cup B)\setminus (A\cap B)$ the symmetric difference of~$A$ and~$B$,
and by~$A\uplus B$ the disjoint union of~$A$ and~$B$.

\ifarxiv{}
\subparagraph{Temporal Graphs.}
\else{}
\paragraph{Temporal Graphs.}
\fi{}
A temporal graph~$\TG$ is a tuple~$(V,\TE,\tau)$ consisting of the set~$V$ of vertices, 
the set~$\TE$ of temporal edges, 
and a discrete time-horizon~$\tau$.
A temporal edge~$e$ is an element in~$\binom{V}{2}\times \set{\tau}$.
Equivalently, 
a temporal graph~$\TG$ can be defined as a vector of static graphs~$(G_1,\ldots,G_\tau)$, 
where each graph is defined over the same vertex set~$V$.
We also denote by~$V(\TG)$, $\TE(\TG)$, and~$\tau(\TG)$ the set of vertices, the set of temporal edges, and the discrete time-horizon of~$\TG$, respectively.
The \emph{underlying graph}~$\UG=\UG(\TG)$ of a temporal graph~$\TG$ is the static graph with vertex set~$V(\TG)$ and 
edge set~$\{e \mid \exists t\in\set{\tau(\TG)}: (e,t)\in \TE(\TG)\}$.

\ifarxiv{}
  \subparagraph{Parameterized Complexity Theory.}
  \else{}
  \paragraph{Parameterized Complexity Theory.}
  \fi{}

  Let~$\Sigma$ be a finite alphabet.
  A parameterized problem~$L$ is a subset~$L\subseteq \{(x,k)\in\Sigma^*\times \N_0\}$.
  An instance~$(x,k)\in\Sigma^*\times \N_0$ is a \yes-instance of~$L$ if and only if~$(x,k)\in L$ (otherwise, it is a \no-instance).
  Two instances~$(x,k)$ and~$(x',k')$ of parameterized problems~$L,L'$ are \emph{equivalent} if~$(x,k)\in L \iff (x',k')\in L'$.
  A parameterized problem~$L$ is fixed-parameter tractable (FPT) if for every input~$(x,k)$ one can decide whether~$(x,k)\in L$ in~$f(k)\cdot |x|^{O(1)}$~time, where~$f$ is some computable function only depending on~$k$.  
  A parameterized problem~$L$ is in~$\XP$ if for every instance~$(x,k)$ one can decide whether~$(x,k)\in L$ in time~$|x|^{f(k)}$ for some computable function~$f$ only depending on~$k$.
  A \W{1}-hard parameterized problem is fixed-parameter intractable unless~\FPT=\W{1}.

  Given a parameterized problem~$L$, a \emph{kernelization} is an algorithm that maps any instance~$(x,k)$ of~$L$ in time polynomial in~$|x|+k$ to an instance~$(x',k')$ of~$L$ (the problem \emph{kernel}) such that
  \begin{inparaenum}[(i)]
  \item $(x,k)\in L \iff (x',k')\in L$, and
  \item $|x'|+k'\leq f(k)$ for some computable function~$f$ (the \emph{size} of the problem kernel) only depending on~$k$.
  \end{inparaenum}

  We refer to Downey~and~Fellows~\cite{downey2013fundamentals} and Cygan~\etal~\cite{cygan2015parameterized} for more material on parameterized complexity.

\section{Basic Observations}
\label{sec:firstobs}

In this section,
we state some preliminary simple-but-useful observations on \msvc{} and its relation to~\vc{}.

\begin{observation}%
 \label[observation]{obs:tau2oneedgelayer}
 Every instance~$(\TG,k,\ell)$ of~\msvc{} with~$k\geq \sum_{i=1}^{\tau(\TG)} |E(G_i)|$ is a \yes-instance.
\end{observation}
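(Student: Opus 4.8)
The plan is to exhibit an explicit solution sequence. The key quantity is $\sum_{i=1}^{\tau(\TG)} |E(G_i)|$, which counts all temporal edges (with multiplicity across layers); call this total $m$. The hypothesis $k \geq m$ says that the budget per layer is at least the total number of edges across \emph{all} layers.

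First I would construct a single set $S \subseteq V$ that simultaneously covers every layer. For each temporal edge, pick one of its two endpoints; let $S$ be the union of all these chosen endpoints. Then $S$ is a vertex cover of every layer $G_i$ (indeed of the underlying graph $\UG$), since each edge of each layer contributes an endpoint to $S$. Crucially, $|S| \leq m \leq k$, because $S$ is built from at most $m$ chosen vertices. If $|S| < k$ I can pad it up to exactly $k$ by adding arbitrary extra vertices (here the standing assumption $k < |V|$ guarantees there are enough vertices to pad with, though padding is not even necessary since the definition only requires size \emph{at most} $k$).

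Next I would verify the two conditions of \MSVC{} for the constant sequence $\calS = (S, S, \ldots, S)$. Condition (i) holds because $S$ is a size-at-most-$k$ vertex cover of each $G_i$. Condition (ii) holds trivially: since consecutive sets are identical, $\sydic{S_i}{S_{i+1}} = |\sydi{S}{S}| = 0 \leq \ell$ for every $i$, and $\ell \in \N_0$ so $\ell \geq 0$. Hence $\calS$ is a valid solution and $(\TG,k,\ell)$ is a \yes-instance.

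There is no real obstacle here; the only thing to be careful about is the bookkeeping on the bound $|S| \leq k$. The point is that using one fixed cover for all layers makes all symmetric differences vanish, so the parameter $\ell$ imposes no constraint whatsoever, and the whole argument reduces to checking that a global cover of size at most $k$ exists, which the counting hypothesis guarantees. I would write this as a two- or three-line proof.
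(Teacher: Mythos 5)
Your proposal is correct and matches the paper's proof: both take a single vertex cover of the underlying graph of size at most $\sum_{i=1}^{\tau}|E(G_i)|\leq k$ (one endpoint per edge) and use it as a constant sequence, making all symmetric differences zero. The extra care about padding to size exactly $k$ is unnecessary but harmless, since the definition only requires size at most $k$.
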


\begin{proof}
 It is easy to see that a graph with~$m$ edges always admits a vertex cover of size~$m$.
 Hence, there is a vertex cover~$S\subseteq V$ of size~$k$ of~$\UG(\TG)$, and hence, $S$ is a vertex cover for each layer.
 The vector~$(S_1,\ldots,S_\tau)$ with~$S_i=S$ for all~$i\in\set{\tau}$ is a solution for every~$\ell\geq 0$.
\lqed
\end{proof}

\noindent
Next,
we state that if we are facing a \yes-instance,
then we can assume that there exists a solution where each layer's vertex cover is either of size~$k$ or~$k-1$.

\begin{observation}%
 \label[observation]{obs:largesolutions}
 Let $(\TG,k,\ell)$ be an instance of~\msvc{}. 
 If $(\TG,k,\ell)$ is a \yes-instance,
 then there is a solution $\calS = (S_1,\ldots,S_\tau)$ such that $|S_1| = k$ and $k-1 \leq |S_i| \leq k$ for all $i \in \{1,\ldots,\tau\}$.
\end{observation}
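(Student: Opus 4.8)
The two substantive claims are $\lvert S_1\rvert = k$ and $\lvert S_i\rvert \ge k-1$ for every $i$ (the upper bound $\lvert S_i\rvert\le k$ is immediate from the definition of a solution). The plan is to start from an arbitrary solution (one exists since we are in a \yes-instance) and to rewrite it into one of the desired shape by an extremal/exchange argument. Among all solutions I would pick one maximizing the total cover size $\Phi\ceq\sum_{i=1}^{\tau}\lvert S_i\rvert$. The point of this choice is that \emph{adding} vertices to a vertex cover keeps it a vertex cover, so the only two things that can prevent us from enlarging a cover are the per-layer size cap~$k$ and the consecutive symmetric-difference cap~$\ell$; a size-maximal solution is therefore one in which no ``profitable'' addition is possible, and I want to read off the size bounds from this.

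The key local move I would use to detect non-maximality is a \emph{block push}. To raise a deficient $\lvert S_j\rvert$, fix a vertex $v\notin S_j$ and let $[a,b]\ni j$ be the maximal run of consecutive layers none of which contains~$v$; then, whenever they exist, $v\in S_{a-1}$ and $v\in S_{b+1}$. Adding $v$ to every $S_i$ with $i\in[a,b]$ leaves each internal symmetric difference unchanged (both endpoints gain~$v$) and strictly \emph{decreases} each boundary difference $\sydic{S_{a-1}}{S_a}$ and $\sydic{S_b}{S_{b+1}}$ (there $v$ leaves the symmetric difference), so all symmetric-difference constraints are preserved; vertex-cover-ness is preserved; and $\Phi$ strictly increases. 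If $\lvert S_j\rvert\le k-2$, this contradicts maximality — \emph{provided every layer in the block has size below~$k$}, so that adding~$v$ does not overshoot the cap.

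The step I expect to be the main obstacle is exactly this proviso: an intervening \emph{full} layer ($\lvert S_i\rvert=k$) inside the block blocks the naive push, and the two boundary edges straddling such a bottleneck may be tight, so one cannot simply push $v$ across them. Handling these bottlenecks is the technical heart. Here I would argue by an augmenting-type exchange: at a tight boundary whose outer layer $S$ is full, the size gap forces the existence of a vertex $w\in S$ that is absent from the inner layer, and pushing $w$ (rather than $v$) along the adjacent segment lets the deficiency migrate past the full layer without violating the size cap or any symmetric-difference constraint. Iterating these pushes strictly decreases the total deficiency $\sum_i\max(0,(k-1)-\lvert S_i\rvert)$, so a size-maximal solution must satisfy $\lvert S_i\rvert\ge k-1$ for all~$i$.

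Finally I would upgrade the first layer to $\lvert S_1\rvert = k$ using the same philosophy, now with no left neighbour to worry about. In a size-maximal solution, suppose $\lvert S_1\rvert=k-1$. If some $v\in S_2\setminus S_1$ exists, moving it into $S_1$ only decreases the single incident difference $\sydic{S_1}{S_2}$ while raising~$\Phi$, a contradiction; otherwise $S_2\subseteq S_1$, and then either $\sydic{S_1}{S_2}<\ell$ (so a fresh vertex can be added to $S_1$ alone) or $\sydic{S_1}{S_2}=\lvert S_1\rvert-\lvert S_2\rvert=\ell$, which together with $\lvert S_1\rvert,\lvert S_2\rvert\ge k-1$ forces $\ell=0$ and $S_1=S_2$ — whence a single vertex can be pushed into an entire prefix (again a block push, now with a free right boundary) to raise~$\Phi$. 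Every case contradicts maximality, giving $\lvert S_1\rvert=k$ and completing the argument.
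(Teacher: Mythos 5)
Your high-level plan (take an extremal solution and repair deficiencies by exchange moves) is in the same spirit as the paper's proof, and your argument for $|S_1|=k$ is sound; but the step you yourself identify as the technical heart --- moving a deficiency past full layers --- is exactly where the proposal breaks, and the single-vertex moves you describe cannot fix it. Consider $k=4$, $\ell=2$, $\tau=3$, where $G_1$ consists of stars with five leaves centred at $a,b,c,d$ (so $S_1=\{a,b,c,d\}$ is forced as the unique cover of size at most $4$), $G_3$ likewise forces $S_3=\{a,b,e,f\}$, and $G_2$ consists of stars at $a$ and $b$ only. The sequence $(\{a,b,c,d\},\{a,b\},\{a,b,e,f\})$ is feasible with $|S_2|=k-2$, and it is a dead end for every single-vertex insertion: adding $c$ or $d$ to $S_2$ raises $\sydic{S_2}{S_3}$ to $3>\ell$, adding $e$ or $f$ raises $\sydic{S_1}{S_2}$ to $3>\ell$, a fresh vertex raises both, and the flanking layers are full and rigid, so no block can be extended across them and no single vertex $w$ can "migrate the deficiency past the full layer" without violating a symmetric-difference constraint --- contrary to what your augmenting exchange asserts. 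Yet this solution is not $\Phi$-maximal, since $(\{a,b,c,d\},\{a,b,c,e\},\{a,b,e,f\})$ is feasible. So the claim that iterating your pushes strictly decreases the total deficiency is unsupported, and your local moves do not certify that a $\Phi$-maximal solution has no layer of size at most $k-2$.

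The missing idea is a \emph{simultaneous pair insertion}: add two vertices $v$ and $w$ to an entire segment at once, where $v$ is taken from the left boundary ($v\in S_{i-1}\setminus S_i$) and $w$ from the right boundary ($w\in S_{p}\setminus S_{p-1}$), so that each boundary symmetric difference loses one element for each one it gains, and the segment in between is small enough (size at most $k-2$) to absorb both vertices. This is precisely the move in the paper's case (a), $S_q\ceq S_q\cup\{v,w\}$. The paper also uses a different extremal ordering (first maximize the index of the earliest deficient layer, then the size of that layer) rather than the total size $\sum_i |S_i|$; your potential is not wrong per se, but the proof needs the two-vertex move regardless of which potential is used.
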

\begin{proof}
	We first show that there is a solution $\calS = (S_1,\ldots,S_\tau)$ for~$I\ceq (\TG,k,\ell)$ such that $|S_1| = k$.
	Towards a contradiction assume that such a solution does not exist.
	Let~$\calS=(S_1,\ldots,S_\tau)$ be a solution such that $|S_1|$~is maximal over all solutions for~$I$.
	Let $i \in \{1,\ldots,\tau\}$ be the maximum index such that~$S_{j} \subseteq S_{j-1}$, for all $j \in \{2,\ldots,i\}$.
	If $i=\tau$, then we have that $|S_j| \leq |S_1| < k$ for all $j \in \{1,\ldots,\tau\}$.
	Hence, 
	we can find a subset~$X\subseteq V \setminus S_1$ such that $(S_1 \cup X,\ldots,S_\tau\cup X)$ is a solution.
	This contradicts $|S_1|$ being maximal.
	Now let $i < \tau$. 
	Hence, there is a vertex $v \in S_{i+1} \setminus S_i$.
	Now we can adjust the solution by adding $v$ to $S_j$ for all $j \in \{1,\dots,i\}$.
	This contradicts $|S_1|$ being maximal.
	Hence, there is a solution $\calS = (S_1,\ldots,S_\tau)$ such that~$|S_1| = k$.

	Let $\solset$ be the set of solutions such that the first vertex cover is of size~$k$.
	Assume towards a contradiction that all solutions in $\solset$ contain a vertex cover smaller than~$k-1$.
	Let $\solset_i \subseteq \solset$ be the set of solutions such that for each~$(S_1,\ldots,S_\tau) \in \solset_i$ we have that~$|S_i| < k-1$ and $|S_j| \geq k-1$ for all $j \in \{1,\ldots, i-1\}$.
	Let $i \in \{1,\ldots,\tau\}$ be maximal such that $\solset_i \not = \emptyset$.
	Furthermore, 
	let $\calS = (S_1,\ldots,S_\tau) \in \solset_i$ such that~$|S_i|$ is maximal over all solutions in~$\solset_i$. 
	Hence, 
	there is a vertex $v \in S_{i-1} \setminus S_i$.
	We distinguish two cases.
	\begin{compactdesc}
	\item[(a):]
	Assume that there is a $p \in \{i+1,\dots,\tau\}$ such that 
	there is a~$w \in S_{p} \setminus S_{p-1}$ and~$S_{j} \subseteq S_{j-1}$ for all~$j \in \{i+1,\ldots, p-1\}$.
	The idea now is to keep~$v$ and add~$w$ in the \ith{$i$} layer and then remove~$v$ in the \ith{$p$} layer.
	We can achieve this by simply setting $S_q \ceq S_q \cup \{v,w\}$ for all $q \in \{i,\ldots,p-1\}$. 
	Note that this is a solution which either contradicts that $|S_i|$ is maximal or that $i$ is maximal.
	
	\item[(b):]
	Now assume that $S_{j} \subseteq S_{j-1}$ for all $j \in \{i+1,\ldots, \tau\}$.
	In this case we take an arbitrary vertex $w \in V \setminus S_i$ and set $S_q \ceq \{v,w\}$ for all $q \in \{i, \ldots, \tau\}$. 
	This contradicts $i$ being maximal.
    \lqed
	\end{compactdesc}
\end{proof}

\noindent
With the next two observations,
we show that the special case of \msvc{} where~$\ell=0$
is equivalent to~\vc{} under polynomial-time many-one reductions.

\begin{observation}%
 \label[observation]{prop:oneedgelayers-ell-0}
 There is a polynomial-time algorithm that maps 
 any instance~$(G=(V,E),k)$ of~\vc{} to 
 an equivalent instance~$(\TG,k,\ell)$ of~\MSVC{} 
 where~$\ell=0$ and every layer~$G_i$ contains only one edge.
\end{observation}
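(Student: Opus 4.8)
The plan is to turn the single, static edge set of~$G$ into a time axis, placing exactly one edge in each layer, and to exploit~$\ell=0$ to force all layer-wise vertex covers to coincide. Concretely, given an instance~$(G=(V,E),k)$ of~\vc{} with $E=\{e_1,\ldots,e_m\}$, I would build the temporal graph $\TG\ceq(V,\TE,\tau)$ with $\tau\ceq m$ and $\TE\ceq\{(e_i,i)\mid i\in\set{m}\}$, and set $\ell\ceq 0$. Then each layer~$G_i$ consists of the single edge~$e_i$, so every layer has only one edge, and the underlying graph~$\UG(\TG)$ is exactly~$G$. This is clearly computable in time polynomial in~$|V|+|E|$. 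If $E=\emptyset$, then $(G,k)$ is a trivial \yes-instance of~\vc{} (the empty set is a cover), which I would simply map to any fixed trivial \yes-instance of~\MSVC{} with one one-edge layer and $\ell=0$.

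For correctness, the key observation is that $\ell=0$ means $\sydic{S_i}{S_{i+1}}\leq 0$ for every~$i\in\set{\tau-1}$, hence $S_i=S_{i+1}$; so any solution $\calS=(S_1,\ldots,S_\tau)$ of~$(\TG,k,0)$ must be constant, and I write~$S$ for its common value. For the forward direction, if $S\subseteq V$ is a vertex cover of~$G$ with $|S|\leq k$, then $S$ covers every~$e_i$, so the constant sequence $(S,\ldots,S)$ meets both conditions of~\MSVC{} (all symmetric differences are empty, hence $\leq\ell=0$); thus $(\TG,k,0)$ is a \yes-instance. For the backward direction, if $(\TG,k,0)$ is a \yes-instance, I take a solution, which by the above is constant with common set~$S$ of size at most~$k$; since $S$ is a vertex cover of each layer~$G_i$, it covers every edge~$e_i$ of~$G$, so $S$ is a size-at-most-$k$ vertex cover of~$G$ and $(G,k)$ is a \yes-instance.

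There is essentially no technical obstacle here: the entire argument hinges on the simple fact that $\ell=0$ forbids any change between consecutive covers, which collapses every solution to a single static vertex cover of~$\UG(\TG)=G$. The only points requiring minor care are the degenerate case $E=\emptyset$ (handled separately above) and verifying that each single-edge layer is indeed covered by the common set~$S$, which is immediate because~$S$ covers all of~$E$.
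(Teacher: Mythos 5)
Your proposal is correct and follows essentially the same route as the paper: order the edges, put edge~$e_i$ alone in layer~$i$ with $\tau=m$ and $\ell=0$, and observe that $\ell=0$ collapses any solution to a single static vertex cover of~$G$ (the paper phrases this as $|\bigcup_i S_i|\leq k$ rather than explicitly noting $S_i=S_{i+1}$, but the argument is the same). Your separate handling of the degenerate case $E=\emptyset$ is a small additional piece of care the paper omits, but it does not change the substance.
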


\begin{proof}
 Let the edges~$E=\{e_1,\ldots,e_m\}$ of~$G$ be ordered in an arbitrary way.
 Set~$\tau=m$ and $\ell=0$.
 Set~$G_{i}=(V,\{e_i\})$ for each~$i\in\set{\tau}$.
 We claim that $(G=(V,E),k)$ is a \yes-instance of \vc{} if and only if $(\TG,k,\ell)$ is a \yes-instance of \MSVC{}.
 
 \RD{}
 Let~$S$ be a vertex cover of~$G$ of size at most~$k$.
 Set~$S_i\ceq S$ for all~$i\in\set{\tau}$.
 Clearly,~$S_i$ is a vertex cover of~$G_i$ for all~$i\in\set{\tau}$ of size at most~$k$.
 Moreover, 
 by construction,
 $\sydic{S_i}{S_{i+1}}\leq 1$ for all~$i\in\set{\tau-1}$.
 Hence, 
 $(S_1,\ldots,S_{\tau})$~forms a solution to~$(\TG,k,\ell)$.
 
 \LD{}
 Let  $\calS=(S_1,\ldots,S_{\tau})$ be a solution to~$(\TG,k,\ell)$.
 Observe that~$|\bigcup_i S_i|\leq k$.
 It follows that there are at most~$k$ vertices covering all edges of the layers~$G_{i}$, 
 that is,
 $E=\bigcup_{i=1}^\tau E(G_i)$,
 and hence they cover all edges of~$G$.
\lqed
\end{proof}

\begin{observation}%
 \label[observation]{obs:elleqzero}
 There is a polynomial-time algorithm that maps any instance~$(\TG,k,\ell)$ of \MSVC{} with~$\ell=0$ 
 to an equivalent instance~$(G,k)$ of~\vc{}.
\end{observation}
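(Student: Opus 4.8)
The plan is to map $(\TG,k,\ell)$ with $\ell=0$ to the instance $(\UG(\TG),k)$ of \vc{}, where $\UG(\TG)$ is the underlying graph of~$\TG$; the parameter~$k$ is carried over unchanged. The construction is clearly computable in polynomial time, since forming $\UG(\TG)$ only requires taking the union $\bigcup_{i=1}^{\tau} E(G_i)$ of all layer edge sets over the common vertex set~$V$. The guiding observation is that setting~$\ell=0$ collapses the whole multistage structure: any solution~$\calS=(S_1,\ldots,S_\tau)$ must satisfy $\sydic{S_i}{S_{i+1}}\leq 0$ for all~$i\in\set{\tau-1}$, which forces $S_i=S_{i+1}$, and hence $S_1=\cdots=S_\tau\ceq S$. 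So a solution is nothing more than a single vertex set~$S$ of size at most~$k$ that simultaneously covers every layer.

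For the correctness, I would argue the two directions separately. \RD{} Suppose $(\TG,k,0)$ is a \yes-instance with solution~$\calS$; by the observation above all~$S_i$ coincide with a common set~$S$ of size at most~$k$. Since each~$S_i=S$ covers~$G_i$, the set~$S$ covers every edge of every layer, i.e.\ every edge of~$\UG(\TG)$, so~$S$ is a size-at-most-$k$ vertex cover of~$\UG(\TG)$. \LD{} Conversely, given a vertex cover~$S$ of~$\UG(\TG)$ with~$|S|\leq k$, set $S_i\ceq S$ for all~$i\in\set{\tau}$. Because $E(G_i)\subseteq E(\UG(\TG))$, each~$S_i$ is a vertex cover of~$G_i$ of size at most~$k$, and $\sydic{S_i}{S_{i+1}}=0$ for all~$i$, so~$\calS=(S_1,\ldots,S_\tau)$ is a valid solution.

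There is no real obstacle here: the entire content is the reduction from the $\ell=0$ constraint to the statement that all layer covers must be one and the same set, after which covering all layers is equivalent to covering the underlying graph. The only point requiring a word of care is the inclusion $E(G_i)\subseteq E(\UG(\TG))$ in the backward direction, which is immediate from the definition of the underlying graph given in the preliminaries. Together with~\Cref{prop:oneedgelayers-ell-0}, this establishes that the~$\ell=0$ case of \MSVC{} is equivalent to \vc{} under polynomial-time many-one reductions in both directions.
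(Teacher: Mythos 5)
Your proposal is correct and follows essentially the same route as the paper: both map $(\TG,k,0)$ to $(\UG(\TG),k)$, observe that $\ell=0$ forces $S_1=\cdots=S_\tau$, and verify the two directions by noting that a common cover of all layers is exactly a cover of the underlying graph. No gaps.
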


\begin{proof}
  Now let~$(\TG=(V,\TE,\tau),k,0)$ be an arbitrary instance of~\MSVC{}.
  Construct the instance~$(\UG,k)$ of \vc{}.
  We claim that $(\TG,k,0)$ is a \yes-instance if and only if $(\UG,k)$ is a \yes-instance.
  
  \LD{}
  Let~$S\subseteq V$ be a vertex cover of size at most~$k$.
  Since~$S$ is a vertex cover for~$\UG$,~$S$ covers each layer of~$\TG$.
  Hence,~$S_i\ceq S$ for all~$i\in\set{\tau}$ forms a solution to  $(\TG,k,0)$.
  
  \RD{}
  Let~$(S_1,\ldots,S_\tau)$ be a solution to $(\TG,k,0)$.
  Clearly, since~$\ell=0$, we have that~$S_i=S_j$ for all~$i,j\in\set{\tau}$.
  It is not difficult to see that~$S\ceq  S_1$ is a vertex cover for~$\UG$, and hence the claim follows.
\lqed
\end{proof}

\noindent
Finally,
the special case of~\MSVC{} with~$\ell\geq 2k$ 
(that is, where vertex covers of any two consecutive layers can be even disjoint)
is Turing-reducible to~\vc{}.

\begin{observation}%
 \label[observation]{obs:turedu}
 Any instance~$(\TG,k,\ell)$ of~\MSVC{} with~$\ell\geq 2k$ and~$\TG=(G_1,\ldots,G_\tau)$ can be decided by deciding each instance of the set~$\{(G_i,k)\mid 1\leq i\leq \tau\}$ of \vc{}-instances. 
\end{observation}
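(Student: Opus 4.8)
The plan is to observe that the hypothesis $\ell\geq 2k$ makes the consecutive-difference constraint (condition (ii) in the definition of \MSVC{}) vacuous, so that the temporal coupling between layers disappears entirely. Concretely, I would argue that the instance~$(\TG,k,\ell)$ is a \yes-instance if and only if \emph{every} layer~$G_i$ individually admits a vertex cover of size at most~$k$, i.e.\ if and only if each~$(G_i,k)$ is a \yes-instance of \vc{}. This immediately yields the desired Turing reduction: query each of the~$\tau$ instances~$(G_i,k)$ and return \yes{} precisely when all queries return \yes{}.

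First I would handle the ``if'' direction. Suppose each~$(G_i,k)$ is a \yes-instance, and for each~$i\in\set{\tau}$ let~$S_i\subseteq V$ be a vertex cover of~$G_i$ with~$|S_i|\leq k$. I claim that~$\calS=(S_1,\ldots,S_\tau)$ is a solution. Condition (i) holds by choice of the~$S_i$. For condition (ii), the crucial estimate is that for any two sets of size at most~$k$ we have
\[
  \sydic{S_i}{S_{i+1}}
  = |S_i\setminus S_{i+1}| + |S_{i+1}\setminus S_i|
  \leq |S_i| + |S_{i+1}|
  \leq 2k \leq \ell,
\]
so every consecutive pair automatically satisfies the difference bound. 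Hence~$\calS$ is a valid solution and~$(\TG,k,\ell)$ is a \yes-instance.

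For the converse, suppose~$(\TG,k,\ell)$ is a \yes-instance with solution~$\calS=(S_1,\ldots,S_\tau)$. Then by condition (i) each~$S_i$ is a vertex cover of~$G_i$ with~$|S_i|\leq k$, so each~$(G_i,k)$ is trivially a \yes-instance of \vc{}. Combining both directions gives the stated equivalence, and therefore the algorithm that decides each~$(G_i,k)$ separately and takes the logical conjunction of the answers correctly decides~$(\TG,k,\ell)$.

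There is no real obstacle here: the entire content is the one-line symmetric-difference bound~$\sydic{S_i}{S_{i+1}}\leq|S_i|+|S_{i+1}|\leq 2k\leq\ell$, which decouples the layers. The only point worth stating carefully is that the reduction is Turing (not many-one): we make one \vc{} query per layer and aggregate, which is exactly what the statement asserts.
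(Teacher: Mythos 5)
Your proof is correct and takes essentially the same approach as the paper: the key observation in both is that any two size-at-most-$k$ vertex covers have symmetric difference at most $2k\leq\ell$, so the layers decouple and each $(G_i,k)$ can be decided independently. You merely spell out both directions more explicitly than the paper's one-line argument.
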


\begin{proof}
 For each of the layers~$G_i$, 
 $i\in\set{\tau}$,
 we can construct an instance of \VC{} of the form~$(G_i,k)$.
 We can solve each instance independently, 
 since the symmetric difference of any two size-at-most-$k$ solutions is at most~$2k\leq \ell$.
\lqed
\end{proof}

\section{Hardness for Restricted Input Instances}
\label{sec:hardness}

\msvc{} is \NP-hard as it generalizes \vc{} ($\tau=1$). 
In this section we prove that \MSVC{} remains \NP-hard on 
inputs with only two layers,
one consisting of a path and the other consisting of a tree, 
and on inputs where every layer consists only of one edge.

\begin{theorem}
 \label{thm:npahrdcases}
 \msvc{} is \NP-hard even if 
 \begin{compactenum}[(i)]
  \item~$\tau=2$, $\ell=0$, and the first layer is a path and the second layer is a tree, or
  \item~every layer contains only one edge and~$\ell \leq 1$.
 \end{compactenum}
\end{theorem}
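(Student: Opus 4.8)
The plan is to give two separate reductions, one for each case; in both, the sub-case $\ell=0$ is already handled by the equivalence with \vc{} established in \cref{prop:oneedgelayers-ell-0,obs:elleqzero}, so the substance lies in the path-plus-tree structure of~(i) and in the genuinely dynamic regime $\ell=1$ of~(ii).

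For~(i) I would first invoke \cref{obs:elleqzero}: with $\ell=0$ and $\tau=2$ a solution is a single set $S$ that simultaneously covers layer~$G_1$ and layer~$G_2$, that is, a vertex cover of the underlying graph $\UG=G_1\cup G_2$. Hence it suffices to prove that \vc{} is \NP-hard on graphs whose edge set decomposes into a single path and a single tree. I would reduce from \vc{} on cubic graphs, which have arboricity two and therefore split into two edge-disjoint forests. The plan is to route one forest into the path layer and the other into the tree layer, and then to stitch the forest components together: the tree layer is completed to one spanning tree, while the path layer---whose pieces must have maximum degree two---is first turned into vertex-disjoint paths (breaking each cycle of a $2$-factor) and then concatenated into a single Hamiltonian path. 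All stitching is done with fresh connector vertices realised as short subdivided paths, each contributing a fixed, precomputable amount to the vertex-cover number, so that $\UG$ has a size-$k'$ cover exactly when the cubic graph has a size-$k$ cover for the correspondingly shifted budget~$k'$.

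For~(ii) the interesting regime is $\ell=1$, and here I would exploit the swap dynamics forced by one-edge layers. The key structural fact is that two consecutive covers of size~$k$ whose symmetric difference is at most~$1$ must be equal (the symmetric difference of two equal-size sets is even), so the cover can only change by temporarily dropping to size~$k-1$; moreover, presenting \emph{disjoint} edges within a short window of layers forces the current cover to contain endpoints of several recent edges simultaneously. I would use this window mechanism to pin down the cover's behaviour and reduce from a sequencing-flavoured \NP-hard problem over an unbounded time-horizon~$\tau$, scheduling the single edges so that a slowly changing sequence of covers exists if and only if the source instance is a \yes-instance. Note that a naive ``present every edge of a graph'' reduction fails: for $\ell\geq 1$ the resulting instance is a relaxation of \vc{} and hence too easily satisfiable, so the hardness must genuinely come from the ordering.

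The main obstacle in~(i) is the simultaneous path/tree requirement: turning an arbitrary arboricity-two decomposition into one genuine path (which caps degrees at two) and one tree while keeping the vertex-cover number under exact control is the delicate part, and it is what forces the reduction from the already-sparse cubic case together with the subdivided connector gadgets. In~(ii) the main obstacle is to encode a global constraint using only \emph{local} moves---at most one edge per layer and at most one vertex changed per step---so the entire difficulty is designing gadgets and an edge schedule in which the limited per-step budget, rather than any single layer, carries the combinatorics.
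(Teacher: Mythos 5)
Your plan for part~(ii) has the decisive gap. You correctly observe that for $\ell=1$ the naive ``one edge per layer'' reduction becomes a relaxation of \vc{} and hence too easy to satisfy, but you then hand the problem off to an unspecified ``sequencing-flavoured'' source problem with an unspecified edge schedule --- nothing is actually constructed, and this is exactly where the work lies. The paper's fix keeps the naive reduction and removes the slack you identified: interleave every real layer $\{e_i\}$ with a layer consisting of a single fresh edge $\{w_i,w_{i+\tau}\}$ on brand-new vertices, and set $k'=k+1$. The $\tau$ dummy edges are pairwise disjoint, so the solution must acquire $\tau$ distinct $W$-vertices over time; since each of the $2\tau-1$ steps changes the cover by at most one vertex and all covers have size between $1$ and $k+1$, one gets $|\bigcup_i S_i|\leq k+\tau$ while $|W\cap\bigcup_i S_i|\geq\tau$, so at most $k$ original vertices are ever used and these cover all of $E$. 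The hardness thus still comes from \vc{} itself, not from any ordering problem --- the dummy edges simply consume the entire change budget.

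Part~(i) follows the same architecture as the paper (split a cubic graph's edge set into a Hamiltonian-path part and a near-matching part, then stitch), but two of your steps would fail as written. First, path connectors ``contributing a fixed, precomputable amount to the vertex-cover number'' do not exist: for a connector $u\text{--}a\text{--}b\text{--}v$, a minimum cover needs one internal vertex if $u$ or $v$ is already covered and two otherwise, so the budget shift depends on the endpoints and the threshold $k'$ cannot be set exactly. The paper stitches instead with a single vertex $z$ (plus a pendant partner $z'$, with $\{z,z'\}$ present in both layers), so that $z$ may be assumed to lie in any minimal cover and absorbs \emph{all} connector edges at a fixed cost of $+1$. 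Second, your matching-plus-2-factor decomposition needs Petersen's theorem and hence bridgeless cubic instances (a hypothesis you never secure), and breaking the 2-factor's cycles deletes edges of $G$ that must be re-inserted into the other layer. The paper sidesteps both issues by reducing from \vc{} on cubic Hamiltonian graphs \emph{with the Hamiltonian cycle given in the input}: removing one cycle edge immediately yields a single Hamiltonian path for layer~1, and layer~2 (the remaining perfect matching plus the removed edge) is a disjoint union of short paths that $z$ ties into a tree. Your overall decomposition idea is sound, but both the stitching gadget and the source problem need to be replaced along these lines for the reduction to go through.
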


\begin{remark}
\cref{thm:npahrdcases}(i) is tight regarding~$\tau$ since \vc{} 
(i.e.,\ \MSVC{} with $\tau=1$) on trees is solvable in linear time.
\cref{thm:npahrdcases}(ii) is tight regarding~$\ell$, because in the case of $\ell>1$  \cref{obs:turedu} is applicable.
\end{remark}
\vc{} remains \NP-complete 
on cubic Hamiltonian graphs when a
Hamiltonian cycle is additionally given in the input
\cite{FleischnerSS10}:%
\footnote{A graph is cubic if each vertex is of degree exactly three; 
A graph is Hamiltonian if it contains a subgraph being a Hamiltonian cycle, 
that is, 
a cycle that visits each vertex in the graph exactly once.}

\problemdef{Hamiltonian Cubic Vertex Cover (HCVC)}
{An undirected, cubic, Hamiltonian graph~$G=(V,E)$,
an integer~$k\in\N$, 
and a Hamiltonian Cycle~$C=(V,E')$ of~$G$.}
{Is there a set~$S\subseteq V$ such that~$S$ is a size-at-most-$k$ vertex cover for~$G$?}
To prove \cref{thm:npahrdcases}(i), 
we give a polynomial-time many-one reduction from HCVC to \MSVC{} with two layers, one being a path, the other being a tree.

\begin{proposition}%
 \label{prop:tau2treelayers}
 There is a polynomial-time algorithm that 
 maps any instance~$(G=(V,E),k,C)$ of \prob{HCVC} to 
 an equivalent instance~$(\TG,k',\ell')$ of~\MSVC{} 
 with~$\tau=2$ and the first layer~$G_1$ being a path and second layer~$G_2$ being a tree.
\end{proposition}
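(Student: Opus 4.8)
The plan is to reduce from \prob{HCVC}, exploiting the fact that a Hamiltonian cycle is given explicitly in the input (which is exactly why this restricted version of \vc{} is the right starting point). Write the cycle as $C = v_1 v_2 \cdots v_n v_1$. Since $G$ is cubic, every vertex is incident to exactly two cycle edges and therefore to exactly one edge outside $C$, so $M \ceq E \setminus E(C)$ is a perfect matching of $G$. I set $\tau = 2$, $k' \ceq k$, and $\ell' \ceq 0$. The point of choosing $\ell' = 0$ is that, by \cref{obs:elleqzero}, the resulting \MSVC{} instance is equivalent to \vc{} on its underlying graph $\UG$; hence it suffices to build two layers over~$V$ whose union is exactly~$G$, one of them a path and the other a tree.

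For the first layer I take $G_1$ to be the Hamiltonian path obtained from $C$ by deleting the single edge $\{v_1,v_n\}$; this is a genuine path spanning all of $V$. For the second layer I need a spanning tree of $G$ that contains the deleted edge $\{v_1,v_n\}$ together with the entire matching $M$, because these are precisely the $G$-edges missing from $G_1$ and they must reappear in $G_2$ so that $\UG = G$. The key observation is that $M \cup \{\{v_1,v_n\}\}$ is acyclic: $M$ is a matching, and adding $\{v_1,v_n\}$ cannot close a cycle since $v_1$ and $v_n$ lie in distinct matching edges (as $\{v_1,v_n\} \in E(C)$ and hence $\{v_1,v_n\} \notin M$). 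Because $G$ is connected (being Hamiltonian), this forest extends, greedily and in polynomial time, to a spanning tree $T$ of $G$; I set $G_2 \ceq T$. Every edge of $T$ outside $M \cup \{\{v_1,v_n\}\}$ is an edge of $C$, hence already present in $G_1$, so including such edges in $G_2$ does not enlarge the underlying graph beyond $G$.

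It then remains to verify two routine facts. First, $\UG(\TG) = G$: layer $G_1$ contributes all cycle edges except $\{v_1,v_n\}$, layer $G_2$ contributes $\{v_1,v_n\}$ and all of $M$, and every layer edge lies in $E(G)$, so the union is exactly $E(C) \cup M = E(G)$. Second, both layers are over the common vertex set $V$ and are, respectively, a spanning path and a spanning tree, as required. The equivalence of $(G,k,C)$ and $(\TG,k',\ell')$ is then immediate from \cref{obs:elleqzero}, which identifies the $\ell=0$ instance with \vc{} on $\UG = G$, while \prob{HCVC} asks precisely whether $G$ has a vertex cover of size at most $k$. I expect the only genuine work to be the spanning-tree-extension step — checking that the prescribed forest is acyclic and that completing it introduces no edges outside $G$ — with everything else following directly from the construction and from invoking \cref{obs:elleqzero}.
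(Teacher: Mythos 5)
Your proof is correct, and it follows the paper's skeleton (start from \prob{HCVC}, delete one Hamiltonian-cycle edge to obtain the path layer, set $\ell'=0$) but realizes the tree layer in a genuinely different way. The paper takes $G_2=(V,E\setminus E(P))$, which is a disjoint union of matching edges plus one $P_4$, and then glues these components together by adding \emph{two new vertices} $z,z'$, connecting $z$ to $z'$ and to one vertex per component; this forces $k'\ceq k+1$ and requires an extra argument that a minimal solution may be assumed to contain $z$ (via the edge $\{z,z'\}$) so that the gadget edges are covered for free. You instead observe that $M\cup\{\{v_1,v_n\}\}$ is an acyclic subgraph of the connected graph $G$ (only $v_1$ and $v_n$ have degree two in it, and they are not matched to each other), extend it to a spanning tree $T\subseteq E(G)$ using only cycle edges already present in $G_1$, and take $G_2\ceq T$. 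This keeps $\UG(\TG)=G$ exactly, lets you set $k'\ceq k$ with no new vertices, and collapses the whole equivalence to \cref{obs:elleqzero}; the price is the (routine) acyclicity and spanning-tree-extension argument. Both constructions are valid polynomial-time reductions establishing the proposition; yours is arguably the cleaner correctness proof, while the paper's yields a second layer with a very explicit spider-like structure, which is immaterial to the statement as claimed.
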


\begin{proof}
  Let~$e\in E(C)$ be some edge of~$C$, 
  and let~$P=C-e$ be the Hamiltonian path obtained from~$C$ when removing~$e$.
  Let~$E_1\ceq E(P)$, 
  and~$E_2\ceq E\setminus E(P)$.
  Set initially~$G_1=(V,E_1)$ and~$G_2=(V,E_2)$.
  Note that~$G_1$ is a path.
  Moreover,
  observe that~$G_2$ is the disjoint union of~$|V|/2-2$ paths of length one 
  and one path of length three:
  the graph~$G-E(C)$ is a disjoint union of~$|V|/2$ paths of length one,
  since each vertex is of degree three in~$G$,
  is adjacent to two vertices in~$C$,
  and thus has degree one in~$G-E(C)$;
  Since~$G-E(C)=G_2-e$,
  edge~$e$ connects two paths of length one to one path of length three in~$G_2$.
  Add two special vertices~$z,z'$ to~$V$.
  In~$G_1$,
  connect~$z$ with~$z'$ and with one endpoint of~$P$.
  In~$G_2$,
  connect~$z$ with~$z'$ and with exactly one vertex of each connected component.
  Set~$k'=k+1$ and~$\ell'=0$.
  We claim that $(G=(V,E),k,C)$ is a \yes-instance if and only if $(\TG,k',\ell')$ is a \yes-instance.

  \RD{}
  Let~$S'$ be a vertex cover of~$G$ of size at most~$k$.
  We claim that~$S'\ceq S\cup\{z\}$ is a vertex cover for both~$G_1$ and~$G_2$.
  Observe that~$G_1[E_1]$ and~$G_2[E_2]$ are subgraphs of~$G$, and hence all edges are covered by~$S'$.
  Moreover, 
  all edges in~$G_i-E_i$, $i\in\{1,2\}$, are incident with~$z$ and hence covered by~$S'$.
  
  \LD{}
  Let~$(S_1,S_2)$ be a minimal solution to $(\TG,k',\ell')$ with~$S'\ceq S_1=S_2$ and~$|S'|\leq k'$.
  We can assume that~$z\in S'$ since the edge~$\{z,z'\}$ is present in both~$G_1$ and~$G_2$, and exchanging~$z$ in~$z'$ does not cover less edges.
  Moreover, we can assume that not both~$z$ and~$z'$ are in~$S'$ due to the minimality of~$S'$.
  Let~$S\ceq S'\setminus\{z\}$.
  Observe that~$S$ covers all edges in~$E_1\cup E_2$ and,
  hence, 
  $S$~forms a vertex cover of~$G$ of size at most~$k=k'-1$.
\lqed
\end{proof}

Note that \cref{thm:npahrdcases}(ii) for $\ell=0$ is already shown by \cref{prop:oneedgelayers-ell-0}.
In order to prove~\cref{thm:npahrdcases}(ii) for $\ell=1$, 
we adjust the polynomial-time many-one reduction behind \cref{prop:oneedgelayers-ell-0}.

\begin{proposition}%
 \label{prop:oneedgelayers}
 There is a polynomial-time algorithm that maps 
 any instance~$(G=(V,E),k)$ of~\vc{} to 
 an equivalent instance~$(\TG,k',\ell')$ of~\MSVC{} 
 where~$\ell'=1$ and every layer~$G_i$ contains only one edge.
\end{proposition}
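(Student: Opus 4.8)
My plan is to reduce from \vc{} and to re-use the single-edge layering of \cref{prop:oneedgelayers-ell-0}; the only new difficulty is that the slack $\ell'=1$ now permits the vertex covers to \emph{drift} over time instead of being frozen to one common set as they are under $\ell=0$. Concretely, given $(G=(V,E),k)$ I would again create essentially one layer per edge, order the edges as $e_1,\dots,e_m$, and make each layer $G_i$ consist of the single edge $e_i$, interspersed with auxiliary one-edge ``forcing'' layers built from fresh degree-one vertices (described below); I would set $k'\ceq k+1$ to pay for the forcing gadget. The design goal is that, \emph{despite} $\ell'=1$, every solution is forced to agree on one and the same set $S\subseteq V$ in all edge-layers, so that $S$ is a vertex cover of $\UG(\TG)=G$ of size at most $k$, exactly as in the $\ell=0$ case.

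The key structural tool I would exploit is the following consequence of $\ell'=1$: since $\sydic{S_i}{S_{i+1}}\le 1$, two consecutive covers are related by \emph{either} adding \emph{or} removing a single vertex, but never by exchanging one vertex for another, because an exchange already has symmetric difference $2$. In particular, whenever $|S_i|=|S_{i+1}|$ the two sets must be \emph{identical}: the symmetric difference of two equally-sized sets is even, hence, being at most $1$, it is $0$. Combined with \cref{obs:largesolutions}, which lets me assume $|S_1|=k'$ and $k'-1\le |S_i|\le k'$ throughout, this shows that the only way a solution can change its cover is to pass through a ``valley'': one first \emph{removes} a vertex $u$ (dropping to size $k'-1$) and later \emph{adds} a different vertex $w$ (returning to size $k'$), thereby realising the forbidden exchange over two steps.

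The forward direction is then immediate: from a vertex cover $S$ of $G$ with $|S|\le k$ I take a constant sequence, adding only the single anchor vertex needed for the forcing layers, so that each $S_i$ has size at most $k'$, covers every single-edge layer, and satisfies $\sydic{S_i}{S_{i+1}}=0\le 1$. The backward direction is where the real work lies, and it is the step I expect to be the main obstacle: I must rule out that a genuinely drifting sequence covers all of $e_1,\dots,e_m$ while every individual $S_i$ stays within budget — as it otherwise could, since a single-edge layer is always coverable by one endpoint and therefore \emph{never locally forbids} a size-$(k'-1)$ valley. My plan to defeat the valleys is to place, around each edge-layer, blocks of auxiliary layers with edges $\{a,x_1\},\{a,x_2\},\dots$ whose $x_t$ are all fresh degree-one vertices: by the no-exchange property together with the budget $k'$, such a block pins the anchor $a$ into the cover for its whole duration, since one can neither cycle through the $x_t$ (these would be exchanges) nor accumulate them (this would exceed $k'$).

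With the anchors pinning the cover precisely when the corresponding real edge must be tested, I would argue that any increase introduced at a valley must later be paid back, and that the anchoring blocks are positioned so that the bottom layer of every admissible valley is itself an anchored layer and hence cannot be covered after the removal; this is what forbids the exchange and forces the edge-layer covers to coincide on $V$ with a single set $S$ of size at most $k$. Verifying exactly that the anchor placement leaves \emph{no} admissible valley — that is, that the gadget truly simulates the $\ell=0$ rigidity of \cref{prop:oneedgelayers-ell-0} rather than merely making drift expensive — is the technical heart of the argument and the part I would spend the most care on; the remainder is the routine bookkeeping of sizes and symmetric differences already used in \cref{obs:largesolutions,obs:turedu}.
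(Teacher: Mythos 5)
Your high-level framing (interleave the real single-edge layers with auxiliary single-edge layers on fresh vertices, set $k'=k+1$, and use the fact that $\sydic{S_i}{S_{i+1}}\le 1$ forbids a one-step exchange) matches the paper's reduction, and your forward direction is fine. But the backward direction — which you yourself flag as ``the technical heart'' — is not actually carried out, and the gadget you propose for it appears to be designed backwards. Your auxiliary blocks $\{a,x_1\},\{a,x_2\},\dots$ share a common anchor $a$; once $a$ sits in the cover, \emph{every} edge of that block is covered for free, so the block imposes no further constraint and consumes none of the $\ell'=1$ slack. In particular, pinning $a$ does not forbid a valley in the $V$-part: one can remove some $u\in V$ at a block layer and add a different $w\in V$ several block layers later, with all intermediate layers still covered by the pinned anchor. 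Your claim that ``the bottom layer of every admissible valley is itself an anchored layer and hence cannot be covered after the removal'' is therefore false — the anchored layer \emph{is} covered after the removal, by the anchor. The whole point of the paper's gadget is the opposite: each auxiliary layer $G_{2i}$ carries an edge $\{w_i,w_{i+\tau}\}$ \emph{both} of whose endpoints are fresh and never reused, so that covering it genuinely forces the addition of a useless vertex and thereby eats a change step.

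A second, related gap is that you aim to prove full rigidity (all edge-layer covers coincide on $V$), which is stronger than needed and, with $\ell'=1$ and $k'=k+1$, probably not achievable. The paper instead closes the backward direction with a global counting argument on the \emph{union} $\bigcup_i S_i$: since each step adds at most one vertex, the union is bounded by $|S_1|$ plus the number of addition steps; the size cap $|S_i|\le k+1$ forces roughly half of all steps to be removals; and the auxiliary layers force one pairwise-distinct fresh $W$-vertex into the union per auxiliary layer. Together these yield $|V\cap\bigcup_i S_i|\le k$, and this union covers all of $E$, which is all that is required. Your proposal contains no substitute for this count, and without it (or a correct rigidity argument with a corrected gadget) the reduction's correctness is not established.
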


\begin{proof}
 Let the edges~$E=\{e_1,\ldots,e_m\}$ of~$G$ be arbitrarily ordered.
 Set~$\tau=2m$.
 Set~$V'=V\cup W$, where~$W=\{w_1,\ldots,w_{2\tau}\}$.
 Set~$G_{2i-1}=(V',\{e_i\})$ and~$G_{2i}=(V',\{w_i,w_{i+\tau}\})$ for each~$i\in\set{\tau}$.
 Set~$k'=k+1$ and~$\ell=1$.
 We claim that $(G=(V,E),k)$ is a \yes-instance of \vc{} if and only if $(\TG,k',\ell')$  is a \yes-instance of \MSVC{}.
 
 \RD{}
 Let~$S$ be a vertex cover of~$G$ of size at most~$k$.
 Set~$S_{2i-1}\ceq S$, and~$S_{2i}\ceq S\cup\{w_i\}$ for all~$i\in\set{\tau}$.
 Clearly,~$S_i$ is a vertex cover of~$G_i$ for all~$i\in\set{2\tau}$ of size at most~$k'=k+1$.
 Moreover, 
 by construction,
 $\sydic{S_i}{S_{i+1}}\leq 1$ for all~$i\in\set{2\tau-1}$.
 Hence, 
 $(S_1,\ldots,S_{2\tau})$~forms a solution to~$(\TG,k',\ell')$.
 
 \LD{}
 Let  $\calS=(S_1,\ldots,S_{2\tau})$ be a solution to~$(\TG,k',\ell')$.
 Observe that~$|\bigcup_i S_i|\leq k+\tau$.
 We know that~$|W\cap \bigcup_i S_i|\geq \tau$.
 It follows that there are at most~$k$ vertices covering all edges of the layers~$G_{2i-1}$, 
 that is, 
 $E=\bigcup_{i=1}^\tau E(G_{2i-1})$,
 and,
 hence,
 covering all edges of~$G$.
\lqed
\end{proof}

\cref{thm:npahrdcases} now follows from \cref{prop:tau2treelayers,prop:oneedgelayers}.

\section{Parameter Vertex Cover Size}
\label{sec:paramvc}

In this section, we study the parameter size~$k$ of the vertex cover of each layer for \MSVC{}.
\vc{} and \prob{Vertex Cover Reconfiguration}~\cite{mouawad2017parameterized} when parameterized by the vertex cover size are fixed-parameter tractable. 
We prove that this is no longer true for \MSVC{} (unless~$\FPT{}=\W{1}$).
\begin{theorem}
  \label{thm:xpwhardness}
 \msvc{} parameterized by~$k$ is in \XP{} and \W{1}-hard.
\end{theorem}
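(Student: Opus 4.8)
The plan is to show membership in \XP{} by a straightforward dynamic-programming argument over the layers. For each layer~$G_i$, by \cref{obs:largesolutions} we may restrict attention to vertex covers of size at most~$k$, and there are at most~$\sum_{j=0}^{k}\binom{|V|}{j}=|V|^{O(k)}$ such sets. The idea is to process the layers left to right, maintaining for each candidate vertex cover~$S$ of layer~$G_i$ a Boolean flag indicating whether there exists a valid partial solution~$(S_1,\ldots,S_i)$ ending with~$S_i=S$. The transition checks, for each pair of a reachable cover~$S$ of~$G_i$ and each size-at-most-$k$ cover~$S'$ of~$G_{i+1}$, whether $\sydic{S}{S'}\leq\ell$. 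Each layer has at most~$|V|^{O(k)}$ candidate covers, and we can enumerate and vertex-cover-test them in polynomial time for fixed~$k$; the whole table has size~$|V|^{O(k)}$ and is filled in~$\tau\cdot|V|^{O(k)}$ time, which is \XP{} in~$k$.

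\textbf{W[1]-hardness.}
The harder and more interesting direction is \W{1}-hardness, and the plan is a parameterized reduction from \mcc{} (\MCC{}), which is \W{1}-hard parameterized by the number of color classes. Given a \MCC{} instance with color classes~$V_1,\ldots,V_c$ and the goal of finding a clique with one vertex per class, I would build a temporal graph in which choosing the vertex cover of a layer encodes selecting a vertex from a color class, and the symmetric-difference budget~$\ell<2k$ forces consecutive choices to stay consistent. The natural design is to let~$k$ be a function of~$c$ (so that the parameter of the target problem is bounded by a function of the source parameter) and to use separate groups of layers: some layers act as \emph{selection gadgets} whose minimum vertex covers are in bijection with the vertices of a color class, and other layers act as \emph{verification gadgets} whose edges can be covered within budget only if the selected endpoints are adjacent in the \MCC{} graph. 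The constraint~$\sydic{S_i}{S_{i+1}}\leq\ell$ with~$\ell<2k$ is the key enforcement mechanism: it prevents the solution from ``forgetting'' a selection between layers, so that a vertex chosen to represent a color class must persist across the layers that test its incidences.

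\textbf{Main obstacle.}
The main difficulty I anticipate is engineering the gadgets so that the symmetric-difference budget~$\ell$ simultaneously (a)~is large enough to permit the legitimate transition from one selection to the next and (b)~is small enough to forbid illegitimate transitions that would decouple the encoded choices. Because \cref{obs:turedu} shows the problem trivializes once~$\ell\geq 2k$, the entire hardness must live in the narrow regime~$\ell<2k$, so the vertex covers of consecutive layers must overlap substantially; I would need ``anchor'' or padding vertices that are forced into every cover to tighten the effective budget, ensuring that only a small number of vertices are free to change between layers. Getting the counting exactly right—so that a valid global solution exists precisely when the \MCC{} instance has a multicolored clique—is where the technical care is needed, and the authors themselves flag this as their most technical result. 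A secondary check is that all parameters (~$k$ in particular, and ideally not~$\tau$) remain bounded by a function of~$c$ so that the reduction is a genuine parameterized reduction with respect to~$k$ alone.
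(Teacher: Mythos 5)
Your \XP{} argument is sound and is essentially the paper's: the paper formalizes the same layer-by-layer dynamic program as a search for an $s$-$t$ path in a ``configuration graph'' whose vertices are the size-$(k-1)$ and size-$k$ vertex covers of each layer and whose arcs encode $\sydic{S}{S'}\leq\ell$; your reachability table is exactly that object, and \cref{obs:largesolutions} justifies restricting to covers of size at least $k-1$ just as you invoke it. No issue there.

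The \W{1}-hardness half, however, is a plan rather than a proof, and what is missing is precisely where all the difficulty lies. Two concrete gaps. First, a ``selection gadget whose minimum vertex covers are in bijection with the vertices of a color class'' is not obviously realizable with $k$ bounded by a function of the parameter: a vertex cover of size $f(c)$ cannot naturally range over the $|V_i|$ choices in a color class (encoding the choice as, say, the complement of a minimum cover of a clique on $V_i$ makes $k$ depend on $|V_i|$, which breaks the parameterized reduction). The paper sidesteps this entirely by reducing from \prob{Clique} and putting the solution clique itself into the cover: $k'=2\binom{k}{2}+k+1$ budgets for the $k$ clique vertices and the $\binom{k}{2}$ edge-vertices, plus $\binom{k}{2}$ auxiliary vertices $U^t$ and one ``center'' vertex. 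Second, the enforcement mechanism in the paper is not persistence of a selection but an amortized counting argument: with $\ell=2$ a (normalized) solution can swap only one vertex per step; a chain of star centers forces one swap per step just to keep up, so ``free'' swaps occur only at layers whose edge gadget is already covered by the chosen clique vertices and edge-vertices; and the construction repeats a phase $\binom{k}{2}+k+3$ times, each phase requiring $\binom{k}{2}$ free swaps to exchange $U^t$ for $U^{t+1}$, which is possible over all phases only if the cover really contains a $k$-clique. Making this rigorous requires normalizing solutions (the paper's \emph{smooth} and \emph{one-centered} lemmas) and a potential-function invariant within each phase; none of that is supplied or replaced by your sketch, so the hardness direction as written is not yet a proof.
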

We first show the XP-algorithm (\cref{ssec:xpalgo}),
and then prove the \W{1}-hardness (\cref{ssec:whardness}) and discuss its implications.

\subsection{XP-Algorithm}
\label{ssec:xpalgo}

In this section, we prove the following.
\begin{proposition}
 \label{prop:xpalgo}
 Every instance~$(\TG,k,\ell)$ of \msvc{} can be decided in~$O(\tau(\TG)\cdot |V(\TG)|^{2k+1})$ time.
\end{proposition}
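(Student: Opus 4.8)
The plan is to set up a layer-by-layer dynamic program over the polynomially-many (for fixed~$k$) candidate vertex covers. Write $n\ceq|V(\TG)|$ and $\tau\ceq\tau(\TG)$. First I would, for each layer~$G_i$, enumerate the family~$\calC_i$ of all vertex covers of~$G_i$ of size at most~$k$. Since every such set is a subset of~$V$ of size at most~$k$, there are at most $\sum_{j=0}^{k}\binom{n}{j}=O(n^{k})$ candidates, and each candidate subset can be tested for being a vertex cover of~$G_i$ in time polynomial in~$n$. Hence this enumeration costs $O(\tau\cdot n^{k}\cdot\poly(n))$ in total, which will be dominated by the main step below (using $k\geq 1$).

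Next I would process the layers from left to right, maintaining the set $R_i\subseteq\calC_i$ of those covers $S_i\in\calC_i$ that can appear as the last entry of a \emph{valid partial solution}~$(S_1,\dots,S_i)$, that is, a sequence in which each~$S_j$ is a size-at-most-$k$ vertex cover of~$G_j$ and $\sydic{S_j}{S_{j+1}}\leq\ell$ for all $j<i$. I would initialise $R_1\ceq\calC_1$ and then compute
\[
 R_{i+1}\ceq\{S\in\calC_{i+1}\mid \exists\,S'\in R_i:\ \sydic{S'}{S}\leq\ell\}.
\]
A routine induction on~$i$ shows that $R_i$ is exactly the set of extendable prefixes ending at each cover, so by the definition of \MSVC{} the instance~$(\TG,k,\ell)$ is a \yes-instance if and only if $R_\tau\neq\emptyset$.

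For the running time of the main step, each transition from~$R_i$ to~$R_{i+1}$ amounts, in the worst case, to comparing every $S'\in R_i$ with every $S\in\calC_{i+1}$. There are at most $|R_i|\cdot|\calC_{i+1}|=O(n^{k})\cdot O(n^{k})=O(n^{2k})$ such pairs, and computing~$\sydic{S'}{S}$ for two sets of size at most~$k$ takes $O(n)$ time (indeed $O(k)$ if the covers are kept as sorted lists). One transition therefore costs $O(n\cdot n^{2k})=O(n^{2k+1})$, and summing over the $\tau-1$ transitions yields $O(\tau\cdot n^{2k+1})$, which also absorbs the enumeration cost since $k\geq 1$ and $0<k<n$.

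I do not expect a genuine obstacle here: this is a brute-force \XP{} procedure, and the only things to state cleanly are the candidate count $O(n^{k})$ per layer and the observation that a single symmetric-difference test costs $O(n)$, so that the extra factor of~$n$ in the exponent $2k+1$ is precisely the per-pair comparison cost on top of the $O(n^{2k})$ pairs. Correctness follows from the inductive characterisation of~$R_i$ above, and the bound is simply the product of the two candidate counts with the per-pair cost, carried out over all $\tau$ layers.
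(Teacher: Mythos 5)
Your proof is correct and follows essentially the same approach as the paper: the paper builds an explicit ``configuration graph'' whose vertices are the candidate covers of each layer (restricted, via a preliminary normalization observation, to sizes $k-1$ and $k$) and checks for an $s$-$t$ path by breadth-first search, which is exactly your layer-by-layer reachability computation of the sets~$R_i$. Your variant of enumerating all covers of size at most~$k$ sidesteps the need for that normalization step while yielding the same $O(n^k)$ candidate count and the same overall bound.
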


In a nutshell, 
to prove~\cref{prop:xpalgo} we first consider for each layer all vertex subsets of size at most~$k$
that form a vertex cover.
Second, we find a sequence of vertex covers for all layers such that the sizes of the symmetric differences for every two consecutive solutions is at most~$\ell$.
We show that the second step can be solved via computing a source-sink path in an auxiliary directed graph that we call \emph{configuration graph} (see~\cref{fig:xp} for an illustrative example). 
\begin{definition}
	Given an instance~$I=(\TG,k,\ell)$ of \MSVC{}, 
	the \emph{configuration graph} of~$I$ is 
	the directed graph~$ D=(V,A,\gamma) $ 
	with~$V=V_1\uplus\cdots\uplus V_\tau\uplus\{s,t\}$,
	being equipped with a function~$\gamma:V\to \{V'\subseteq V(\TG)\mid |V'|\leq k\}$ such that
 \begin{compactenum}[(i)]
  \item for every~$i\in\set{\tau(\TG)}$, it holds true that $S$ is a vertex cover of~$G_i$ of size exactly~$k-1$ or~$k$ if and only if there is a vertex~$v\in V_i$ with $\gamma(v)=S$,
  \item there is an arc from~$v\in V$ to~$w\in V$ if and only if~$v\in V_i$, $w\in V_{i+1}$, and~$\sydic{\gamma(v)}{\gamma(w)}\leq \ell$, and
  \item there is an arc~$(s,v)$ for all~$v\in V_1$ and an arc~$(v,t)$ for all~$v\in V_\tau$.
 \end{compactenum}
\end{definition}

\begin{figure}[t]
    \centering
    \begin{tikzpicture}
      \def\xr{0.775}
      \def\yr{1}
      \tikzstyle{xnode}=[circle,fill,scale=2/3,draw];
      \tikzstyle{cnode}=[scale=0.8,draw];
      \tikzstyle{xedge}=[thick,-];
      \tikzstyle{xarc}=[thick,-stealth];
      \tikzstyle{harc}=[line width=5pt,color=green,opacity=0.15];
      \tikzstyle{vertexbox}=[dotted,gray,thick,rounded corners,draw];
      \newcommand{\tikzlayer}[1]{
          \draw[rounded corners, gray, very thin] (0.1*\xr,0.1*\yr) rectangle (3.4*\xr,1.9*\yr);
          \node at (-0.3*\xr,1.25*\yr)[label=90:{#1}]{};
          \node (v1) at (1*\xr,1.5*\yr)[xnode,label=180:{$v_1$}]{};
          \node (v2) at (2.5*\xr,1.5*\yr)[xnode,label=0:{$v_2$}]{};
          \node (v3) at (2.5*\xr,0.5*\yr)[xnode,label=00:{$v_3$}]{};
          \node (v4) at (1*\xr,0.5*\yr)[xnode,label=180:{$v_4$}]{};
      }

      \begin{scope}[xshift=\xr*1.175cm,scale=0.8,transform shape]
          
          \begin{scope}[]
              \tikzlayer{$G_1$}
              \draw[xedge] (v2) -- (v3) -- (v4) -- (v2);		
          \end{scope}

          \begin{scope}[xshift=\xr*4.5cm]
              \tikzlayer{$G_2$}
              \draw[xedge] (v3) -- (v1);
          \end{scope}

          \begin{scope}[xshift=\xr*9cm]
              \tikzlayer{$G_3$}
              \draw[xedge] (v2) -- (v3) -- (v4) -- (v1) -- (v2);
          \end{scope}
      \end{scope}
      \node at (-1.5*\xr,1.5*\yr)[]{(a)};
      
      \def\ys{2.25};
      \def\ysn{0.5};
      \node (s) at (-1*\xr,-\ys*\yr)[xnode,label=-90:{$s$}]{};
      \node (t) at (13*\xr,-\ys*\yr)[xnode,label=-90:{$t$}]{};

      \begin{scope}[yshift=-\yr*\ys cm]
          \node (a1) at (1.75*\xr,1*\ysn*\yr)[cnode]{$\{v_2,v_3\}$};
          \node (a2) at (1.75*\xr,0*\yr)[cnode]{$\{v_3,v_4\}$};
          \node (a3) at (1.75*\xr,-1*\ysn*\yr)[cnode]{$\{v_2,v_4\}$};
          \node at (a2)[minimum width=\xr*1.66cm, minimum height=\yr*1.75cm,vertexbox,label=-90:{$V_1$}]{};
      \end{scope}

      \begin{scope}[xshift=\xr*4.5cm,yshift=-\yr*\ys cm]
          \node (b1) at (1.75*\xr,3*\ysn*\yr)[cnode]{$\{v_1,v_3\}$};
          \node (b2) at (1.75*\xr,2*\ysn*\yr)[cnode]{$\{v_1\}$};
          \node (b3) at (1.75*\xr,1*\ysn*\yr)[cnode]{$\{v_1,v_2\}$};
          \node (b4) at (1.75*\xr,0)[cnode]{$\{v_1,v_4\}$};
          \node (b5) at (1.75*\xr,-1*\ysn*\yr)[cnode]{$\{v_3\}$};
          \node (b6) at (1.75*\xr,-2*\ysn*\yr)[cnode]{$\{v_2,v_3\}$};
          \node (b7) at (1.75*\xr,-3*\ysn*\yr)[cnode]{$\{v_3,v_4\}$};
          \node at (b4)[minimum width=\xr*1.66cm, minimum height=\yr*3.75cm,vertexbox,label=-90:{$V_2$}]{};
      \end{scope}
      
      \begin{scope}[xshift=\xr*9cm,yshift=-\yr*\ys cm]
          \node (c1) at (1.75*\xr,1*\ysn*\yr)[cnode]{$\{v_1,v_3\}$};
    \node (c0) at (1.75*\xr,0*\yr)[draw=none]{};
          \node (c2) at (1.75*\xr,-1*\ysn*\yr)[cnode]{$\{v_2,v_4\}$};
          \node at (c0)[minimum width=\xr*1.66cm, minimum height=\yr*1.75cm,vertexbox,label=-90:{$V_3$}]{};
      \end{scope}

      \foreach \x in {1,2,3}{\draw[xarc] (s) to (a\x.west);}
      \foreach \x in {5,6}{\draw[xarc] (a1.east) to (b\x.west);}
      \foreach \x in {5,7}{\draw[xarc] (a2.east) to (b\x.west);}
      \foreach \x in {1,2,5}{\draw[xarc] (b\x.east) to (c1.west);}
      \foreach \x in {1,2}{\draw[xarc] (c\x.east) to (t);}
      \draw[harc] (s) to (a1.west);
      \draw[harc] (a1.east) to (b5.west);
      \draw[harc] (b5.east) to (c1.west);
      \draw[harc] (c1.east) to (t);

      \node at (-1.5*\xr,-0.5*\yr)[]{(b)};
    \end{tikzpicture}
    \caption{Illustrative example of a configuration graph. 
    (a) Temporal graph instance~$I=(\TG,k,\ell)$ from~\cref{fig:introex} with~$\TG=(G_1,G_2,G_3)$, 
    $k=2$, 
    and~$\ell=1$. 
    (b) Configuration graph of~$I$ from (a); a directed~$s$-$t$ path is highlighted corresponding to the solution depicted in~\cref{fig:introex}.}
    \label{fig:xp}
\end{figure}

\noindent
Note that Mouawad~\etal~\cite{mouawad2017parameterized} used a similar 
configuration graph 
to show fixed-parameter tractability of \prob{Vertex Cover Reconfiguration} parameterized by the vertex cover size $k$.
In the multistage setting the configuration graph is 
too large for fixed-parameter tractability regarding $k$.
However, we show an XP-algorithm regarding~$k$ 
to construct the configuration graph.
\begin{lemma}%
 \label{lem:confgraph}
 The configuration graph of an instance~$(\TG,k,\ell)$ of \MSVC{}, 
 where temporal graph~$\TG$ has~$n$ vertices and time horizon~$\tau$,
\begin{compactenum}[(i)]
    \item can be constructed in~$O(\tau \cdot n^{2k+1})$ time, and 
    \item contains at most $\tau\cdot 2n^k+2$ vertices and~$(\tau-1) n^{2k}+4n^k$ arcs.
 \end{compactenum}
\end{lemma}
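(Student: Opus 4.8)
The plan is to prove the two claims by direct counting and a routine two-phase construction, treating part~(ii) first since the vertex and arc counts govern the running time bound in part~(i). For the vertex bound, I would invoke condition~(i) of the definition: each~$V_i$ contains exactly one vertex per vertex cover of~$G_i$ whose size is~$k-1$ or~$k$. Hence~$|V_i|$ is at most the number of subsets of~$V(\TG)$ of size~$k-1$ or~$k$, namely~$\binom{n}{k-1}+\binom{n}{k}\le 2n^k$, using~$\binom{n}{j}\le n^j\le n^k$ for~$j\in\{k-1,k\}$. Summing over the~$\tau$ layers and adding the two vertices~$s$ and~$t$ yields the claimed~$\tau\cdot 2n^k+2$. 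For the arc bound, conditions~(ii)--(iii) guarantee that every arc either joins~$V_i$ to~$V_{i+1}$ for some~$i$ or is incident with~$s$ or~$t$. The number of arcs of the first kind between a fixed consecutive pair is at most~$|V_i|\cdot|V_{i+1}|$, which is of order~$n^{2k}$, and summing over the~$\tau-1$ pairs accounts for the~$(\tau-1)n^{2k}$ term; the arcs incident with~$s$ and~$t$ number at most~$|V_1|+|V_\tau|\le 4n^k$, giving the additive~$4n^k$.

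For the construction in part~(i), I would build~$D$ in two phases. First, for each layer~$i$ I enumerate all~$O(n^k)$ subsets of size~$k-1$ or~$k$ and test, in~$O(|E(G_i)|)=O(n^2)$ time each, whether the subset covers~$G_i$; the surviving subsets become the vertices of~$V_i$ together with their~$\gamma$-labels. Over all~$\tau$ layers this costs~$O(\tau\cdot n^{k+2})$. Second, for each of the~$\tau-1$ consecutive layer pairs and each of the~$O(n^{2k})$ candidate vertex pairs~$(v,w)$ with~$v\in V_i$ and~$w\in V_{i+1}$, I compute~$\sydic{\gamma(v)}{\gamma(w)}$ in~$O(n)$ time and insert the arc exactly when this value is at most~$\ell$. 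This phase costs~$O(\tau\cdot n^{2k+1})$ and dominates; prepending the arcs~$(s,v)$ for~$v\in V_1$ and appending~$(v,t)$ for~$v\in V_\tau$ adds only~$O(n^k)$ and does not change the bound.

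The argument is essentially bookkeeping, so there is no deep obstacle; the one point demanding care is pinning down the exponent in the running time. The dominant cost is the arc-insertion phase, where the key observations are that there are~$\Theta(n^{2k})$ vertex pairs to inspect per consecutive layer pair and that each symmetric-difference test costs~$O(n)$, since both labels have size at most~$k\le n$ and a single linear scan over~$V(\TG)$ suffices. This one extra factor of~$n$ on top of the~$\tau\cdot n^{2k}$ pairs is precisely what produces the exponent~$2k+1$, and the remaining check is simply that the cheaper vertex-enumeration phase, costing~$O(\tau n^{k+2})$, is absorbed into~$O(\tau n^{2k+1})$ for all~$k\ge 1$, which holds under the standing assumption~$0<k$.
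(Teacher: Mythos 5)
Your proposal is correct and follows essentially the same route as the paper: enumerate the $O(n^k)$ candidate subsets per layer and test coverage in $O(|E(G_i)|)$ time, then test all $O(n^{2k})$ pairs between consecutive layers for symmetric difference at most $\ell$, with the arc-insertion phase dominating at $O(\tau\cdot n^{2k+1})$. The only cosmetic difference is that the paper bounds each symmetric-difference test by $O(k)$ rather than your $O(n)$, which changes nothing in the final bound.
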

\begin{proof}
	Compute the set~$\calS=\{V'\subseteq V(\TG) \mid k-1 \leq |V'|\leq k\}$ in~$O(n^k)$ time.
 For each layer~$G_i$ and each set $S\in\calS$, check in $O(|E(G_i)|)$~time whether $S$ is a vertex cover for~$G_i$.
 Let~$\calS_i\subseteq \calS$ denote the set of vertex covers of size~$k-1$ or $k$ of layer~$G_i$.
 For each~$S\in\calS_i$, add a vertex~$v$ to~$V_i$ and set~$\gamma(v)=S$.
 Lastly,
 add the vertices~$s$ and~$t$.
 Hence, 
 we can construct the vertex set~$V$ of the configuration graph~$D$ of size~$\tau\cdot 2n^k+2$ in~$O(n^{k+2} \cdot \tau)$~time.
 For every~$i\in\set{\tau-1}$, 
 and every~$v\in V_i$ and~$w\in V_{i+1}$,
 check whether~$\sydic{\gamma(v)}{\gamma(w)}\leq \ell$ in~$O(k)$ time.
 If this is the case, 
 then add the arc~$(v,w)$.
 The latter steps can be done in~$O(n^{2k+1}\cdot (\tau-1)) $ time.
 Finally, 
 add the arc~$(s,v)$ for each~$v\in V_1$ and the arc~$(v,t)$ for each~$v\in V_\tau$ in~$O(n^k)$~time.
 The finishes the construction of~$D=(V=V_1\uplus\cdots\uplus V_\tau\uplus\{s,t\} ,A,\gamma)$.
 \lqed
\end{proof}

The crucial observation is that we can decide any instance by checking for an $s$-$t$ path in its configuration graph.

 \begin{lemma}%
 \label{lem:stpathinconfgraph}
 \MSVC{}-instance~$I=(\TG,k,\ell)$ is a \yes-instance if and only if 
 there is an~$s$-$t$ path in the configuration graph~$D$ of~$I$.
\end{lemma}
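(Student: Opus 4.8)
The plan is to exploit the layered structure of the configuration graph~$D$: since its only arcs run from~$s$ into~$V_1$, from each~$V_i$ into~$V_{i+1}$, and from~$V_\tau$ into~$t$, any directed~$s$-$t$ path must visit exactly one vertex of each layer set~$V_i$, in increasing order of~$i$. I would therefore set up a correspondence between $s$-$t$ paths~$s,v_1,\ldots,v_\tau,t$ (with~$v_i\in V_i$) and solution sequences~$\calS=(S_1,\ldots,S_\tau)$ via~$S_i\ceq\gamma(v_i)$, and prove the two implications separately.

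For the direction~\LD{}, I would take an $s$-$t$ path and first argue, from the three defining conditions of~$D$, that it is forced to have the form~$s,v_1,\ldots,v_\tau,t$ with~$v_i\in V_i$: the only out-arc of~$s$ leads into~$V_1$, the only out-arcs of a vertex in~$V_i$ (for~$i<\tau$) lead into~$V_{i+1}$, and the only out-arc of a vertex in~$V_\tau$ leads to~$t$. Setting~$S_i\ceq\gamma(v_i)$, condition~(i) guarantees that each~$S_i$ is a vertex cover of~$G_i$ of size at most~$k$, and the existence of the arc~$(v_i,v_{i+1})$ together with condition~(ii) guarantees~$\sydic{S_i}{S_{i+1}}\leq\ell$. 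Hence~$\calS$ is a solution and~$I$ is a \yes-instance.

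For the direction~\RD{}, the naive attempt of reading off a path from an arbitrary solution fails, and this is the step I expect to be the main obstacle: the configuration graph deliberately contains only those vertex covers of size \emph{exactly}~$k-1$ or~$k$, so a solution that happens to use a strictly smaller vertex cover in some layer has no corresponding vertex in the respective~$V_i$. The remedy is to first normalize the solution via~\cref{obs:largesolutions}, which yields a solution~$\calS=(S_1,\ldots,S_\tau)$ with~$k-1\leq|S_i|\leq k$ for every~$i$. With this size guarantee, condition~(i) produces for each~$i$ a vertex~$v_i\in V_i$ with~$\gamma(v_i)=S_i$; the bound~$\sydic{S_i}{S_{i+1}}\leq\ell$ together with condition~(ii) supplies the arcs~$(v_i,v_{i+1})$, and condition~(iii) supplies the arcs~$(s,v_1)$ and~$(v_\tau,t)$. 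Concatenating these yields the desired $s$-$t$ path.

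Finally, I would note that no separate treatment of empty layer sets is needed: if some~$G_i$ admits no vertex cover of size at most~$k$, then~$V_i=\emptyset$, no $s$-$t$ path can exist, and~$I$ is correctly rejected; conversely, whenever a vertex cover of size at most~$k$ exists, one of size exactly~$k-1$ or~$k$ exists as well, so the normalization step in~\RD{} never requires a configuration that is absent from~$D$.
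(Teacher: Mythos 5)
Your proposal is correct and follows essentially the same route as the paper: normalize the solution via \cref{obs:largesolutions} so that every layer's cover has size $k-1$ or $k$ (making it appear as a vertex of the configuration graph), then translate back and forth between arcs and the symmetric-difference bound. The extra remarks on the forced layered form of any $s$-$t$ path and on empty sets $V_i$ are fine but not needed beyond what the paper already establishes implicitly.
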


\begin{proof}
	Let $D=(V=V_1\uplus\cdots\uplus V_\tau\uplus\{s,t\} ,A,\gamma)$.

 \RD{}
 Let~$(S_1,\ldots,S_\tau)$ be a solution to~$(\TG,k,\ell)$.
 By \cref{obs:largesolutions}, we can assume without loss of generality that $k-1 \leq |S_i| \leq k$, for all $i \in \{1,\dots,\tau\}$.
 Hence for each~$S_i$, there is a~$v_i\in V_i$ such that $\gamma(v_i)=S_i$, for all $i \in \set{\tau}$.
 Note that the arc~$(v_i,v_{i+1})$ is contained in~$A$ for each~$i\in\set{\tau-1}$ since~$\sydic{\gamma(v_i)}{\gamma(v_{i+1})}=\sydic{S_i}{S_{i+1}}\leq \ell$.
 Hence,~$P=(\{v_1,\ldots,v_\tau\}\cup\{s,t\},\{ (s,v_1), (v_\tau,t)\} \cup\bigcup_{i=1}^{\tau-1}\{ (v_i,v_{i+1})\})$ is an~$s$-$t$ path in~$D$.
 
 \LD{}
 Let $P=(\{v_1,\ldots,v_\tau\}\cup\{s,t\},\{(s,v_1),(v_\tau,t)\}\cup\bigcup_{i=1}^{\tau-1}\{(v_i,v_{i+1})\})$ be an~$s$-$t$ path in~$D$.
 We claim that~$(\gamma(v_i))_{i\in\set{\tau}}$ forms a solution to~$(\TG,k,\ell)$.
 First, note that for all~$i\in\set{\tau}$,~$\gamma(v_i)$ is a vertex cover for~$G_i$ of size at most~$k$.
 Moreover, for all~$i\in\set{\tau-1}$, $\sydic{\gamma(v_i)}{\gamma(v_{i+1})}\leq \ell$ since the arc~$(v_i,v_{i+1})$ is present in~$D$. 
 This finishes the proof.
\lqed
\end{proof}

\noindent
We are ready to prove \cref{prop:xpalgo}. 

\ifarxiv{}
\begin{proof}[Proof of~\cref{prop:xpalgo}]
\else{}
\begin{proof}[of~\cref{prop:xpalgo}]
\fi{}
	First, compute the configuration graph~$D$ of the instance $(\TG=(V,\TE,\tau),k,\ell)$ of \msvc{} in~$O(\tau\cdot|V|^{2k+1})$ time (\cref{lem:confgraph}(i)).
	Then, find an $s$-$t$ path in~$D$ with a breadth-first search in~$O(\tau\cdot|V|^{2k})$ time (\cref{lem:confgraph}(ii)).
	If an~$s$-$t$ path is found, then return~\yes, otherwise return \no{} (\cref{lem:stpathinconfgraph}).
\lqed
\end{proof}

\begin{remark}
  The reason why the algorithm behind \cref{prop:xpalgo} 
  is only an \XP-algorithm and not 
  an \FPT-algorithm regarding~$k$ 
  is because we do not have a better upper~bound on the number of vertices in the 
  configuration graph for $(\TG,k,\ell)$ than $O(\tau(\TG) \cdot |V(\TG)|^k)$.
  This is due to the fact that we check for each subset of $V(\TG)$ of size $k$ or $k-1$ whether it is a vertex cover in some layer.

  This changes if we consider \prob{Minimal \msvc{}} 
  where we additionally demand the $i$-th set in the solution to be a \emph{minimal} vertex cover for the layer $G_i$.
  Here, we can enumerate for each layer~$G_i$ all minimal vertex covers of size at most $k$ 
  (and hence all candidates for the $i$-th set of the solution)
  with the folklore search-tree algorithm for vertex cover. %
  This leads to $O(2^k\tau(\TG))$ many vertices in the configuration graph (for \prob{Minimal \msvc{}}) and thus to fixed-parameter tractability of \prob{Minimal \msvc{}} parameterized by the vertex cover size $k$.
\end{remark}

However, it is unlikely (unless \FPT$=$\W{1}) 
that one can substantially improve the algorithm behind \cref{prop:xpalgo},
as we show next.

\subsection{Fixed-parameter Intractability}
\label{ssec:whardness}

In this section we show that \MSVC{} is \W{1}-hard when parameterized by~$k$.
This hardness result is established by the following parameterized reduction from the W[1]-complete~\cite{DowneyF99} \prob{Clique} problem, 
where, given an undirected graph~$G$ and a positive integer~$k$, 
the question is whether~$G$ contains a clique of size~$k$ (that is, $k$ vertices that are pairwise adjacent).
\begin{proposition}
  \label{prop:whardness}
 There is an algorithm that maps any instance~$(G,k)$ of~\prob{Clique} 
 in polynomial time 
 to an equivalent instance~$(\TG,k',\ell)$ of~\MSVC{} 
 with~$k'=2\binom{k}{2}+k+1$,~$\ell=2$, 
 and each layer of~$\TG$ being a forest with $O(k^4)$ edges.
\end{proposition}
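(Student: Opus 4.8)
The plan is to design a parameterized reduction from \prob{Clique}, viewing a $k$-clique as the choice of one vertex for each \emph{position} $p\in\{1,\dots,k\}$ together with, for each position pair $\{p,q\}$, the choice of an edge of~$G$ witnessing that the position-$p$ and position-$q$ vertices are adjacent. The target budget $k'=2\binom{k}{2}+k+1$ is tailored to exactly this bookkeeping: one unit pays for a \emph{guard} vertex~$g$ that is forced into every cover (via a pendant edge $\{g,g'\}$ present in every layer, as in the reduction behind \cref{prop:tau2treelayers}), $k$ units hold the $k$ chosen position-vertices, and the remaining $2\binom{k}{2}$ units hold the two endpoints of each of the $\binom{k}{2}$ chosen witnessing edges. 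The crucial engine is the tiny symmetric-difference bound $\ell=2$: between two consecutive layers a solution may replace at most one cover vertex by another, so the cover behaves like a configuration of tokens that can move by at most one swap per time step. This is what welds a long sequence of small layers into a single, globally consistent selection and makes the vertex-cover size~$k$ (rather than $\tau$ or $|V(G)|$) the binding parameter.

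\textbf{Gadgets and layers.} First I would introduce \emph{selector vertices}: for every position~$p$ a position-$p$ copy of each vertex of~$G$, and for every position pair $\{p,q\}$ a pair of endpoint-tokens for each edge of~$G$. Each layer is assembled as a disjoint union of stars and short paths, namely pendant \emph{forcing} edges that pin a chosen token into the cover together with small \emph{test trees}; being a union of trees, every layer is a forest, and the $O(k^2)$ tokens give rise to pairwise consistency tests realized by $O(k^4)$ auxiliary test-edges, matching the claimed forest with $O(k^4)$ edges. Over time I would \emph{sweep}: blocks of consecutive layers cycle through the vertices of~$G$ to force each position to commit to exactly one position-vertex, and through the edges of~$G$ to force each pair $\{p,q\}$ to commit to exactly one witnessing edge; interleaved \emph{verification layers} present a test edge that can be covered within the budget only when the endpoints recorded by the $\{p,q\}$-edge-token agree with the position-$p$ and position-$q$ vertex-tokens. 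Since the forcing edges already saturate $k'$, passing such a layer forces the required incidence.

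\textbf{Correctness and the main obstacle.} For the forward direction~\RD{}, given a $k$-clique I would exhibit the persistent cover (the guard, the $k$ clique vertices in their positions, and the $2\binom{k}{2}$ endpoints of the $\binom{k}{2}$ clique edges), verify that every layer is covered, and check that the prescribed sweep is realizable with each transition changing the cover by at most~$2$; here \cref{obs:largesolutions} lets me assume throughout that every cover has size $k'$ or $k'-1$. The backward direction~\LD{} is the hard part. I expect the main obstacle to be proving the invariant that a valid solution encodes one fixed choice per position and per pair that does \emph{not} drift across a verification block: one must show that the bound $\ell=2$, combined with forcing edges that already consume the whole budget~$k'$ in each layer, leaves no slack for the solution to reshuffle tokens so as to pass mutually inconsistent checks. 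Establishing this should require a careful charging argument that tracks exactly which budget units are spent in each layer and shows that any deviation from a single consistent selection either violates the size bound~$k'$ in some layer or the symmetric-difference bound~$\ell$ at some transition. Once the invariant is in place, the recorded choices are $k$ pairwise-adjacent (hence distinct) vertices, i.e.\ a $k$-clique in~$G$, which completes the equivalence and yields the \W{1}-hardness.
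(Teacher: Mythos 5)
Your forward direction and your general framing (a cover behaving like a configuration of tokens that can move by one swap per step) point in the right direction, but the proposal has a genuine gap exactly where you locate ``the main obstacle'': the backward direction, and the gadget that makes it provable is missing. The bound $\ell=2$ does \emph{not} weld the covers into one globally consistent selection --- over $\tau$ steps a solution may completely reconfigure many times, one swap at a time, so a verification layer at time $t_2$ cannot certify anything about a commitment made at time $t_1$. Your budget accounting also contains no slack-consuming component: nothing forces every layer's cover to stay saturated at $k'$, and a cover of size $k'-1$ regains a free swap at the next step, so the claim that ``forcing edges already saturate $k'$'' cannot be maintained across the time horizon. As described, the sweep-plus-verification scheme therefore does not force a single consistent choice of $k$ vertices and $\binom{k}{2}$ witnessing edges.

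The paper's construction (\cref{constr:whardness}) supplies exactly the missing mechanism, and its budget decomposition differs from yours in the decisive way: besides the $k$ vertices, the $\binom{k}{2}$ edge-vertices (edges of $G$ reused as vertices) and one always-present center, it reserves $\binom{k}{2}$ slots for a set $U^t$ of \emph{migrating} tokens. Stars at the phase boundaries force $U^t$ at the start of phase $t$ and $U^{t+1}$ at its end, so each phase must spend $\binom{k}{2}$ swaps on this migration. Odd layers force a specific center vertex, so the center must also be swapped forward in essentially every step --- except in an even layer whose edge gadget ($c$ adjacent to $e_i$, $v$, $w$) is already covered by $V\cup E$ vertices persisting in the cover. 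A phase can therefore afford the forced $U$-migration only if at least $\binom{k}{2}$ of its edge layers are ``free'' in this sense; a pigeonhole over $\kappa=\binom{k}{2}+k+3$ phases (\cref{lem:whardness-K-flips}) guarantees such a phase exists, the swap-counting invariant of \cref{lem:whardness-invariant} makes the charging rigorous, and \cref{lem:whardness-noofvert} bounds the $V$-vertices used during that single phase by $k$, which yields the clique. Without an analogue of the forced token migration and this counting invariant, the charging argument you defer cannot be carried out.
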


\noindent
In the remainder of this section,
we prove \cref{prop:whardness}.
We next give the construction of the \MSVC{} instance,
then prove the forward (\cref{sssec:fd}) and backward (\cref{sssec:bd}) direction of the equivalence,
and finally (in~\cref{sssec:proofwhardness}) put the pieces together and derive two corollaries.

We construct an instance of \msvc{} from an instance 
of \prob{Clique} as follows (see \cref{fig:whardness} for an illustrative example).

\begin{construction}
  \label[construction]{constr:whardness}
  \begin{figure}
   \centering
   \begin{tikzpicture}[node distance=0.5cm and 0.5cm]
    \usetikzlibrary{decorations,shapes,positioning,calc}

    \tikzstyle{snode}=[star,star points=8,star point ratio=2,scale=1/2,fill=white,draw]
    \tikzstyle{xnode}=[circle, fill,scale=1/2,draw]
    \tikzstyle{hnode}=[fill=green!75!white,draw=none,circle,scale=1.3,draw,opacity=0.1];

    \def\xr{0.68}
    \def\yr{0.55}

    \begin{scope}[yshift=-2*\yr cm]
    \node (u) at (0,0)[xnode,label=-135:{$u$}]{};
    \node (v) at (1*\xr,0)[xnode,label=-45:{$v$}]{};
    \node (w) at (0.5*\xr,1.25*\yr)[xnode,label=90:{$w$}]{};
    \draw[thick] (u) to node[below]{$e_1$}(v);
    \draw[thick] (v) to node[right]{$e_2$}(w);
    \draw[thick] (w) to node[left]{$e_3$}(u);
    \node at (2.25*\xr,0.5*\yr)[scale=1.5]{$\leadsto$};
    \end{scope}

    \begin{scope}[xshift=1.75*\xr cm]

        \node (c1) at (1*2*\xr,0)[snode,label=-90:{$c_1$}]{};
        
		\foreach \x in {3,5,7}{
			\node (cs\x) at (\x*2*\xr-2*\xr+2.5*\xr,0)[snode,label=-90:{$c_\x$}]{};
		}
		\foreach \x in {3,5,7}{
			\node (c\x) at (\x*2*\xr-2*2*\xr+2.5*\xr,0)[xnode,label=-90:{$c_\x$}]{};
		}
		\foreach \x in {2,4,6}{	
			\node (c\x) at (\x*2*\xr-2*\xr+2*\xr,0)[xnode,label=-180:{$c_\x$}]{};
		}
		\foreach \x in {2,4,6}{	
			\node (cs\x) at (\x*2*\xr+1.5*\xr,0)[xnode,label=-90:{$c_\x$}]{};
		}
		\foreach \x in {1,2,...,6}{
			\draw[dashed,gray] (1*\xr+\x*2*\xr,1.75*\yr) -- (1*\xr+\x*2*\xr,-7.25*\yr); 
		}
		
		\begin{scope}[yshift=-0.25*\yr cm]
            
            \node (0-u11) at (1.5*\xr,-2.25*\yr)[snode,label=-90:{$u_1^1$}]{};
            \node (0-u1i) at (1.5*\xr,-4.00*\yr)[snode,label=-90:{$u_2^1$}]{};
            \node (0-u1K) at (1.5*\xr,-5.75*\yr)[snode,label=-90:{$u_3^1$}]{};
            
            \node (0-u21) at (2.5*\xr,-2.25*\yr)[xnode,label=-90:{$u_1^2$}]{};
            \node (0-u2i) at (2.5*\xr,-4.0*\yr)[xnode,label=-90:{$u_2^2$}]{};
            \node (0-u2K) at (2.5*\xr,-5.75*\yr)[xnode,label=-90:{$u_3^2$}]{};
            
            \draw[thick] (0-u11) -- (0-u21);
            \draw[thick] (0-u1i) -- (0-u2i);
            \draw[thick] (0-u1K) -- (0-u2K);
            
            \foreach \x in {2,4,6,8,10}{	
                \node (\x-u11) at (\x*\xr+1.5*\xr,-2.25*\yr)[xnode,label=-90:{}]{};
                \node (\x-u1i) at (\x*\xr+1.5*\xr,-4.00*\yr)[xnode,label=-90:{}]{};
                \node (\x-u1K) at (\x*\xr+1.5*\xr,-5.75*\yr)[xnode,label=-90:{}]{};
                
                \node (\x-u21) at (\x*\xr+2.5*\xr,-2.25*\yr)[xnode,label=-90:{}]{};
                \node (\x-u2i) at (\x*\xr+2.5*\xr,-4.00*\yr)[xnode,label=-90:{}]{};
                \node (\x-u2K) at (\x*\xr+2.5*\xr,-5.75*\yr)[xnode,label=-90:{}]{};
                
                \draw[thick] (\x-u11) -- (\x-u21);
                \draw[thick] (\x-u1i) -- (\x-u2i);
                \draw[thick] (\x-u1K) -- (\x-u2K);
            }
            
            \node (12-u11) at (12*\xr+1.5*\xr,-2.25*\yr)[xnode,label=-90:{$u_1^1$}]{};
            \node (12-u1i) at (12*\xr+1.5*\xr,-4.00*\yr)[xnode,label=-90:{$u_2^1$}]{};
            \node (12-u1K) at (12*\xr+1.5*\xr,-5.75*\yr)[xnode,label=-90:{$u_3^1$}]{};

            \node (12-u21) at (12*\xr+2.5*\xr,-2.25*\yr)[snode,label=-90:{$u_1^2$}]{};
            \node (12-u2i) at (12*\xr+2.5*\xr,-4.00*\yr)[snode,label=-90:{$u_2^2$}]{};
            \node (12-u2K) at (12*\xr+2.5*\xr,-5.75*\yr)[snode,label=-90:{$u_3^2$}]{};

            \draw[thick] (12-u11) -- (12-u21);
            \draw[thick] (12-u1i) -- (12-u2i);
            \draw[thick] (12-u1K) -- (12-u2K);
		\end{scope}

		\node[below=of c2] (uv)[xnode,label=-180:{$e_1$}]{};
		\node[above=of c2,xshift=-15*\xr] (u1) [xnode,label=90:{$u$}]{};
		\node[above=of c2,xshift=15*\xr] (v1) [xnode,label=90:{$v$}]{};
		\foreach \x in {u1,v1,uv}{\draw[thick] (c2) -- (\x);	}

		\node[below=of c4] (vw)  [xnode,label=-180:{$e_2$}]{};
		\node[above=of c4,xshift=-15*\xr] (v2)[xnode,label=90:{$v$}]{};
		\node[above=of c4,xshift=15*\xr] (w2) [xnode,label=90:{$w$}]{};
		\foreach \x in {v2,w2,vw}{\draw[thick] (c4) -- (\x);	}

		\node[below=of c6] (wu) [xnode,label=-180:{$e_3$}]{};
		\node[above=of c6,xshift=-15*\xr] (u3) [xnode,label=90:{$u$}]{};
		\node[above=of c6,xshift=15*\xr] (w3) [xnode,label=90:{$w$}]{};
		\foreach \x in {u3,w3,wu}{\draw[thick] (c6) -- (\x);	}
		
        \draw[thick] (c2) -- (c3);
        \draw[thick] (c4) -- (c5);
        \draw[thick] (c6) -- (c7);
        \draw[thick] (cs2) -- (cs3);
        \draw[thick] (cs4) -- (cs5);
        \draw[thick] (cs6) -- (cs7);

        \node at (0-u11)[hnode]{};
        \node at (0-u1i)[hnode]{};
        \node at (0-u1K)[hnode]{};
        \node at (c1)[hnode]{};

        \node at (2-u11)[hnode]{};
        \node at (2-u1i)[hnode]{};
        \node at (2-u1K)[hnode]{};
        \node at (c3)[hnode]{};
        \node at (u1)[hnode]{};
        \node at (v1)[hnode]{};
        \node at (uv)[hnode]{};

        \node at (4-u21)[hnode]{};
        \node at (4-u1i)[hnode]{};
        \node at (4-u1K)[hnode]{};
        \node at (cs3)[hnode]{};

        \node at (6-u21)[hnode]{};
        \node at (6-u1i)[hnode]{};
        \node at (6-u1K)[hnode]{};
        \node at (c5)[hnode]{};
        \node at (v2)[hnode]{};
        \node at (w2)[hnode]{};
        \node at (vw)[hnode]{};

        \node at (8-u21)[hnode]{};
        \node at (8-u2i)[hnode]{};
        \node at (8-u1K)[hnode]{};
        \node at (cs5)[hnode]{};

        \node at (10-u21)[hnode]{};
        \node at (10-u2i)[hnode]{};
        \node at (10-u1K)[hnode]{};
        \node at (c7)[hnode]{};
        \node at (u3)[hnode]{};
        \node at (w3)[hnode]{};
        \node at (wu)[hnode]{};

        \node at (12-u21)[hnode]{};
        \node at (12-u2i)[hnode]{};
        \node at (12-u2K)[hnode]{};
        \node at (cs7)[hnode]{};
        \end{scope}

    \end{tikzpicture}
    \caption{Illustration of~\cref{constr:whardness} on an example graph (left-hand side) and the first seven layers of the obtained graph (right-hand side). 
    Dashed vertical lines separate layers,
    and for each layer all present edges (but only their incident vertices) are depicted.
    Star-shapes illustrate star graphs with~$k'+1$ leaves. 
    Vertices in a solution (layers' vertex covers) are highlighted.
    }
    \label{fig:whardness}
  \end{figure}
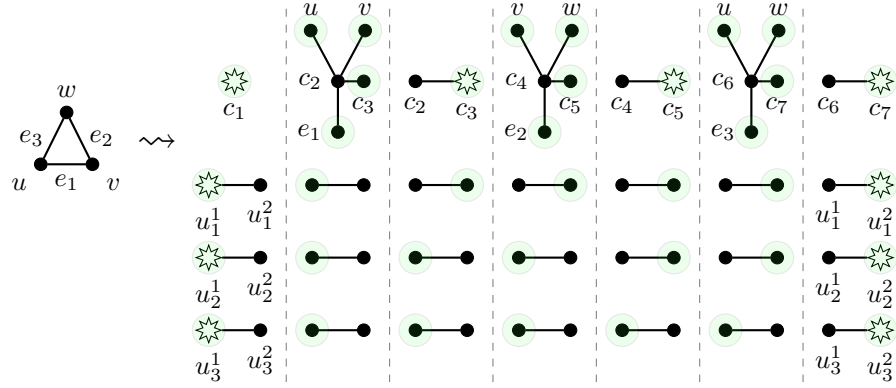
 Let~$(G=(V,E),k)$ be an instance of \prob{Clique} with~$m \ceq |E|$ and $E = \{e_1,\ldots, e_m\}$.
 Let
 \begin{align*} 
    &K\ceq\textstyle{\binom{k}{2}},\, &&k'\ceq 2K+k+1, \,&&\text{ and }\quad \kappa \ceq K + k+3.
 \end{align*}  
 We construct a temporal graph~$\TG=(V',\TE,\tau)$ as follows.
 Let~$V'$ be initialized to~$V \cup E$ (note that~$E$ simultaneously describes the edge set of~$G$ and a vertex subset of~$\TG$).
 We add the following vertex sets
 \begin{align*}
  U^t &\ceq \{u_j^t\mid j\in\set{K}\} \text{ for every } t\in\set{\kappa+1}, \text{ and }\\
  C &\ceq \{c_1,\ldots,c_{2m\kappa+1}\} \text{ (we refer to~$C$ as the set of \emph{center} vertices)}. 
 \end{align*}
 \todo[inline]{tf: we should try our best to keep variables and interpretation the same over this whole technical section, i.e.~$i,j,t$ etc.}
 Let $\TE$ be initially empty.
 We extend the set~$V'$ and define $\TE$
 through the $\tau\ceq 2m\kappa+1$ layers we construct in the following.
 \begin{compactenum}[(1)]
  \item In each layer~$G_{i}$ with~$i$ being odd, 
  make~$c_i$ the center of a star with~$k'+1$ leaves.\footnote{A star (graph) is a tree where at most one vertex (so-called center) is of degree larger than one.}
  \item In each layer~$G_{2mj+1}$, $j\in\{0,\ldots,\kappa\}$, make each vertex in~$U^{j+1}$ the center of a star with~$k'+1$ leaves. 
  \item For each~$j\in\{0,\ldots,\kappa-1\}$, in each layer~$G_{2mj+i}$ with~$i\in\set{2m+1}$, make~$u_x^{j+1}$ adjacent to~$u_x^{j+2}$ for each~$x\in\set{K}$.
  \item For each even~$i$, add the edge~$\{c_i,c_{i+1}\}$ to~$G_i$ and to~$G_{i+1}$.\label{const:4}
  \item For each~$j\in\{0,\ldots,\kappa-1\}$, for each~$i\in\set{m}$, in~$G_{2mj+2i}$, make~$c_{j2m+2i}$ adjacent with~$e_i=\{v,w\}$, $v$, and~$w$.
 \end{compactenum}
 This finishes the construction of~$\TG$. 
  \cqed
\end{construction}

\noindent
The construction essentially repeats the same gadget (which we call \emph{phase}) $\kappa$~times, 
where the layer~$2m\cdot i+1$ is simultaneously the last layer of phase~$i$ 
and the first layer of phase~$i+1$.
In the beginning of phase~$i$, 
a solution has to contain the vertices of~$U^i$.
The idea now is that during phase $i$ one has to exchange the vertices of~$U^i$
with the vertices of~$U^{i+1}$.

It is not difficult to see that the instance in~\cref{constr:whardness} can be computed in polynomial time.
Hence, 
it remains to prove the equivalence stated in~\cref{prop:whardness}.
Recall that we prove the forward and the backward direction in~\cref{sssec:fd,sssec:bd}, 
respectively,
and finally prove~\cref{prop:whardness} in~\cref{sssec:proofwhardness}.

\subsubsection{Forward Direction}
\label{sssec:fd}

The forward direction of \cref{prop:whardness} is---in a nutshell---as follows:
If~$V'\cup E'$ with~$V'\subseteq V$ and~$E'\subseteq E$ corresponds to the vertex set and edge set of a clique of size~$k$, 
then there are~$K$~layers in each phase covered by~$V'\cup E'$.
Hence, having~$K$ layers where no vertices from~$C$ have to be exchanged, 
in each phase~$t$ we can exchange all vertices from~$U^t$ to~$U^{t+1}$.
Starting with set~$S_1=U^1\cup V'\cup E'\cup \{c_1\}$ then yields a solution.

\begin{lemma}%
	\label{lem:whardness-forward}
	Let $(G,k)$ be an instance of \prob{Clique} and
	$(\TG,k',\ell)$ be the instance of \prob{Multistage Vertex Cover}
	resulting from \cref{constr:whardness}.
	If $(G,k)$ is a \yes-instance,
	then $(\TG,k',\ell)$ is a \yes-instance.
\end{lemma}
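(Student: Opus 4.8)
The plan is to exhibit an explicit solution sequence $\calS=(S_1,\ldots,S_\tau)$ driven by the clique. Suppose $(G,k)$ is a \yes-instance, and let $V'\subseteq V$ with $|V'|=k$ be a clique together with its edge set $E'\subseteq E$, so $|E'|=\binom{k}{2}=K$. Set $Q\ceq V'\cup E'$; these $K+k$ vertices of $\TG$ I keep inside \emph{every} $S_i$. The reason for fixing $Q$ is item~(5) of \cref{constr:whardness}: in an even layer $G_{2mj+2i}$ the center $c_{2mj+2i}$ is joined to $e_i$ and its two endpoints, and those three edges are already covered by $Q$ exactly when $e_i\in E'$ is a clique edge. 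Each $S_i$ will be $Q$ together with exactly $K$ vertices covering the matching edges of item~(3) and exactly one \emph{center token}, so $|S_i|=(K+k)+K+1=2K+k+1=k'$ throughout; the size bound is tight but always met.

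Next I would describe the dynamics within a phase $j$ (layers $G_{2mj+1},\ldots,G_{2mj+2m+1}$). I start from $S_{2mj+1}=U^{j+1}\cup Q\cup\{c_{2mj+1}\}$ and must reach $U^{j+2}\cup Q\cup\{c_{2mj+2m+1}\}$, i.e.\ swap each $u_x^{j+1}$ for $u_x^{j+2}$ (that is $K$ swaps) while walking the token forward. Processing edge $e_i$ in the even layer $G_{2mj+2i}$, I distinguish two cases. \textbf{(A)} If $e_i\in E'$, then on entering the even layer I move the token directly to the next odd star center, dropping $c_{2mj+2i-1}$ and adding $c_{2mj+2i+1}$; on the following transition the token stays put, freeing that step to perform one swap $u_x^{j+1}\to u_x^{j+2}$. \textbf{(B)} If $e_i\notin E'$, I walk the token $c_{2mj+2i-1}\to c_{2mj+2i}\to c_{2mj+2i+1}$ over the two transitions and perform no swap. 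Since exactly $K$ of $e_1,\ldots,e_m$ are clique edges, Case~A occurs exactly $K$ times per phase, matching the $K$ required swaps, so $U^{j+1}$ is fully exchanged for $U^{j+2}$ by layer $G_{2mj+2m+1}$, which is precisely the start state of phase $j+1$. A straightforward induction over $j\in\{0,\ldots,\kappa-1\}$, anchored at $S_1=U^1\cup Q\cup\{c_1\}$, then defines all of $\calS$.

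I would then verify (i) and (ii). For the vertex-cover property: each star-center is covered because the token sits on the relevant odd center $c_i$, and at phase starts $U^{j+1}\subseteq S_i$ covers the $U$-stars of item~(2); the matching edges of item~(3) are covered since $S_i$ always contains one endpoint of each; the edge $\{c_{2t},c_{2t+1}\}$ of item~(4) in the even layer is covered by whichever of $c_{2t},c_{2t+1}$ holds the token; and the item~(5) edges are covered by $Q$ in Case~A and by the token $c_{2mj+2i}$ in Case~B. The size is $k'$ by the count above. For (ii), each transition changes only one pair of vertices — either a token move or a swap — so $\sydic{S_i}{S_{i+1}}\le 2=\ell$; note that $c_{2t-1}$ carries no edge in $G_{2t}$, hence dropping it there is harmless.

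The delicate point, and what I expect to be the main obstacle, is that the bound $\ell=2$ is exactly tight, forcing the center token and the $U$-swaps to compete for the same two-element bandwidth of every transition. The whole argument hinges on the single observation that a clique edge lets the token jump from one odd star center straight to the next in one transition (Case~A), thereby donating the subsequent transition to a swap, whereas a non-clique edge consumes both transitions merely transporting the token. Getting the bookkeeping to balance — exactly $K$ clique edges, exactly $K$ swaps, and a phase that terminates precisely in the start configuration of the next — is the part requiring the most care; the per-layer cover and size checks are then routine.
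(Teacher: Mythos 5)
Your proposal is correct and takes essentially the same approach as the paper's proof: keep the clique's vertices and edges $V'\cup E'$ permanently in every cover, walk a single center token through the $c_i$'s, and observe that each of the $K$ clique edges lets the token skip an even center, freeing one transition per clique edge to swap a vertex of $U^{t}$ for one of $U^{t+1}$ within each phase. The paper phrases the case distinction adaptively (testing whether the token can jump two centers and still yield a vertex cover) rather than explicitly on $e_i\in E'$, but these conditions coincide, so the arguments are the same.
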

\begin{proof}
	Let $G' = (V',E')$ be the clique of size $k$ in $G$.
	We construct a 
	solution~$\calS=(S^1_1,\ldots,S^1_{2m},S^1_{2m+1}=S^2_{1},\ldots,S^\kappa_{2m+1} = S^{\kappa+1}_1)$ 
	for $(\TG,k',\ell)$ in the following way.
	For each~$t \in \{1,\ldots,\kappa+1\}$ we set 
	$S^t_1 = V' \cup E' \cup U^t \cup \{c_{(t-1)2m+1}\}$,
	which is a vertex cover of size~$k'$ for~$G_{(t-1)2m+1}$.

	Now, for each~$t \in \{1,\ldots,\kappa\}$, we iteratively construct vertex covers for the layers $(t-1)2m+2$ until $t2m$ in the following way.
	Let $T\ceq (t-1)\cdot 2m$.
	Let $i \in \{1,\ldots,2m-1\}$, 
	and assume that the set $S^t_{i}$ is already constructed and is a vertex cover for~$G_{T+i}$ 
	(this is possible due to the definition of~$S^t_1$).
	We distinguish two cases.
	
	\begin{compactdesc}
	\item[\emph{Case 1: $i$ is odd}.]
	We know that~$c_{T+i}\in S^t_i$.
	If $(S^t_{i}\setminus\{c_{T+i}\}) \cup \{ c_{T+i+2} \}$ is a vertex cover for $G_{T+i+1}$, then we set $S^t_{i+1} = (S^t_{i}\setminus\{c_{T+i}\}) \cup \{ c_{T+i+2} \}$. 
	Otherwise we set $S^t_{i+1} = (S^t_{i}\setminus\{c_{T+i}\}) \cup \{ c_{T+i+1} \}$. 
	In both cases $S^t_{i+1}$ is a vertex cover for $G_{T+i+1}$ and
	either $S^t_{i+1} \cap C = \{ c_{T+i+1} \}$ or $S^t_{i+1} \cap C = \{ c_{T+i+2}\}$.
	
	\item[\emph{Case 2: $i$ is even}.]
	We know that $c_{T+i}$ or $c_{T+i+1}$ is in $S^t_{i}$.
	If $c_{T+i} \in S^t_{i}$, 
	then we set $S^t_{i+1} = (S^t_{i}\setminus\{c_{T+i}\}) \cup \{ c_{T+i+1} \}$,
	which is a vertex cover for $G_{T+i+1}$.
	If $c_{T+i+1} \in S^t_{i}$,
	then~$S^t_{i}$ is already a vertex cover for $G_{T+i+1}$ 
	and the vertices in $V' \cup E'$ cover all edges incident with~$c_{T+i}$ in the graph $G_{T+i}$. 
	In this case we say that $G'$ \emph{covers} the layer $T+i$ and
	set $S^t_{i+1} = (S^t_{i} \setminus \{ u^t_j\}) \cup \{ u^{t+1}_j \}$,
	where $u^t_j$ is an arbitrary vertex in $S^t_{i} \cap U^t$.
    \end{compactdesc}
    
	Observe that the clique~$G'$ covers~$K$ even-numbered layers in each phase.
	Hence, 
	we replace, 
	during phase~$t \in \{1,\ldots,\kappa\}$
	(that is,
	from layer $(t-1)2m+1$ to $t2m+1$), 
	the vertices~$U^t$ with the vertices $U^{t+1}$.
	This also implies that the symmetric difference of two consecutive sets in~$\calS$ is exactly~$2=\ell$.
	It follows that~$\calS$ is a solution for~$(\TG,k',\ell)$.
\lqed
\end{proof}

\subsubsection{Backward Direction}
\label{sssec:bd}

\newcommand{\va}{\ensuremath{\alpha}}
\newcommand{\vb}{\ensuremath{\beta}}

In this section we prove the backward direction for the proof of~\cref{prop:whardness}.
We first show that if an instance of \msvc{} computed by \cref{constr:whardness} is a \yes-instance, 
then it is safe to assume that 
two vertices are neither deleted from nor added to a vertex cover in a consecutive step 
(we refer to these solutions as \emph{smooth}, 
see~\cref{def:smooth}).
Moreover, 
a vertex from the vertex set~$C$ is only exchanged with another vertex from~$C$ and, 
at any time, 
there is exactly one vertex from~$C$ contained in the solution 
(similarly to the constructed solution in \cref{lem:whardness-forward}).
We call these (smooth) solutions \emph{one-centered} (\cref{def:onecentered}).
We then prove that there must be a phase~$t$ for any one-centered solution where at least~$\binom{k}{2}$ times a vertex from ``past'' sets~$U_{t'}$, $t'\leq t$ is deleted.
This at hand, 
we prove that such a phase witnesses a clique of size~$k$.

The fact that a solution needs to contain at least one vertex from~$C$ at any time immediately follows from the fact 
that there is either an edge between two vertices in~$C$ or there is a vertex in~$C$ which is the center of a star with~$k'+1$ leaves.
\begin{observation}
 \label{obs:always1}
 Let~$(\TG,k',\ell)$ from~\cref{constr:whardness} be a \yes-instance.
 Then for each solution~$(S_1,\ldots,S_\tau)$ it holds true that~$|S_i\cap C|\geq 1$ for all~$i\in\set{\tau(\TG)}$.
\end{observation}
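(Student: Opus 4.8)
The plan is to prove \cref{obs:always1} by a direct structural argument showing that in every layer~$G_i$, there is an edge that can only be covered by a vertex from~$C$, which immediately forces any vertex cover~$S_i$ to contain at least one vertex of~$C$. First I would split the analysis according to the parity of~$i$, since the construction treats odd and even layers differently.

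For \textbf{odd}~$i$, \cref{constr:whardness}(1) makes~$c_i$ the center of a star with~$k'+1$ leaves. I would argue that these~$k'+1$ star edges cannot all be covered without using~$c_i$: otherwise one would need at least~$k'+1$ distinct leaf vertices in~$S_i$, contradicting~$|S_i|\leq k'$. Hence~$c_i\in S_i$, giving~$|S_i\cap C|\geq 1$. For \textbf{even}~$i$, I would invoke \cref{const:4}, which adds the edge~$\{c_i,c_{i+1}\}$ to~$G_i$. Since both endpoints of this edge lie in~$C$, covering it forces~$c_i\in S_i$ or~$c_{i+1}\in S_i$, again yielding~$|S_i\cap C|\geq 1$. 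Combining both cases covers all~$i\in\set{\tau(\TG)}$.

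The key step is the star argument for odd layers: the essential point is that a star with~$k'+1$ leaves has a unique minimum vertex cover of size one (its center), and any vertex cover avoiding the center must take all~$k'+1$ leaves, which exceeds the budget~$k'$. I would state this as the crux, since everything else is a routine case check. I do not expect any serious obstacle here; the main thing to be careful about is confirming that the edges cited in each case are genuinely present in the relevant layer~$G_i$ as specified by the construction steps (steps (1) and (\ref{const:4})), and that they are incident only to the intended vertices so that no ``cheaper'' cover exists. The statement holds for \emph{every} solution, so no choice or minimality assumption is needed; the argument is purely about the budget constraint~$|S_i|\leq k'$ together with the forced edges, which makes the proof short and uniform across all layers.
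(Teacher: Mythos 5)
Your proof is correct and is exactly the argument the paper gives (in its one-sentence justification preceding the observation): odd layers force the star center~$c_i$ into any budget-$k'$ cover, and even layers force an endpoint of the edge~$\{c_i,c_{i+1}\}$, both of which lie in~$C$. The only difference is that you spell out the routine details the paper leaves implicit.
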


\noindent
In the remainder of this section 
we denote the vertices which are removed
from the set~$S_{i-1}$ 
and added to the next set~$S_{i}$ in a solution~$\calS = (\dots, S_{i-1},S_i,\dots)$ 
by
\[ \sdt{S_{i-1}}{S_i}\ceq (S_{i-1}\setminus S_i, S_i\setminus S_{i-1}) .\]
If~$S_{i-1}\setminus S_i$ or~$S_i\setminus S_{i-1}$ 
have size one, then we will omit
the brackets of the singleton.

\begin{definition}
 \label{def:smooth}
 A solution $\calS = (S_1,\ldots,S_\tau)$ for $(\TG,k',\ell)$ from \cref{constr:whardness} is \emph{\smooth}
 if 
  for all $i \in \{2,\ldots,\tau\}$ we have
  $|S_{i-1}\setminus S_{i}|\leq 1$ and $|S_{i-1}\setminus S_{i}|\leq 1$.
\end{definition}

\noindent
In fact,
if there is a solution,
then there is also a smooth solution.

\begin{observation}
 \label{obs:smooth}
 Let~$(\TG,k',\ell)$ from~\cref{constr:whardness} be a \yes-instance.
 Then there is a \smooth{} solution~$(S_1,\ldots,S_\tau)$.
\end{observation}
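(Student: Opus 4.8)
The plan is to show that any solution can be transformed into a smooth one without violating feasibility. The key definition to exploit is that smoothness requires $|S_{i-1} \setminus S_i| \leq 1$ and $|S_i \setminus S_{i-1}| \leq 1$ for every consecutive pair, whereas an arbitrary solution only guarantees $\sydic{S_{i-1}}{S_i} \leq \ell = 2$. The crucial observation is that $\ell = 2$, so the only way smoothness can fail is if for some $i$ we have $S_{i-1} \setminus S_i = \{a, b\}$ and $S_i \setminus S_{i-1} = \emptyset$ (two vertices removed, none added), or symmetrically $S_i \setminus S_{i-1} = \{a, b\}$ and $S_{i-1} \setminus S_i = \emptyset$ (two added, none removed). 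These are the only violating patterns consistent with $\sydic{S_{i-1}}{S_i} \leq 2$ that do not already have $|S_{i-1}\setminus S_i| \leq 1$ and $|S_i \setminus S_{i-1}| \leq 1$.

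First I would invoke \cref{obs:largesolutions} to assume that the solution satisfies $k'-1 \leq |S_i| \leq k'$ for all $i$, which controls how the sizes of consecutive vertex covers can differ. Then I would argue that if $\sydic{S_{i-1}}{S_i} = 2$ and both differences are singletons, the pair is already smooth at that transition; the problematic transitions are exactly those where one side of the symmetric difference has size two and the other is empty, forcing a size change of $2$ between consecutive covers. The idea is to ``spread out'' such a double change across an intermediate step. For instance, if two vertices $a, b$ are removed going from $S_{i-1}$ to $S_i$, I would look for a nearby transition where the solution size increases again, and reroute the removal of one of the two vertices to happen one step later (or earlier), so that at no single step do we remove two vertices at once. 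This is analogous to the shifting argument already used in the proof of \cref{obs:largesolutions}, where vertices are added to or removed from a contiguous block of layers to rebalance the sequence.

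The main technical content is a local exchange/shifting argument: I would pick a maximal solution under some potential (e.g.\ minimizing the total number of non-smooth transitions, or lexicographically maximizing sizes as in \cref{obs:largesolutions}), then show that any remaining double-removal or double-addition can be eliminated by postponing or advancing one of the two changed vertices to an adjacent layer. The key point to verify is that the modified sets remain valid vertex covers of their respective layers and that the symmetric-difference bound $\ell = 2$ is preserved at every transition after the modification. Since the shifted vertex was already in the cover at the neighboring layer (or can be safely kept there because keeping an extra vertex never uncovers an edge), feasibility as a vertex cover is preserved, and the symmetric difference at the touched transitions stays within $\ell$.

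The hard part will be ensuring that the shifting terminates and does not create new violations elsewhere, i.e.\ that fixing one double-change does not cascade into another. I expect to handle this with a carefully chosen monotone measure (such as the sum over all transitions of $\max(0, |S_{i-1}\setminus S_i| - 1) + \max(0, |S_i \setminus S_{i-1}| - 1)$) that strictly decreases under each local modification, or by an extremal choice of solution for which no violating transition can exist. Establishing that this measure genuinely decreases, rather than merely being relocated, is the delicate step; the rest is routine verification that the vertex-cover property and the bound $\ell = 2$ survive each rewrite.
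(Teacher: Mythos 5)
You have in fact finished the proof by the middle of your second paragraph; everything after that is unnecessary. Once you invoke \cref{obs:largesolutions} to obtain a solution with $k'-1 \leq |S_i| \leq k'$ for all $i$, consecutive covers differ in size by at most $1$. As you yourself note, with $\ell = 2$ the only transitions that violate smoothness are those with $|S_{i-1}\setminus S_i| = 2$ and $S_i\setminus S_{i-1} = \emptyset$ (or symmetrically), and these force $\bigl||S_i|-|S_{i-1}|\bigr| = 2$. That is already a contradiction with the size bound, so no violating transition exists and the solution from \cref{obs:largesolutions} is smooth as is. This direct two-line argument is exactly the paper's proof.

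The shifting/potential-function machinery of your last two paragraphs is therefore dead weight, and it is also the one part of your proposal that is not actually carried out: you explicitly defer the proof that the local rewrites terminate and do not cascade, which would be the only nontrivial content of that route. Had the direct size argument not been available, that deferred step would constitute a genuine gap (local exchanges of this kind can easily push a violation one layer over rather than remove it, so the monotone measure really would have to be exhibited and verified). As it stands, simply delete the second half and conclude from the size bound.
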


\begin{proof}
 By \cref{obs:tau2oneedgelayer}, we know that there is a solution~$\calS=(S_1,\ldots,S_\tau)$ such that $|S_1| = k'$ and $k'-1 \leq |S_i| \leq k'$ for all $i \in \{1,\ldots,\tau\}$.
 Hence, for all~$i \in \{2,\ldots,\tau\}$ it holds true that~$\big||S_i|-|S_{i-1}|\big|\leq 1$.
 It follows that $|S_{i-1}\setminus S_{i}|\leq 1$ and $|S_{i-1}\setminus S_{i}|\leq 1$,
 and thus,
 $\calS$~is a \smooth{} solution.
\lqed
\end{proof}

\noindent
Our next goal is to prove the existence of the following type of solutions.

\begin{definition}
 \label{def:onecentered}
 A \smooth{} solution $\calS = (S_1,\ldots,S_\tau)$ for $(\TG,k',\ell)$ from \cref{constr:whardness} is \emph{one-centered}
 if 
 \begin{compactenum}[(i)]
  \item for all $i \in \{1,\ldots,\tau\}$ it holds true that~$|S_i \cap C| = 1$, and
  \item for all $i \in \{2,\ldots,\tau\}$ and $\sdt{S_{i-1}}{S_{i}} = (\va,\vb)$  it holds true that~$\va \in C \iff \vb \in C$.
 \end{compactenum}
\end{definition}

\noindent
We now show that if the output instance of \cref{constr:whardness} is a \yes-instance,
then there is a solution where~$c_1 \in C$ is the only vertex from~$C$ in the first set of the solution.
\begin{lemma}%
 \label{lem:s1only1}
 Let~$(\TG,k',\ell)$ from~\cref{constr:whardness} be a \yes-instance.
 Then there is a \smooth{} solution~$(S_1,\ldots,S_\tau)$ for~$(\TG,k',\ell)$ such that $S_1\cap C=\{c_1\}$.
\end{lemma}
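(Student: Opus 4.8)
The plan is to start from an arbitrary smooth solution and repeatedly clean it up until its first set contains only the center $c_1$. First I would invoke \cref{obs:smooth} together with \cref{obs:largesolutions} to fix a \smooth{} solution $\calS=(S_1,\dots,S_\tau)$ with $|S_1|=k'$ and $k'-1\le |S_i|\le k'$ for all $i$. Since in $G_1$ the vertex $c_1$ is the center of a star with $k'+1$ leaves (item~(1) of \cref{constr:whardness} for $i=1$) and $k'+1>k'$, every vertex cover of $G_1$ of size at most $k'$ must contain $c_1$; hence $c_1\in S_1$. Inspecting \cref{constr:whardness}, no center other than $c_1$ is incident to an edge of $G_1$: the chain edges from item~(4) and the edges from item~(5) appear only in even layers or in layers of index at least $2$, and the stars of item~(2) as well as the matching edges of item~(3) involve no vertex of $C$. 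Thus every $c_i\in S_1$ with $i\ge 2$ is useless for covering $G_1$, and the task reduces to exchanging such $c_i$ out of $S_1$ without destroying smoothness.

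The core is a local exchange argument driven by the single vertex in $S_2\setminus S_1$, which has size at most $1$ because $\calS$ is \smooth{}. Suppose $c_i\in S_1$ with $i\ge 2$. If $S_2\subseteq S_1$, then $\sydic{S_1}{S_2}\le 1$ and deleting $c_i$ from $S_1$ keeps the symmetric difference at most $2=\ell$, so $(S_1\setminus\{c_i\},S_2,\dots,S_\tau)$ is again a \smooth{} solution. If $S_2\setminus S_1=\{y\}$ with $y\notin C$, then replacing $S_1$ by $(S_1\setminus\{c_i\})\cup\{y\}$ leaves $\sydic{S_1}{S_2}$ unchanged (the vertex $c_i$ newly contributing to the difference is offset by $y$ no longer contributing) while strictly lowering $|S_1\cap C|$. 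The same works when $S_2\setminus S_1=\{c_j\}$ is a center but $c_i\notin S_2$, since then $c_i$ is exactly the vertex leaving at the step $S_1\to S_2$ and dropping it only shrinks the symmetric difference to $\sydic{S_1}{S_2}=1$. In all these cases I obtain a \smooth{} solution whose first set contains one fewer center, and iterating drives $S_1\cap C$ down to $\{c_1\}$.

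The main obstacle is the remaining configuration: $c_i\in S_1\cap S_2$ with $i\ge2$ (so $c_i$ is \emph{carried}) while the step $S_1\to S_2$ is a full swap that deletes some $w\neq c_i$ and inserts a second center $c_j$, so that $\sydic{S_1}{S_2}=2$ is already saturated and no slot is free to absorb the re-entry of $c_i$. Here a purely local deletion fails, because erasing $c_i$ from the prefix of its run re-creates it as a ``new'' element at the boundary where $c_i$ first becomes needed, pushing a symmetric difference to $3$. To handle this I would recast the whole argument as choosing, among all \smooth{} solutions, one minimizing the total number of center occurrences $\sum_{i=1}^{\tau}|S_i\cap C|$ (which is at least $\tau$ by \cref{obs:always1}), and then follow the run of the carried center $c_i$: since its active window is only the two layers $G_{i-1},G_i$ (for odd $i$) or $G_i,G_{i+1}$ (for even $i$), the center $c_i$ is idle on the long prefix preceding this window, and at the step where $c_i$ eventually leaves the solution its deletion is free. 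The delicate point---which is precisely the coupling between center swaps and the budget $\ell=2$ forced by the odd layers each requiring their own center---is to show that carrying such an idle center out of $S_1$ always admits a reduction, contradicting minimality; making this case distinction rigorous is where the real work lies, after which $|S_1\cap C|=1$ and, since $c_1$ is forced, $S_1\cap C=\{c_1\}$.
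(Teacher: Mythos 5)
Your setup and your three ``easy'' exchanges at the step $S_1\to S_2$ are correct, and they do mirror fragments of the paper's argument; but the case you explicitly defer --- a center $c_i\in S_1$ that is carried forward because it is genuinely needed at layer $i$, while every intermediate swap is already saturated --- is precisely where the entire content of the lemma lies, and your sketch for it does not close. Two concrete problems. First, the obstruction is not local to the step $S_1\to S_2$: if $c_i$ sits in $S_1,S_2,\ldots,S_i$ and is required in $G_i$ (as a star center, or to cover $\{c_{i-1},c_i\}$), then no modification confined to the first transition can expel it from $S_1$; one must edit the whole prefix $S_1,\ldots,S_{i-1}$ and control the boundary at $i-1\to i$. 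The paper does exactly this: writing $\sdt{S_{i-1}}{S_i}=(\alpha,\beta)$, it either removes $c_i$ from the prefix and $\beta$ from the suffix (when $\beta=c_r$ with $r<i$), or substitutes $\beta$ for $c_i$ throughout the prefix (when $\beta\notin C$ or $\beta=c_r$ with $r>i$), so that the critical transition becomes $(\alpha,c_i)$ and smoothness is preserved. Second, your proposed potential $\sum_{t=1}^{\tau}|S_t\cap C|$ does not decrease under the substitution step when the incoming vertex $\beta$ is itself a center: you trade one center occurrence in the prefix for another, so minimality of that sum yields no contradiction. This is why the paper uses a \emph{two-level} extremal choice --- first minimize $|S_1\cap C|$, then, among those solutions, maximize the smallest index $i$ with $c_i\in S_1\setminus\{c_1\}$ --- and the substitution is shown to strictly improve the secondary criterion exactly in the subcase ($\beta=c_r$, $r>i$) where the primary one stalls. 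Your ``follow the run until $c_i$ leaves, where deletion is free'' idea addresses the wrong end of the run: the damage occurs where the shortened prefix meets the first layer in which $c_i$ is still present, not where $c_i$ finally exits.

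So the proposal is not a complete proof: the acknowledged ``real work'' is the proof, and the replacement measure you suggest is provably insufficient to drive the induction. To repair it along your own lines you would have to (i) lift the exchange from the single step $S_1\to S_2$ to the whole prefix up to the layer where the offending center first becomes incident to an edge, and (ii) add a tie-breaking extremal criterion that strictly improves when the swap partner at that boundary is another center --- at which point you have essentially reconstructed the paper's proof of \cref{lem:s1only1}.
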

\begin{proof}
 Suppose towards a contradiction that such a \smooth{} solution does not exist.
 That is, in every \smooth{} solution
 the first vertex cover $S_1$ contains at least two vertices from~$C$ 
 (due to~\cref{obs:always1}, 
 $S_1$~must contain at least one).
 Let $\solset$ be the set of \smooth{} solutions with~$|S_1\cap C|$ being minimal, 
 where $S_1$ is the first vertex cover.
 Let~$\mathcal{S}=(S_1,\ldots,S_\tau)\in\solset$ be a \smooth{} solution such that 
 the value~$i\ceq \min\{j\in\set{\tau}\mid c_j\in S_1\setminus\{c_1\}\}$
 is maximal.
 Let~$\mathcal{S'}=(S_1',\ldots,S_\tau')$ be initially~$\calS$.
 
 Suppose there is a~$j\in\set{i-1}$ such that~$\sdt{S_{j}}{S_{j+1}}=(c_i,\va)$. 
 Let~$j'\ceq \min\{j\in\set{i-1} \mid \sdt{S_{j}}{S_{j+1}}=(c_i,\va)\}$ be the smallest among them.
 Then, set~$S_q'\ceq S_q\setminus \{c_i\}$ for all~$q\in\set{j'-1}$  
 to get a feasible solution (note that~$\sdt{S_{j'-1}'}{S_{j'}'}=(\emptyset,\va)$ is feasible since~$|S_{j'-1}'|\leq k-1$).
 This contradicts the minimality of~$\mathcal{S}$ regarding~$|S_1\cap C|$.
 
 Hence, 
 suppose that there is no such~$j$, 
 that is,
 there is no~$j\in\set{i-1}$ such that~$\sdt{S_{j}}{S_{j+1}}=(c_i,\va)$.
 If~$S_i\setminus\{c_i\}$ is a vertex cover of layer~$G_i$, 
 then setting~$S_q'\ceq S_q\setminus\{c_i\}$, 
 for all~$q\in\set{p}$ with~$p\ceq \max\{p'\in\set{\tau}\mid \forall q\in\set{p}:\: c_i\in S_q\}$, 
 yields a feasible solution.
 This contradicts the minimality of~$\mathcal{S}$ regarding~$|S_1\cap C|$.
 
 Finally, 
 suppose that there is no~$j\in\set{i-1}$ such that~$\sdt{S_{j}}{S_{j+1}}=(c_i,\va)$ 
 (and hence~$c_i\in S_i$)
 and~$S_i\setminus\{c_i\}$ is no vertex cover of layer~$G_i$.
 Let~$\sdt{S_{i-1}}{S_i}=(\va,\vb)$ for some~$\va,\vb$ 
 (each being possibly the empty set).
 Then for all~$q\in\set{i-1}$ do the following
 (we distinguish two cases):
 
 \begin{compactdesc}
    \item[\emph{Case 1: $\vb= c_r$ with $r<i$}.]
    Set $S_q '\ceq S_q\setminus \{c_i\}$ and
    $S_{q'}'\ceq S_{q'}\setminus\{\vb\}$ (i.e.~$\sdt{S_{i-1}'}{S_i'}=(\va,c_i)$) 
    for all~$q'\in\set[i]{p}$ with~$p\ceq \max\{p'\in\set{\tau}\mid \forall p''\in\set[i]{p'}:\: \vb\in S_{p''}\}$.
    This contradicts the minimality of~$\mathcal{S}$ regarding~$|S_1\cap C|$.
    
    \item[\emph{Case 2: $\vb= c_r$ with $r>i$, or $\vb\not\in C$}.]
    Set $S_q'\ceq (S_q\setminus \{c_i\})\cup \{\vb\}$ (note that $S_i'=S_i$ and hence~$\sdt{S_{i-1}'}{S_i'}=(\va,c_i)$).
    Note that if there is a~$p\in\set[2]{i-1}$ with~$\sdt{S_{p-1}}{S_p}=(\vb,x)$ or $\sdt{S_{p-1}}{S_p}=(x,\vb)$, 
    then we get~$\sdt{S_{p-1}'}{S_p'}=(\emptyset,x)$ and  $\sdt{S_{p-1}'}{S_p'}=(x,\emptyset)$, respectively.
    In the case of $\vb= c_r$ with $r>i$,
    this contradicts the fact that~$c_i$ is maximal regarding~$i$.
    In the case of $\vb\not\in C$,
    this contradicts the minimality of~$\mathcal{S}$ regarding~$|S_1\cap C|$.
 \end{compactdesc}
 In every case,
 we obtain a contradiction,
 concluding the proof.
 \lqed
\end{proof}

\noindent
Next we show that there are solutions such that 
whenever we remove a vertex in $C$ from the vertex cover,
then we simultaneously add another vertex from~$C$ to the vertex cover. 
Formally, 
we prove the following.
\begin{lemma}%
	\label{lem:center-swaps}
 Let~$(\TG,k',\ell)$ from~\cref{constr:whardness} be a \yes-instance.
 Then there is a \smooth{} solution~$(S_1,\ldots,S_\tau)$ with
 $S_1\cap C=\{c_1\}$ 
 such that 
 for all~$i\in\set{\tau}$ with $\sdt{S_{i-1}}{S_i}=(\va,c)$ and~$c\in C$ we also have~$\va\in C$.
\end{lemma}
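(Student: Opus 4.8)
The plan is to mirror the minimal-counterexample surgery used in the proof of~\cref{lem:s1only1}. By~\cref{lem:s1only1} there already is a \smooth{} solution with $S_1\cap C=\{c_1\}$, so it suffices to show that among all such solutions at least one has the claimed ``center-swap'' property. I would argue by contradiction and pick, among all \smooth{} solutions $\calS=(S_1,\ldots,S_\tau)$ of $(\TG,k',\ell)$ with $S_1\cap C=\{c_1\}$, one that minimizes the potential $\Phi(\calS)\ceq\sum_{i=1}^{\tau}|S_i\cap C|$, i.e.\ the total number of center vertices used over all layers. The first observation is that a step $i$ with $\sdt{S_{i-1}}{S_i}=(\va,c)$, $c\in C$, $\va\notin C$ (allowing $\va=\emptyset$) is exactly a step at which $|S_i\cap C|=|S_{i-1}\cap C|+1$, and in a \smooth{} solution this is the only way the center count can increase. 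Hence, letting $i$ be the first such ``bad'' step, \cref{obs:always1} together with $S_1\cap C=\{c_1\}$ forces the center count to be constantly $1$ up to layer $i-1$, so $S_{i-1}\cap C=\{c'\}$ for a single center $c'$ and $S_i\cap C=\{c',c\}$.

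The core of the argument is then a local surgery that strictly decreases $\Phi$, contradicting minimality. Let $[i,p]$ be the maximal block of consecutive layers containing $c$ (so $c$ is removed in step $p+1$, if at all). I would distinguish whether $c$ is actually needed to cover $G_i$. If $S_i\setminus\{c\}$ is still a vertex cover of $G_i$, then $c$ was introduced prematurely: I delete $c$ from an initial segment $S_i,\ldots,S_{q-1}$ of the block, where $q$ is the first layer in $[i,p]$ at which $c$ becomes indispensable (and if no such $q$ exists, I delete $c$ from the whole block and drop the removal in step $p+1$ as well). If instead $c$ is needed for $G_i$, then by the structure of \cref{constr:whardness}---in every layer the center-incident edges are coverable by a single suitably chosen center (an odd layer forces its star center, while in an even layer the center's incident edges are covered either by that center or by the at most three incident non-center vertices $e_i,v,w$)---the other center $c'$ is redundant in $G_i$, so I remove $c'$ from the maximal block in which it lies (equivalently, I turn step $i$ into the swap $\sdt{S_{i-1}}{S_i}=(c',c)$ and propagate). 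In either case the modified sequence uses one fewer center on a non-empty set of layers, so $\Phi$ strictly decreases, and it remains to check that it is still a \smooth{} solution.

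The main obstacle is precisely this verification of feasibility \emph{and} \smooth{}ness---in particular the constraint $\sydic{S_{j-1}}{S_{j}}\le\ell=2$---at the two endpoints of the modified block, since deleting a center from an interval alters the symmetric differences exactly at its boundary steps. This is where the construction has to be exploited carefully: because a single center always suffices to cover a layer's center-incident edges, a center only ever has to be \emph{introduced one step before it is genuinely required}, so the reinsertion layer $q$ can be chosen to be a step at which no other vertex is exchanged, keeping the boundary symmetric difference within $\ell$. The genuinely delicate points, requiring the same kind of case distinction as in~\cref{lem:s1only1}, are the interaction of this reinsertion with the simultaneous $U^{t}\to U^{t+1}$ exchanges happening in the same phase, and the subcase in which the redundant center $c'$ is itself the forced star center of an odd layer inside its block and therefore cannot simply be dropped there; in that subcase the removal of $c'$ must be stopped just before that odd layer and the swap rerouted, which is the fiddliest part of the proof.
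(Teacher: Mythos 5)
Your global strategy---minimize the potential $\Phi(\calS)=\sum_{i=1}^{\tau}|S_i\cap C|$ over \smooth{} solutions with $S_1\cap C=\{c_1\}$, note via \cref{obs:always1} that a ``bad'' step is exactly a step where the center count rises from $1$ to $2$, and then perform a local surgery that strictly decreases $\Phi$---is a sound and arguably cleaner skeleton than the paper's argument, which instead fixes a lexicographic extremal solution (first maximizing the index $i$ of the first bad step, then the index $q$ of the added center $c_q$) and contradicts one of the two maximalities in each case. The problem is that the surgeries, which are where essentially all of the paper's work lies, are not carried out, and the one concrete claim you offer to resolve the main obstacle does not hold. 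In your Case~A you delete $c$ from $S_i,\dots,S_{q-1}$ and must reinsert it at the first layer $q$ of the block where $c$ is indispensable, asserting that $q$ ``can be chosen to be a step at which no other vertex is exchanged.'' But $q$ is dictated by \cref{constr:whardness} (e.g.\ $c=c_r$ with $r$ even becomes indispensable at layer $r$ as soon as not all of $e_j,v,w,c_{r+1}$ lie in $S_r$), whereas whether the given solution performs, say, a $U^{t}\!\to U^{t+1}$ exchange at that very step is entirely up to the adversarial solution; if both happen, the modified step adds two vertices and violates $\ell=2$. The paper's way around this is not to choose a quiet layer but to make the reinsertion a \emph{substitution}: it re-adds the center in place of the vertex that the original solution adds at that step (turning $\sdt{S_{q-1}}{S_q}=(\vb,d)$ into $(\vb,c_q)$) and then propagates the removal of $d$ through $d$'s block---or, in the cases where the deleted center is provably dispensable in every later layer of its block (which is what the case split on the parity of $i$ and on $q<i$, $q=i$, $q>i$ establishes), it deletes it from the entire remainder of the block and never reinserts.

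The same gap recurs in your Case~B: the subcase in which $c'$ is forced later inside its block is exactly where a naive deletion fails, and ``stopped just before that odd layer and the swap rerouted'' does not specify a rerouting that keeps every boundary step within symmetric difference $2$. Two further points need care: the deletion of $c'$ must not touch layers before $i$ (there $c'$ is the only center, so \cref{obs:always1} would be violated); and once you turn step $i$ into the swap $(c',c)$ you retain the vertex $\va$ that was originally removed, so ``propagate'' must also track the later step at which the original solution re-adds or re-removes $\va$---this is precisely the bookkeeping in the paper's Cases 1.2, 1.3 and 2. As it stands, the proposal correctly identifies where the difficulties are but leaves the decisive steps unproven, and its stated patch for the reinsertion problem is false in general.
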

\begin{proof}
    Suppose towards a contradiction the contrary.
    That is, 
    let for every \smooth{} solution $(S_1,\dots,S_\tau)$ exist an~$i\in\set{\tau}$ with $\sdt{S_{i-1}}{S_i}=(\va,c)$ and~$c\in C$ and $\va\not\in C$.
    Let $\solset$ be the non-empty (due to \cref{lem:s1only1}) set of \smooth{} solutions $(S_1,\dots,S_\tau)$ with~$|S_1\cap C|=1$.
    Let~$\solset'\subseteq\solset$ be the set of \smooth{} solutions that maximizes the first index~$i$ with~$\sdt{S_{i-1}}{S_i}=(\va,c_q)$ with~$c_q\in C$ and~$\va\not\in C$.
    Among those solutions, 
    consider $\calS = (S_1,\dots,S_\tau) \in \solset'$ to be the one with~$q$ being maximal.
    Note that due to~\cref{obs:always1},
    we have that~$|S_{i-1}\cap C|\geq 1$.
    Let~$S_j'\ceq S_j$ for all~$j\in\set{\tau}$.

    \begin{compactdesc}
    \item[\emph{Case 1: $i>1$ is odd}.]
    Since~$c_i$ is the center of a star in layer~$i$, 
    $c_i$ has to be in~$S_i$.
    We distinguish three subcases regarding the relation of~$q$ and~$i$,
    that is,
    the cases of~$q$ being smaller,
    equal, 
    or larger than~$i$.
 
        \begin{compactdesc}
            \item[\emph{Case 1.1: $q<i$}.]
            Set~$S_j'= (S_j\setminus\{c_q\})$ 
            (i.e.,
            $\sdt{S_{i-1}'}{S_i'}=(\va,\emptyset)$) 
            for all~$j\in\set[i]{q'}$ with~$q'\ceq \max\{q''\in\set[i]{\tau}\mid \forall j\in\set[i]{q''}:\: c_q\in S_j\}$.
            It follows that~$(S_1',\ldots,S_\tau')$ is again a feasible smooth solution contradicting~$i$ being maximal.

            \item[\emph{Case 1.2: $q=i$}.]
            Then~$c_i\not\in S_{i-1}$,
            and hence~$c_{i-1}\in S_{i-1}$ since the edge~$\{c_{i-1},c_i\}$ must be covered in layer~$G_{i-1}$.
            Set~$S_p'= (S_p\setminus \{c_{i-1}\})\cup \{\va\}$ (i.e.,
            $\sdt{S_{i-1}'}{S_i'}=(c_{i-1},c_q)$) 
            for all~$p\in\{i,\ldots,j\}$,
            where~$j>i$ is minimal such that~$\sdt{S_{j-1}}{S_j}=(c_{i-1},x)$, or~$\tau$ if such a~$j$ does not exist.
            If there is a minimal~$j>i$ such that~$\sdt{S_{j-1}}{S_j}=(c_{i-1},x)$, 
            then set~$S_p'= (S_p\setminus\{\va\})$ 
            (i.e.,
            $\sdt{S_{j-1}'}{S_j'} = (\va,x)$) 
            for all~$p\in\set[j]{q'}$ with~$q'\ceq \max\{q''\in\set[i]{\tau}\mid \forall p\in\set[i]{q''}:\:\va\in S_p\}$.
            Suppose that between~$i$ and~$j$, 
            there are~$j_1$ and~$j_2$ such that~$\sdt{S_{j_1-1}}{S_{j_1}}=(y,\va)$ and $\sdt{S_{j_2-1}}{S_{j_2}}=(\va,y')$.
            Note that~$\sdt{S_{j_1-1}'}{S_{j_1}'}=(y,\emptyset)$ and $\sdt{S_{j_1-1}'}{S_{j_1}'}=(\emptyset,y')$.
            It follows that~$(S_1',\ldots,S_\tau')$ is again a feasible smooth solution,
            contradicting~$i$ being maximal.

            \item[\emph{Case 1.3: $q>i$}.]
            Then~$c_i\in S_{i-1}$.
            Let~$\sdt{S_{q-1}}{S_q}=(\vb,d)$.
            We distinguish into two cases regarding~$d$.
            
            \begin{compactdesc}
                \item[\emph{Case 1.3.1: $d=c_p$ with~$p<q$.}]
                Set~$S_j'= S_j\setminus \{c_q\}$ (i.e.,
                $\sdt{S_{i-1}'}{S_i'}=(\va,\emptyset)$) 
                for all~$j\in\set[i]{q-1}$.
                Moreover,
                set~$S_j'=(S_j\setminus \{d\})\cup\{c_q\}$ (i.e.,
                $\sdt{S_{q-1}'}{S_q'}=(\vb,c_q)$) 
                for all~$j\in\set[q]{q'}$ with~$q'\ceq \max\{q''\in\set[q]{\tau}\mid \forall j\in\set[q]{q''}:\:  d\in S_j\}$.
                \item[\emph{Case 1.3.2: $d\not\in C$ or if~$d=c_p$, then~$p>q$.}]
                Set~$S_j'= (S_j\setminus \{c_q\})\cup \{d\}$ (i.e.,
                $\sdt{S_{i-1}'}{S_i'}=(\va,d)$) 
                for all~$j\in\set[i]{q-1}$.
                Moreover,
                set~$S_j'= S_j \cup \{c_q\}$ (i.e.,
                $\sdt{S_{q-1}'}{S_q'}=(\vb,c_q)$ or~$\sdt{S_{q-1}'}{S_q'}=(\vb,\emptyset)$) for all~$j\in\set[q]{q'}$ with~$q'\ceq \max\{q''\in\set[q]{\tau}\mid \forall j\in\set[q]{q'} :\: c_q\in S_j\}$.
            \end{compactdesc}

            In either case,
            we have that~$(S_1',\ldots,S_\tau')$ is a feasible solution contradicting either~$i$ being maximal~($d\not\in C$, or $d=c_p$ with~$p<q$)~or $q$ being maximal~($d=c_p$ with~$p>q$).
        \end{compactdesc}   

        \item[\emph{Case 2: $i>1$ is even}.]
        Then~$c_{i-1}\in S_{i-1}$ and~$c_q\in\{c_i,c_{i+1}\}$.
        Set~$S_j'\ceq (S_j\setminus\{c_{i-1}\})\cup\{\va\}$ 
        (i.e.,
        $\sdt{S_{i-1}'}{S_i'}=(c_{i-1},c_q)$) 
        for all~$j\in\set[i]{q'}$ with~$q'\ceq \max\{q''\in\set[i]{\tau}\mid \forall j\in\set[i]{q''} :\: c_{i-1}\in S_j\}$.
        Then $(S_1',\ldots,S_\tau')$ is a feasible solution contradicting~$i$ being maximal.
    \lqed
    \end{compactdesc}
\end{proof}
Combining \cref{obs:always1,lem:center-swaps}, 
we can assume that for every given \yes-instance, 
there is a solution which is one-centered. 

\begin{corollary}
	\label{lem:one-centered}
 Let~$(\TG,k',\ell)$ from~\cref{constr:whardness} be a \yes-instance.
 Then there is a solution~$\calS$ which is one-centered.
\end{corollary}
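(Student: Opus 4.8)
The plan is to show that the smooth solution already produced by \cref{lem:center-swaps} is one-centered, so that no new construction is needed---only verification. Starting from such a solution $\calS=(S_1,\ldots,S_\tau)$ we are handed two facts for free: first, $S_1\cap C=\{c_1\}$, so $|S_1\cap C|=1$; and second, whenever a center vertex is added, i.e.\ $\sdt{S_{i-1}}{S_i}=(\va,c)$ with $c\in C$, the removed vertex $\va$ also lies in $C$. The latter is exactly one of the two directions of the biconditional in condition~(ii) of \cref{def:onecentered}. What remains is to establish condition~(i), that $|S_i\cap C|=1$ for every $i\in\set{\tau}$, together with the converse implication, namely that removing a center forces simultaneously adding a center.

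I would prove all of this by a single induction on $i\in\set{\tau}$, exploiting smoothness so that each transition $\sdt{S_{i-1}}{S_i}=(\va,\vb)$ has $|\va|\leq 1$ and $|\vb|\leq 1$. The base case $|S_1\cap C|=1$ is immediate. For the inductive step, assume $|S_{i-1}\cap C|=1$ and consider the swap $(\va,\vb)$. If $\vb\in C$, then \cref{lem:center-swaps} yields $\va\in C$; since $S_{i-1}$ contains exactly one center, $\va$ must be that center, and exchanging it for $\vb$ keeps $|S_i\cap C|=1$ while condition~(ii) holds at step~$i$. If instead $\vb\notin C$ but $\va\in C$, then $\va$ is the unique center of $S_{i-1}$ and it is deleted without replacement, so $|S_i\cap C|=0$, contradicting \cref{obs:always1}; hence this case cannot occur. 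In the only remaining case $\va,\vb\notin C$, so $|S_i\cap C|=|S_{i-1}\cap C|=1$ and condition~(ii) holds since neither the removed nor the added vertex lies in~$C$. This closes the induction and shows $\calS$ is one-centered.

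The argument is essentially bookkeeping, as the substantive work sits entirely in \cref{lem:center-swaps} and \cref{obs:always1}. The only point requiring care is that \cref{lem:center-swaps} supplies solely the ``adding a center $\Rightarrow$ the removed vertex is a center'' direction; the matching ``removing a center $\Rightarrow$ the added vertex is a center'' direction is never proved directly but is forced by the lower bound $|S_i\cap C|\geq 1$ of \cref{obs:always1} in combination with the maintained invariant $|S_{i-1}\cap C|=1$. Accordingly, the main (and minor) obstacle is to phase the induction so that the invariant $|S_i\cap C|=1$ is available precisely at the moment it is needed to upgrade the $\geq 1$ bound into an exclusion of the bad transition.
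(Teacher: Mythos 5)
Your proposal is correct and follows essentially the same route as the paper, which states the corollary as an immediate combination of \cref{obs:always1} and \cref{lem:center-swaps} without writing out the details; your induction on $i$ is exactly the bookkeeping the paper leaves implicit, including the observation that the missing direction of the biconditional in \cref{def:onecentered}(ii) is forced by the lower bound $|S_i\cap C|\geq 1$ together with the invariant $|S_{i-1}\cap C|=1$.
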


\noindent
In the remainder of this section, for each~$t\in\{1,\ldots,\kappa+1\}$ let the union of all~$U^i$ be denoted by
\[ \smod{U}_t \ceq \textstyle{\bigcup^t_{i=1} U^i} .\]

\noindent
We introduce further notation regarding a one-centered 
solution $\calS \ceq (S_1^1,\ldots,S_{2m+1}^1=S_1^2,\ldots,\dots,S_{1}^\kappa,\ldots,S_{2m+1}^\kappa)$ 
for~$(\TG,k',\ell)$.
Here, $S^t_i$ is the $i$-th set of phase~$t$ and thus the~$(2m(t-1)+i)$-th set of~$\calS$.
  The set 
  \begin{align} Y_{i}^t \ceq \{e_j\in S_i^t\cap E\mid 2j\geq i\} \label{eq:Y}\end{align}
  is the set of vertices $e_j$ from $E$ in $S_i^t$ such that the corresponding layer for $e_j$ in phase $t$ is not before the layer $i$ in phase $t$.
  The set 
  \begin{align} F_{i}^t \ceq \{ j > i \mid \sdt{S^t_{j-1}}{S^t_{j}} = (u,\vb)\text{ with } u \in \smod{U}_t \} \label{eq:F}\end{align}
  is the set of layers from~$\TG$ in phase~$t$ where a vertex from~$\smod{U}_t$ is not carried over to the next layer's vertex cover.
We now show that there is a phase~$t$ where~$|F_1^t| \geq K$.
\begin{lemma}%
	\label{lem:whardness-K-flips}
	Let $\calS = (S_1^1,\ldots,S_{2m+1}^1=S_1^2,\ldots,\dots,S_{1}^\kappa,\ldots,S_{2m+1}^\kappa)$ be a one-centered solution to~$(\TG,k',\ell)$ from~\cref{constr:whardness}.
	Then, there is a $t \in \{1,\dots,\kappa\}$ such that $|F^t_1| \geq K$.
\end{lemma}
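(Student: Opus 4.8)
The plan is to track, across the $\kappa$ phases, how many \emph{accumulated} gadget vertices sit in the cover, and to show by a telescoping/averaging argument that the total number of removals of such vertices is so large that one phase must account for at least $K$ of them. For $t\in\set[1]{\kappa+1}$ write $x_t\ceq |S_1^t\cap\smod{U}_t|$ for the number of vertices of $\smod{U}_t=\bigcup_{i=1}^t U^i$ present in the first cover of phase~$t$ (recall $S_1^{t+1}=S_{2m+1}^t$). Two boundary values are immediate. Since layer $2m(t-1)+1$ makes every vertex of $U^t$ the center of a star (Construction item~(2)), we have $U^t\subseteq S_1^t$; as $\smod{U}_1=U^1$ this gives $x_1=K$. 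At the other end, $S_1^{\kappa+1}=S_{2m+1}^\kappa$ contains at least one center vertex by \cref{obs:always1}, and $C$ is disjoint from every $U^i$, so $x_{\kappa+1}\leq |S_1^{\kappa+1}|-1\leq k'-1=2K+k$.

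The heart of the argument is a per-phase balance equation. Fix $t\in\set{\kappa}$. Since the solution is one-centered and therefore smooth, each transition $S_{j-1}^t\to S_j^t$ inside phase~$t$ changes the cover by at most one vertex, so exactly $|F_1^t|$ of these transitions remove a vertex of $\smod{U}_t$; let $A_t\geq 0$ count those that add one. Comparing the first and last cover of phase~$t$ yields
\[ |S_{2m+1}^t\cap\smod{U}_t|-|S_1^t\cap\smod{U}_t| = A_t-|F_1^t| . \]
I would then rewrite the left-hand side via $x$. Because $U^{t+1}\subseteq S_{2m+1}^t$ (again Construction item~(2)) and $\smod{U}_{t+1}=\smod{U}_t\uplus U^{t+1}$, we get $x_{t+1}=|S_{2m+1}^t\cap\smod{U}_t|+K$, so the displayed equation becomes $|F_1^t|=A_t+x_t-x_{t+1}+K\geq x_t-x_{t+1}+K$.

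Summing this inequality over $t=1,\ldots,\kappa$ telescopes to
\[ \sum_{t=1}^{\kappa}|F_1^t| \;\geq\; \kappa K + x_1 - x_{\kappa+1} \;\geq\; \kappa K + K - (2K+k) \;=\; (\kappa-1)K - k . \]
Finally I would invoke the choice $\kappa=K+k+3$: it satisfies $\kappa>K+k$, which is exactly equivalent to $(\kappa-1)K-k>(K-1)\kappa$. Hence the average of the $|F_1^t|$ strictly exceeds $K-1$, so some phase~$t$ has $|F_1^t|>K-1$, i.e.\ $|F_1^t|\geq K$ by integrality. This is where the otherwise mysterious slack in $\kappa$ is used.

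I expect the main obstacle to be the bookkeeping in the balance step rather than any deep idea: one must justify that $|F_1^t|$ counts exactly the $\smod{U}_t$-removals among the transitions of phase~$t$ (using smoothness, so that each removal is a single vertex and no transition both adds and removes), and handle the shared boundary layer $S_{2m+1}^t=S_1^{t+1}$ so that the transitions are partitioned among phases without gaps or double-counting. The accompanying arithmetic is the other delicate point: the averaging yields only a bound approaching $K$, and it is precisely the gap $\kappa=K+k+3>K+k$ that upgrades ``average $>K-1$'' to ``some term $\geq K$''.
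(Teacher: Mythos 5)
Your proof is correct and is essentially the paper's argument in telescoped form: the paper assumes every $|F_1^t|<K$ and shows inductively that $|S_1^i\cap\smod{U}_{i-1}|\geq i-1$ grows by one per phase until $|S_1^{\kappa+1}|>k'$, which is exactly your balance inequality $|F_1^t|\geq x_t-x_{t+1}+K$ summed over $t$ and read contrapositively. Both rest on the same accounting of accumulated $\smod{U}$-vertices against the budget~$k'$, with the size of $\kappa$ supplying the needed slack.
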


\begin{proof}
	Suppose towards a contradiction the contrary,
	that is, 
	that for all~$t \in \{1,\dots,\kappa\}$ it holds true that~$|F^t_1| < K$.
	Then, for each~$i\in\{2,\ldots,\kappa+1\}$, we have that~$|S_1^i\cap \smod{U}_{i-1}|\geq i-1$.
	Since~$\calS$ is a solution, we know that~$U^{\kappa+1}\subseteq S_1^{\kappa+1}$ and hence~$|S_1^{\kappa+1}\cap U^{\kappa+1}|=K$.
	Thus, 
	we have that
	\[|S_1^{\kappa+1}|\geq |S_1^{\kappa+1}\cap U^{\kappa+1}|+|S_1^{\kappa+1}\cap \smod{U}_{\kappa}|\geq K+\kappa-1=2K+k+2> k',\]
	contradicting~$\calS$ being a solution.
\lqed
\end{proof}

\noindent
In the remainder of this section,
the value 
\begin{align} 
    f_i^t \ceq |S^t_i \cap \smod{U}_{\kappa+1}| - K  \label{eq:f}
\end{align}
describes the number of vertices in~$\smod{U}_{\kappa+1}$ which we could remove from $S^t_i$ such that $S^t_i$~is still a vertex cover for~$G_{2m(t-1)+i}$ (the $i$-th layer of phase~$t$).
Observe that $f_i^t \geq 0$ for all~$t \in \{1,\dots,\kappa\}$ and all~$i \in \{1,\dots,2m+1\}$,
because we need in each layer exactly~$K$~vertices from~$\smod{U}_{\kappa+1}$ in the vertex cover.

We now derive an invariant which must be true in each phase. 
\begin{lemma}%
	\label{lem:whardness-invariant}
	Let $\calS = (S_1^1,\ldots,S_{2m+1}^1=S_1^2,\ldots,\dots,S_{1}^\kappa,\ldots,S_{2m+1}^\kappa)$ be a one-centered solution to~$(\TG,k',\ell)$ from~\cref{constr:whardness}.
	Then, 
	for all $t \in \{1,\ldots,\kappa\}$ and all~$i \in \{1,\dots,2m+1\}$, 
	it holds true that~$|F^t_i| - |Y^t_i| \leq f^t_i$.
\end{lemma}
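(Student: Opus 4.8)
The plan is to fix the phase $t$ and prove the inequality by \emph{downward} induction on $i$, from $i=2m+1$ down to $i=1$. For the base case $i=2m+1$, phase $t$ contains no layer of index larger than $2m+1$, so the condition $j>2m+1$ defining $F^t_{2m+1}$ is vacuous and $|F^t_{2m+1}|=0$; the claim then reads $-|Y^t_{2m+1}|\le f^t_{2m+1}$, which holds because $|Y^t_{2m+1}|\ge 0$ and, as already observed, $f^t_{2m+1}\ge 0$.

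For the step from $i$ to $i-1$ I would read off how the three quantities change across the single smooth transition $\sdt{S^t_{i-1}}{S^t_i}=(\va,\vb)$ (so $|\va|,|\vb|\le 1$). Writing $[\,\cdot\,]$ for the Iverson bracket,
\begin{align*}
 |F^t_{i-1}| = |F^t_i| + [\va\in\smod{U}_t], \qquad f^t_{i-1} = f^t_i + [\va\in\smod{U}_{\kappa+1}] - [\vb\in\smod{U}_{\kappa+1}],
\end{align*}
while $|Y^t_{i-1}|-|Y^t_i|$ depends only on whether $\va$ or $\vb$ lies in $E$ and on the relaxation of the threshold from $2j\ge i$ to $2j\ge i-1$, which can admit exactly one new edge-vertex, namely $e_{(i-1)/2}$, and only for odd $i$. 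Setting $g(i)\ceq|F^t_i|-|Y^t_i|-f^t_i$, the step reduces to deriving $g(i-1)\le 0$ from $g(i)\le 0$, for which it suffices to show $g(i-1)\le g(i)$.

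The crux is a structural fact about one-centered solutions (\cref{lem:one-centered}): every transition that does not swap a center vertex must occur at an \emph{odd} step $i$, and then the unique center carried across the step equals $c_i$ and forces $e_{(i-1)/2}\in S^t_{i-1}\cap S^t_i$. Indeed, by one-centeredness the removed vertex is a center iff the added one is, so a non-center swap leaves the unique center unchanged across step $i$; were $i$ even, this carried-over center (namely $c_{i-1}$, pinned down by the star on $c_{i-1}$ in the odd layer $i-1$) would fail to cover the edge $\{c_i,c_{i+1}\}$ that \cref{constr:whardness} places in $G_i$, a contradiction. Hence $i$ is odd and the carried-over center is $c_i$; since then $c_{i-1}\notin S^t_{i-1}$, the edge-gadget attached to $c_{i-1}$ in the even layer $i-1$ can only be covered by putting its neighbours $e_{(i-1)/2}$, $v$, $w$ into the vertex cover, which yields the claimed membership.

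Granting this fact, I would finish by checking $g(i-1)\le g(i)$ in every case. Center swaps and do-nothing steps leave $F$ and $f$ unchanged and can only enlarge $Y$, so $g$ does not grow. For a non-center swap the step is odd and $e_{(i-1)/2}$ is present in the cover, so the threshold relaxation contributes a $+1$ to $|Y^t_{i-1}|$ (the sole exception being when $e_{(i-1)/2}$ is itself the swapped edge-vertex, which is exactly the situation in which the offending $F$- or $f$-term is absent); this unit pays for the at most one unit by which $|F^t_i|$ may grow (when $\va\in\smod{U}_t$) or by which $f$ may drop (when a fresh $\smod{U}_{\kappa+1}$-vertex is injected). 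The main obstacle, and the technical heart of the lemma, is precisely this bookkeeping: one must match every deletion of a $\smod{U}_t$-vertex and every decrease of $f$ against a prepaid edge-vertex entering $Y$, carefully handling the coincidences (swapped vertex equal to $e_{(i-1)/2}$, stale versus prepaid edge-vertices, empty $\va$ or $\vb$) so that no transition raises the potential $g$. It is this invariant, read at $i=1$, that converts the $\ge K$ deletions guaranteed by \cref{lem:whardness-K-flips} into $\ge K$ edge-vertices in $S^t_1$, and hence into a clique.
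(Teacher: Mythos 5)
Your proposal is correct and is essentially the paper's own argument: your potential $g(i)=|F^t_i|-|Y^t_i|-f^t_i$ is exactly the paper's $\varepsilon_i$, your downward induction is its statement that $\varepsilon_i-\varepsilon_{i-1}\ge 0$ combined with the boundary facts $|F^t_{2m+1}|=|Y^t_{2m+1}|=0$ and $f^t_{2m+1}\ge 0$, and your case check is the paper's \cref{tab:binswap}. If anything, your structural observation that non-center transitions occur only at odd steps and pin $e_{(i-1)/2}$ into $S^t_{i-1}$ supplies the justification for the $Y$-column entries that the table asserts without proof.
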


\begin{proof}
    Let~$t \in \set{\kappa}$ be arbitrary but fixed.
	For all~$i \in \{1,\dots,2m+1\}$ let 
	\[ \varepsilon_i \ceq |F^t_i| - |Y^t_i| - f^t_i.\]
	We claim that $\varepsilon_i - \varepsilon_{i-1} \geq 0$ for all~$i \in \{1,\dots,2m+1\}$.
	Since~$\calS$ is one-centered, 
	in \cref{tab:binswap} all relevant tuples for~$\sdt{S^t_{i-1}}{S^t_{i}}$ are shown.
	As each relevant tuple results in~$\varepsilon_i - \varepsilon_{i-1}\in\{0,1,2\}$,
	the claim follows.
	\begin{table}
   \caption{Overview of all tuples of~$\sdt{S^t_{i-1}}{S^t_{i}}$ relevant in the proof of~\cref{lem:whardness-invariant} 
   and their possible values of~$\varepsilon_i - \varepsilon_{i-1}= |F_{i}^t|-|F_{i-1}^t|-(|Y_{i}^t|-|Y_{i-1}^t|)-(f_{i}^t-f_{i-1}^t)$.
   In the tuples,
   $u$, $v$, and~$e$ represent some vertex from~$\smod{U}_{\kappa+1}$, $V$, and~$E$, 
   respectively.
   }
   \centering
	 \begin{tabular}{@{}ll|lll|l@{}}\toprule
     $\sdt{S^t_{i-1}}{S^t_{i}}$       &  & $|F_{i}^t|-|F_{i-1}^t|$ & $-(|Y_{i}^t|-|Y_{i-1}^t|)$ & $-(f_{i}^t-f_{i-1}^t)$ & $\varepsilon_i - \varepsilon_{i-1}  $ \\ \midrule
    $(u,\vb)$ &$\vb\in E$  & $\in\{-1,0\}$                       & $\in\{0,1\}$                       & 1                  & $\in\{0,1,2\}$ \\
    & $\vb\in \smod{U}_{\kappa+1}$  & $\in\{-1,0\}$                       & 1                       & 0                  & $\in\{0,1\}$ \\
    & $\vb\in V$, $\vb=\emptyset$  & $\in\{-1,0\}$                       & 1                       & 1                  & $\in\{1,2\}$ \\
    $(\va,u)$ & $\va\in E$  & 0                       & $\in\{1,2\}$                       & -1                  & $\in\{0,1\}$ \\
    & $\va\in V$, $\va=\emptyset$  & 0                       & 1                       & -1                  & 0 \\
    $(\va,v)$ & $\va\in E$  & 0                       & $\in\{1,2\}$                       & 0                  & $\in\{1,2\}$ \\
    & $\va\in V$, $\va=\emptyset$   & 0                       & 1                       & 0                  & 1 \\
    $(\va,e)$ & $\va\in V$  & 0                       & 1                       & 0                  & 1 \\
    & $\va\in E$, $\va=\emptyset$  & 0                       & $\in\{0,1\}$                       & 0                  & $\in\{0,1\}$ \\
	  \bottomrule
	 \end{tabular}
   \label{tab:binswap}
	\end{table}
	
	We want to prove that~$\varepsilon_i \leq 0$ for all~$i \in \set{2m+1}$.
    So,
    assume towards a contradiction that there is a $j \in \{1,\dots,2m+1\}$ such that $\varepsilon_j > 0$.
    Since $\varepsilon_i - \varepsilon_{i-1} \geq 0$ for all~$i \in \{1,\dots,2m+1\}$, 
    we have 
    that $\varepsilon_{2m+1} > 0$,
    which is equivalent to~$|F^t_{2m+1}| - |Y^t_{2m+1}| > f^t_{2m+1}$.
    By definition, 
    we have that~$|Y^t_{2m+1}|=0$ (see~\eqref{eq:Y}) and~$|F^t_{2m+1}|=0$ (see~\eqref{eq:F}).
    Moreover, 
    since~$\calS$ is a solution and each vertex cover needs at least $K$ vertices from $\smod{U}_\tau$, 
    we have that $f^t_{2m+1}\geq 0$.
    It follows that 
    $ 0 = |F^t_{2m+1}| - |Y^t_{2m+1}| > f^t_{2m+1} \geq 0$,
    yielding a contradiction.
    \lqed
\end{proof}

Next, we prove that in a phase~$t$ with~$|F_1^t|\geq K$, 
there are at most~$k$ vertices from~$V$ contained in the union of the vertex covers of phase~$t$.

\begin{lemma}%
	\label{lem:whardness-noofvert}
	Let $\calS = (S_1^1,\ldots,S_{2m+1}^1=S_1^2,\ldots,\dots,S_{1}^\kappa,\ldots,S_{2m+1}^\kappa)$ be a one-centered solution to~$(\TG,k',\ell)$ from~\cref{constr:whardness},
	and let~$t \in \{1,\ldots,\kappa\}$ be such that~$|F_1^t| \geq K$.
	Then, $|\bigcup_{i=1}^{2m+1} S^t_i \cap V|\leq k$.
\end{lemma}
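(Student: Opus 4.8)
The plan is to combine a tight budget count at the first layer of phase~$t$ with the monotone potential that was already analysed inside the proof of \cref{lem:whardness-invariant}. Concretely, I would first bound $|S^t_1\cap V|$ from above, then bound the number of transitions inside phase~$t$ at which a \emph{fresh} vertex of~$V$ enters the cover, and finally add the two quantities. The point is that a vertex of~$V$ can only appear in $\bigcup_i S^t_i$ either by being in $S^t_1$ or by being inserted at some transition, and by smoothness each transition inserts at most one vertex.

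For the budget step, recall that $\calS$ is one-centered, so $|S^t_1\cap C|=1$, and that by definition $|S^t_1\cap\smod{U}_{\kappa+1}|=K+f^t_1$, while $|Y^t_1|=|S^t_1\cap E|$ since every $e_j$ satisfies $2j\geq 1$. As $C$, $\smod{U}_{\kappa+1}$, $V$, and $E$ are pairwise disjoint vertex classes, the size bound $|S^t_1|\leq k'=2K+k+1$ yields $|S^t_1\cap V|\leq k'-1-(K+f^t_1)-|Y^t_1|=K+k-|Y^t_1|-f^t_1$ (any star‑leaf vertices occurring in $S^t_1$ only shrink this bound further).

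For the insertion step, set $\varepsilon_i\ceq|F^t_i|-|Y^t_i|-f^t_i$ as in the proof of \cref{lem:whardness-invariant}, where it is shown that $\varepsilon_i-\varepsilon_{i-1}\geq 0$ for every transition inside the phase and that $\varepsilon_{2m+1}=-f^t_{2m+1}\leq 0$. Hence the total growth telescopes to
\[\sum_{i=2}^{2m+1}(\varepsilon_i-\varepsilon_{i-1})=\varepsilon_{2m+1}-\varepsilon_1\leq-\varepsilon_1=|Y^t_1|+f^t_1-|F^t_1|\leq|Y^t_1|+f^t_1-K,\]
using the hypothesis $|F^t_1|\geq K$. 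Now I read off \cref{tab:binswap}: the only transitions inserting a vertex of~$V$ are those of type $\sdt{S^t_{i-1}}{S^t_i}=(\va,v)$ with $v\in V$, and each such row forces $\varepsilon_i-\varepsilon_{i-1}\geq 1$. Since all increments are non-negative, the number of $V$-insertions over the phase is at most the total growth, hence at most $|Y^t_1|+f^t_1-K$.

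Combining the two steps gives
\[\left|\bigcup_{i=1}^{2m+1}S^t_i\cap V\right|\leq|S^t_1\cap V|+\left(|Y^t_1|+f^t_1-K\right)\leq\left(K+k-|Y^t_1|-f^t_1\right)+\left(|Y^t_1|+f^t_1-K\right)=k,\]
as claimed. I expect the main obstacle to be the insertion step: one has to notice that the potential~$\varepsilon_i$ engineered for \cref{lem:whardness-invariant} simultaneously controls the $V$-count, i.e.\ that precisely the $V$-insertion rows of \cref{tab:binswap} carry a strictly positive increment, so that a counting argument over insertions can be charged against the bounded total growth of~$\varepsilon$. The budget step and the smoothness-based union bound are then routine bookkeeping.
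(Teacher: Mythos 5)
Your proof is correct and follows essentially the same route as the paper's: the paper also bounds $|S^t_1\cap V|$ by $K+k-|Y^t_1|-f^t_1$ (written there as $k-\lambda$ with $\lambda=|Y^t_1|+f^t_1-K$) and bounds the number of $V$-insertions by $\lambda$ using the same potential $\varepsilon_i$ and \cref{tab:binswap}, merely phrasing the counting step as a contradiction ("if there were $\lambda+1$ insertions, then $\varepsilon_j>0$, contradicting \cref{lem:whardness-invariant}") rather than as your telescoping sum. The two arguments are interchangeable.
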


\begin{proof}
  From \cref{lem:whardness-invariant}, we know that $|Y^t_1| \geq K - f^t_1$.
  Let \[ |Y^t_1| = K - f^t_1 + \lambda \] for some $\lambda \in \mathbb N_0$, 
  and let $\varepsilon_i = |F^t_i| - |Y^t_i| - f^t_i$, for all $i \in \{1,\dots,2m+1\}$.

  We now show that there are at most $\lambda$ layers 
  where we exchange a vertex currently in the vertex cover with a vertex in~$V$.
  Let $i \in \{2,\dots,2m+1\}$ such that $\sdt{S^t_{i-1}}{S^t_i} = (\va,v)$ with~$v \in V$.
  From~\cref{tab:binswap} (recall that one-centered solutions are \smooth{}), 
  we know that~$\varepsilon_i \geq \varepsilon_{i-1}+1$.

  Assume towards a contradiction that there are $\lambda+1$ many of these exchanges.
  Then, there is a~$j \in \{1,\dots,2m+1\}$ such that 
  \begin{align*}
   \varepsilon_j &\geq \varepsilon_1 + \lambda + 1 = |F^t_1| - |Y^t_1| - f^t_1 + \lambda + 1 
    \\ &\geq K - (K - f^t_1 + \lambda) - f^t_1 + \lambda + 1 
     \geq 1
    &\iff&& |F^t_j| - |Y^t_j| > f^t_j. 
  \end{align*}
  This contradicts the invariant of \cref{lem:whardness-invariant}.
  
  In the beginning of phase $t$, we have at most $k-\lambda$ vertices from $V$ in the vertex cover, 
	because 
	\[ |S^t_1 \cap V| \leq K + k - |Y^t_1| - f^t_1 = K + k - (K - f^t_1 + \lambda) - f^t_1 = k - \lambda.\] 
	Since there are at most $\lambda$ many exchanges~$\sdt{S^t_{i-1}}{S^t_{i}} = (\va,v)$ where $v\in V$ and $i \in \{2,\dots,2m+1\}$,
	we know that the vertex set $\bigcup_{i=1}^{2m+1} S^t_i \cap V$ is of size at most $k$.
\lqed
\end{proof}

\noindent
We are set to prove the backward direction of~\cref{prop:whardness}.

\begin{lemma}%
	\label{lem:whardness-backward}
	Let $(G,k)$ be an instance of \prob{Clique} and
	$(\TG,k',\ell)$ be the instance of \prob{Multistage Vertex Cover}
	resulting from \cref{constr:whardness}.
	If $(\TG,k',\ell)$ is a \yes-instance,
	then $(G,k)$ is a \yes-instance.
\end{lemma}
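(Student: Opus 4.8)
The plan is to assemble the three preceding results and reduce everything to a clean extremal count. First I would invoke \Cref{lem:one-centered} to fix a one-centered solution $\calS = (S_1^1,\ldots,S_{2m+1}^\kappa)$, then \Cref{lem:whardness-K-flips} to fix a phase $t$ with $|F_1^t|\geq K$, and finally \Cref{lem:whardness-noofvert} to obtain the set $V^{*}\ceq\bigcup_{i=1}^{2m+1}(S_i^t\cap V)$ with $|V^{*}|\leq k$. Since the complete graph on $k$ vertices is the unique graph on at most $k$ vertices carrying $\binom{k}{2}$ edges, the whole lemma reduces to exhibiting $K=\binom{k}{2}$ \emph{distinct} edges of $G$ whose both endpoints lie in $V^{*}$: this forces $|V^{*}|=k$ and $V^{*}$ to induce a clique, hence $(G,k)$ is a \yes-instance.

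To produce these edges I would analyse the center behaviour inside phase $t$. At each odd-indexed layer the unique center (using that $\calS$ is one-centered) is forced to be the star center $c_j$ of that layer, and at each even-indexed layer $2i$ the single center vertex must cover the edge $\{c_{2i},c_{2i+1}\}$ introduced in step~(\ref{const:4}), so it is either $c_{2i}$ or $c_{2i+1}$. I would then classify the $2m$ transitions of the phase: every odd$\to$even step necessarily changes the center, and an even$\to$odd step changes the center unless the even layer $2i$ already uses $c_{2i+1}$. Call the latter the \emph{free} steps, and let $D$ be the set of even layers $2i$ whose center is $c_{2i+1}$, so that $|D|$ equals the number of free steps.

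The heart of the argument, and the step I expect to be the main obstacle, is the inequality $|D|\geq K$. Here I would use smoothness: each step deletes at most one vertex, and in every non-free step that deleted vertex is forced to be the old center (a vertex of $C$). Consequently every deletion of a vertex of $\smod{U}_t$ happens in a free step, and at most one such deletion occurs per step; since $F_1^t$ counts exactly the layers in phase $t$ at which a $\smod{U}_t$-vertex is dropped, this yields $|D|\geq |F_1^t|\geq K$. For each $2i\in D$ the center is $c_{2i+1}$, so $c_{2i}$ is absent from the cover, and the gadget edges $\{c_{2i},e_i\}$, $\{c_{2i},v\}$, $\{c_{2i},w\}$ from step~(5) (with $e_i=\{v,w\}$) force $e_i$, $v$, and $w$ all into the cover; in particular $v,w\in V^{*}$. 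Distinct layers $2i\in D$ correspond to distinct edges $e_i$ of $G$, so this produces at least $|D|\geq K$ distinct edges with both endpoints in $V^{*}$.

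Combining the two bounds finishes the proof: $V^{*}$ is a vertex set of size at most $k$ spanning at least $\binom{k}{2}$ distinct edges of $G$, so $|V^{*}|=k$ and $V^{*}$ induces a clique of size $k$, witnessing that $(G,k)$ is a \yes-instance. The routine parts are the final extremal count and the observation that free steps coincide with clique-covered even layers; the delicate part that needs careful bookkeeping is the reduction ``every $\smod{U}_t$-deletion sits in a free step,'' which is exactly where smoothness and the one-centered property must be used in tandem, and which mirrors the ``$G'$ covers the layer'' mechanism of the forward direction in \Cref{lem:whardness-forward}.
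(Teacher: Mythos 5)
Your proposal is correct and follows essentially the same route as the paper's proof: invoke \Cref{lem:one-centered}, \Cref{lem:whardness-K-flips}, and \Cref{lem:whardness-noofvert}, then argue that $|F_1^t|\geq K$ forces at least $K$ even layers of phase $t$ to be covered without their center $c_{2i}$, so the step-(5) gadget edges push the endpoints of $K=\binom{k}{2}$ distinct edges of $G$ into a set of at most $k$ vertices, which must therefore induce a $k$-clique. The only difference is that you spell out the ``free step'' bookkeeping (every $\smod{U}_t$-deletion must occupy a step where the center need not move) that the paper compresses into a single sentence; that accounting is sound.
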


\begin{proof}
    Let $(\TG,k',\ell)$ be a \yes-instance.	
    From \cref{lem:one-centered} it follows that there is a 
    one-centered solution $\calS = (S_1^1,\ldots,S_{2m+1}^1=S_1^2,\ldots,\dots,S_{1}^\kappa,\ldots,S_{2m+1}^\kappa)$ for $(\TG,k',\ell)$.
    By \cref{lem:whardness-K-flips}, 
    there is a $t \in \{1,\dots,\kappa\}$ such that $|F_1^t| \geq K= {k \choose 2}$.
    By \cref{lem:whardness-noofvert}, 
    we know that $|\bigcup_{i=1}^{2m+1} S^t_i \cap V|\leq k$.
    Now we identify the clique of size $k$ in $G$.
    Since $|F^t_1| \geq K$, we know that, by \cref{constr:whardness}, 
    at least $K=\binom{k}{2}$ layers are covered by vertices in~$V\cup E\cup \smod{U}_{\kappa+1}\cup \{c_{2j+1}^t\mid j\in\{1,\ldots,m\}\}$ in phase $t$.
    Note that each of these layers corresponds to an edge $e=\{v,w\}$ in $G$ and 
    that we need in particular the vertices $v$ and $w$ in the vertex cover.
    Since we have at most $k$ vertices in $\bigcup_{i=1}^{2m+1} S^t_i \cap V$, 
    these vertices induce a clique of size $k$ in $G$.
    \lqed
\end{proof}

\subsubsection[Proof of the Proposition]{Proof of \cref{prop:whardness} and Two Corollaries}
\label{sssec:proofwhardness}

We proved the forward and backward direction of \cref{prop:whardness} in
\cref{sssec:fd,sssec:bd}, respectively.
It remains to put everything together.

\ifarxiv{}
\begin{proof}[Proof of \cref{prop:whardness}]
		\else{}
\begin{proof}[of \cref{prop:whardness}]
\fi{}
	Let $(G,k)$ be an instance of \prob{Clique} and
	$(\TG,k',\ell)$ be the instance of~\MSVC{} 
	resulting from \cref{constr:whardness}.
	Observe that \cref{constr:whardness} runs in polynomial time,
	and that each layer of $\TG$ is a forest with $O(k'^2)$ edges.
	We know that if $(G,k)$ is a \yes-instance of~\prob{Clique},
	then $(\TG,k',\ell)$ is a \yes-instance of~\MSVC{} (\cref{lem:whardness-forward}),
	and vice versa (\cref{lem:whardness-backward}).
	Finally, 
	the \W{1}-hardness of~\prob{Clique}~\cite{DowneyF99} regarding~$k$ 
	and 
	the fact that~$k'\in O(k^2)$ then finishes the~proof.
\lqed
\end{proof}

From a motivation point of view,
it is natural to assume that the change over time 
modeled by the temporal graph is rather of \emph{evolutionary} character, 
meaning that the difference of a layer to its predecessor is limited. 
However, \cref{prop:whardness} gives a bound (in terms of the desired vertex cover size in input instance)
on the number of edges of each layer.
Hence, we also have the following \W{1}-hardness.
\begin{corollary}
		\todo{pz: do we want to highlight this further? I think it is strong, but does not fit into the storyline.}
		\MSVC{} parameterized by the maximum number~$\max_{i \in\set{\tau}} |E(G_i)|$ of edges in a layer is \W{1}-hard,
		even if each layer is a forest.
\end{corollary}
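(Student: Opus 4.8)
The plan is to reuse \cref{constr:whardness} and the parameterized reduction established in \cref{prop:whardness} essentially verbatim, tracking only one additional quantity: the number of edges in each layer. First I would observe that each layer of the temporal graph~$\TG$ produced by \cref{constr:whardness} is already a forest, so the ``even if each layer is a forest'' part of the claim comes for free from the construction and needs no extra work.

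Next I would bound $\max_{i\in\set{\tau}}|E(G_i)|$ from above by a function of the \prob{Clique}-parameter~$k$. Inspecting the five construction steps, the only edges contributed to any single layer are: the $k'+1$ edges of the star centered at an odd~$c_i$ (step~(1)); the $K(k'+1)$ edges of the stars centered at the vertices of some~$U^{j+1}$ (step~(2)); the~$K$ matching edges~$\{u_x^{j+1},u_x^{j+2}\}$ (step~(3)); and a constant number of edges from steps~(4) and~(5). Since $K=\binom{k}{2}$ and $k'=2\binom{k}{2}+k+1$ are both in $O(k^2)$, the dominating term is $K(k'+1)\in O(k^4)$, whence $\max_{i\in\set{\tau}}|E(G_i)|\in O(k^4)$.

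Finally I would assemble these observations into a parameterized reduction. \cref{constr:whardness} runs in polynomial time, and by \cref{lem:whardness-forward,lem:whardness-backward} it maps yes-instances to yes-instances and vice versa. Setting the target parameter to $p\ceq\max_{i\in\set{\tau}}|E(G_i)|$, the bound $p\in O(k^4)$ shows that~$p$ is bounded by a computable function of~$k$, so the map is a valid parameterized reduction from \prob{Clique} parameterized by~$k$ to \MSVC{} parameterized by~$p$. As \prob{Clique} parameterized by~$k$ is \W{1}-hard~\cite{DowneyF99}, the claimed \W{1}-hardness of \MSVC{} parameterized by the maximum number of edges in a layer follows.

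I do not expect a genuine obstacle, since all the heavy lifting is done by \cref{prop:whardness}; the only real content is the edge-count bound in the middle paragraph, and the sole point requiring care is confirming that the star gadgets of step~(2) (rather than any other edges) dominate the per-layer edge count, so that the per-layer edge bound is indeed polynomial in~$k$ and hence a legitimate parameter bound for a parameterized reduction.
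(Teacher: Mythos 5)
Your proposal is correct and follows essentially the same route as the paper: the paper derives this corollary directly from \cref{prop:whardness}, whose statement already records that each layer of the constructed temporal graph is a forest with $O(k^4)$ edges, so the reduction from \prob{Clique} is simultaneously a parameterized reduction with respect to the per-layer edge count. Your explicit re-derivation of the $O(k^4)$ bound from the five construction steps (with the stars of step~(2) dominating) matches the paper's accounting and adds nothing that would change the argument.
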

Thus, we cannot hope for fixed-parameter tractability of \msvc{} when parameterized for example by the combination of $k$
and the maximum size of symmetric difference between two consecutive layers.

Furthermore, we can turn the instance~$(\TG,k',\ell)$ computed by~\cref{constr:whardness} into an equivalent instance~$(\TG',k'',\ell)$ where each layer is a tree as follows.
 Set~$k''=k'+1$.
 Add a vertex~$x$ to~$\TG$.
 In each layer of~$\TG$, 
 make~$x$ the center of a star with~$k''+1$ (new) leaf vertices and connect~$x$ with exactly one vertex of each connected component.
 Note that in every solution $x$ is contained in a vertex cover for each layer in~$\TG'$.
\begin{corollary}
 \label{rem:whardnesstree}
		\MSVC{} parameterized by $k$ is \W{1}-hard,
		even if each layer is a tree.
\end{corollary}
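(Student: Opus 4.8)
The plan is to analyze the transformation $\TG \mapsto \TG'$ described just above the corollary and show that it is a parameterized reduction which preserves equivalence while turning every layer into a tree. First I would settle the structural claim. By \cref{prop:whardness}, each layer $G_i$ of $\TG$ is a forest. Adding the new vertex $x$ together with one edge from $x$ into each connected component of $G_i$ makes the $i$-th layer connected while introducing no cycle (each such edge joins two previously separate components, and the $k''+1$ star leaves are fresh degree-one vertices attached only to $x$). Hence the $i$-th layer of $\TG'$ is a tree.

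The crucial gadget property is that $x$ is forced into the cover of every layer: since $x$ is the center of a star with $k''+1$ leaves, any vertex cover of the $i$-th layer of $\TG'$ that omits $x$ must contain all $k''+1$ leaves, exceeding the budget $k''$. Thus in every solution $(S_1',\ldots,S_\tau')$ to $(\TG',k'',\ell)$ we have $x \in S_i'$ for all $i \in \set{\tau}$.

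For the equivalence I would argue both directions. For the forward direction, given a solution $(S_1,\ldots,S_\tau)$ to $(\TG,k',\ell)$, the sequence $(S_1 \cup \{x\},\ldots,S_\tau \cup \{x\})$ solves $(\TG',k'',\ell)$: each $S_i \cup \{x\}$ has size at most $k'+1 = k''$ and covers $G_i$ as well as all newly added edges (all of which are incident with $x$), and since $x$ lies in every set, $\sydic{S_i \cup \{x\}}{S_{i+1} \cup \{x\}} = \sydic{S_i}{S_{i+1}} \leq \ell$. For the backward direction, given a solution $(S_1',\ldots,S_\tau')$ to $(\TG',k'',\ell)$, set $S_i \ceq S_i' \setminus \{x\}$. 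By the forcing property $|S_i| \leq k'' - 1 = k'$; each $S_i$ still covers all edges of $G_i$, because $x$ is incident only with the newly added edges and thus never covers an original edge of $G_i$; and again $\sydic{S_i}{S_{i+1}} = \sydic{S_i'}{S_{i+1}'} \leq \ell$. Hence $(S_1,\ldots,S_\tau)$ solves $(\TG,k',\ell)$.

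Finally, since the transformation runs in polynomial time and $k'' = k'+1 = 2\binom{k}{2}+k+2 \in O(k^2)$, composing it with the polynomial-time parameterized reduction from \prob{Clique} of \cref{prop:whardness} yields a parameterized reduction from \prob{Clique} to tree-layer \MSVC{} in which the parameter $k''$ is bounded by a function of the \prob{Clique} parameter $k$; the \W{1}-hardness of \prob{Clique} then gives the claim. I expect the only real subtlety to be the bookkeeping in the backward direction: confirming that dropping $x$ leaves a valid cover (which hinges on all added edges being incident with $x$) and that the symmetric-difference bound is preserved \emph{exactly}, which holds precisely because $x$ is common to every pair of consecutive sets and therefore cancels.
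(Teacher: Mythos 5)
Your proposal is correct and follows essentially the same route as the paper, which also adds a forced vertex $x$ (via a star with $k''+1$ fresh leaves) joined to one vertex of each component of every forest layer and sets $k''=k'+1$. The only cosmetic point is that in the backward direction you should take $S_i \ceq S_i'\cap V(\TG)$ rather than $S_i'\setminus\{x\}$ so that stray new leaf vertices are also discarded; this only shrinks the sets and the symmetric differences, so the argument goes through unchanged.
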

However, 
in \cref{rem:whardnesstree},
$\max_{i \in \tau} |E(G_i)|$ is unbounded and we cannot hope to strengthen the reduction 
in this sense because if each layer is a tree, 
then we have exactly $|V|-1$ edges in each layer.
This would contradict \cref{prop:xpalgo}.

\section{On Efficient Data Reduction}
\label{sec:dataredu}

In this section, 
we study the possibility of efficient and effective data reduction for \MSVC{} when parameterized by~$k$, $\tau$, and~$k+\tau$, 
that is, 
the possible existence of problem kernels of polynomial size.
We prove that unless $\unlessPK$, 
\MSVC{} admits no problem kernel of size polynomial in~$k$~(\cref{ssec:nopkfork}).
Yet, when combining~$k$ and~$\tau$, 
we prove a problem kernel of size~$O(k^2\tau)$~(\cref{ssec:ktaukernel}).
Moreover, we prove a problem kernel of size~$5\tau$ when each layer consists of only one edge~(\cref{ssec:taulinker}).
Recall that \MSVC{} is para-\NP-hard regarding~$\tau$ even if each layer is a tree. 

\subsection{No Problem Kernel of Size Polynomial in~$k$ for Restricted Input Instances}
\label{ssec:nopkfork}

In this section,
we prove the following.\footnote{A graph is planar if it can be drawn on the plane such that no two edges cross each~other.}

\begin{theorem}
 \label{thm:preprock}
 Unless~$\unlessPK$, \MSVC{} admits no polynomial kernel when parameterized by~$k$, 
 even 
 \begin{compactenum}[(i)]
  \item if each layer consists of one edge and~$\ell=1$, or
  \item if each layer is planar and~$\ell\geq 2k$.
 \end{compactenum}
\end{theorem}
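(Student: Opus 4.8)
The plan is to establish both statements by \emph{cross-composition}, the standard tool for ruling out polynomial kernels (see, e.g., \cite{cygan2015parameterized}). Recall that an AND-cross-composition of an NP-hard language~$L$ into a parameterized problem~$Q$ takes $t$~instances $x_1,\dots,x_t$ of~$L$, all equivalent under some polynomial equivalence relation, and produces in polynomial time a single instance~$(y,k)$ of~$Q$ with $k\le\poly(\max_i|x_i|+\log t)$ that is a yes-instance if and only if \emph{every} $x_i$ is a yes-instance; the existence of such a composition implies that $Q$ has no polynomial kernel unless $\coNP\subseteq\NP/\poly$. In both cases the source language is (a restriction of) \vc{}, and the crucial point is that the target parameter stays bounded by essentially the vertex-cover budget of a single input instance, independently of the number~$t$ of composed instances.

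For part~(ii) this is almost immediate. \vc{} is NP-hard on planar graphs, so I would fix the polynomial equivalence relation grouping instances of planar \vc{} by their number of vertices and their budget, and assume all $t$~inputs $(G_1,k),\dots,(G_t,k)$ share the vertex set~$V$ and the integer~$k$. I would then output the temporal graph $\TG=(G_1,\dots,G_t)$ over~$V$ together with $k'=k$ and $\ell=2k$. Since $\ell\ge 2k'$, \cref{obs:turedu} tells us that $(\TG,k',\ell)$ is a yes-instance precisely when each layer~$G_i$ has a vertex cover of size at most~$k$, i.e.\ exactly when all inputs are yes-instances. Every layer is planar by construction and $k'=k\le|V|$, so this is an AND-cross-composition and part~(ii) follows.

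For part~(i) I would reuse the one-edge-layer gadget of \cref{prop:oneedgelayers} as a building block. Group the inputs so that all $t$~instances $(G_1,k),\dots,(G_t,k)$ live on the same vertex set~$V$ with the same budget~$k$, and lay out the temporal graph as $t$ consecutive \emph{phases}: phase~$j$ is the construction behind \cref{prop:oneedgelayers} applied to~$G_j$, using a private copy of its dummy vertices, which forces any solution to cover all edges of~$G_j$ by a single static set of at most~$k$ real vertices of~$V$. Between phase~$j$ and phase~$j{+}1$ I would insert a \emph{buffer} of about $2k$ consecutive layers, each consisting of the same fresh edge~$\{p_j,p_j'\}$. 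Setting $k'=k+1$ and $\ell=1$, a solution can keep $p_j$ in the cover throughout the buffer and use the remaining budget slot to first delete the $k$ vertices of the phase-$j$ cover one at a time and then insert the $k$ vertices of the phase-$(j{+}1)$ cover one at a time, each step changing the cover by exactly one vertex. The forward direction then follows from the per-phase forward direction of \cref{prop:oneedgelayers} plus this reconfiguration, while the backward direction combines the per-phase backward direction (each phase certifies a vertex cover of~$G_j$ of size at most~$k$) over all phases; hence the output is a yes-instance if and only if every~$G_j$ is coverable, yielding an AND-cross-composition with $k'=k+1$ and one-edge layers.

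The main obstacle is the backward direction of part~(i): I must argue that the forcing supplied by \cref{prop:oneedgelayers} is genuinely \emph{local} to a phase, so that budget can be reused across phases. Concretely, I have to rule out that a solution avoids committing to a static cover in some phase by slowly sliding real vertices of~$V$ through it; this is where the interaction of $\ell=1$ with the dummy edges has to be re-examined globally rather than for a single instance. The buffers are the key device: making each buffer long enough (at least~$2k$ layers) guarantees that a full reconfiguration between two arbitrary size-$k$ covers is possible, so that the phases become independent and the composition realizes a clean AND; conversely, keeping the buffer edges on private vertices ensures that they impose no constraint on~$V$ and hence never help to cover any~$G_j$. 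Verifying that these two requirements hold simultaneously, and that the global budget accounting still yields ``at most~$k$ real vertices per phase'', is the technical heart of the argument.
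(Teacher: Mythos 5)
Your proposal is correct and follows essentially the same approach as the paper: part~(ii) is verbatim the paper's AND-composition of planar layers with $\ell=2k$ via \cref{obs:turedu}, and part~(i) is the same time-stacking of one-edge phases separated by buffers of length about~$2k$ that allow a full reconfiguration of the cover under $\ell=1$. The one difference is that the paper composes instances of one-edge-layer \MSVC{} with themselves (reusing the boundary layers as the buffer), which makes the backward direction of the AND-equivalence immediate since restricting a solution to a phase is already a solution of that phase, whereas your choice to compose \vc{} instances directly obliges you to re-run the counting argument of \cref{prop:oneedgelayers} locally within each phase; that argument does localize (the dummy edges of a phase are private, so the union of the covers over that phase contains at least as many dummy vertices as dummy layers and hence at most~$k$ vertices of~$V$), so your route goes through as well.
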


\noindent
Recall that  \MSVC{} parameterized by~$k$ 
is fixed-parameter tractable in case of~(ii) (see~\cref{obs:turedu}),
while we left open whether it also holds true in case~(i).

We prove \cref{thm:preprock} using AND-compositions~\cite{BodlaenderDFH09}.

\begin{definition}
  \label{def:andcomposition}
  An \emph{AND-composition} for a parameterized problem~$L$ is an algorithm that,
  given~$p$ instances~$(x_1,k),\ldots,(x_p,k)$ of~$L$, 
  computes in time polynomial in~$\sum_{i=1}^{p}|x_i|$ an instance~$(y,k')$ of~$L$ such that
  \begin{compactenum}[(i)]
  \item $(y,k')\in L$ if and only if~$(x_i,k)\in L$ for all~$i\in\set{p}$, and
  \item $k'$ is polynomially upper-bounded in $k$.
  \end{compactenum} 
\end{definition}

\noindent
The following is the crucial connection to polynomial kernelization.

\begin{theorem}[Drucker \cite{Drucker15}]
 \label{prop:drucker}
 If a parameterized problem whose unparameterized version is \NP-hard admits an AND-composition, then $\unlessPK$. %
\end{theorem}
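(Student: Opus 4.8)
The plan is to invoke Drucker's AND-composition framework (\cref{prop:drucker}): for each of the two restricted variants it suffices to show that (a)~its unparameterized version is \NP-hard, and (b)~the variant, parameterized by~$k$, admits an AND-composition in the sense of \cref{def:andcomposition}. Part~(a) is already in hand: for case~(i) it is exactly \cref{thm:npahrdcases}(ii) (proved via \cref{prop:oneedgelayers}), and for case~(ii) it follows from \cref{obs:turedu}, which turns a single planar layer into a plain planar \vc{} instance, together with the classical \NP-hardness of \vc{} on planar graphs. So the whole proof reduces to building the two AND-compositions, each mapping $p$ input instances sharing the common parameter~$k$ to one output instance of the \emph{same} restricted variant whose new parameter~$k'$ is polynomial in~$k$ and independent of~$p$; the time horizon~$\tau$ (which is \emph{not} a parameter here) may blow up with~$p$, and this is exactly the slack I would exploit.

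For case~(ii) the composition is essentially free. Since~$\ell\geq 2k$ decouples the layers (\cref{obs:turedu}), a planar-layer instance is a \yes-instance iff every single layer admits a vertex cover of size at most~$k$. Hence I would take the $p$ inputs~$(\TG_1,k,\ell_1),\ldots,(\TG_p,k,\ell_p)$ and simply \emph{concatenate} all their layers into one temporal graph~$\TG$ over the disjoint union of their vertex sets, setting~$k'=k$ and~$\ell'=2k$. Each layer stays planar, and by the decoupling~$(\TG,k',\ell')$ is a \yes-instance iff every layer of every~$\TG_i$ is coverable by~$k$ vertices, i.e.\ iff all inputs are \yes-instances. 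As~$k'=k$, this is a valid AND-composition.

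Case~(i) is the real work. Here I would chain the inputs~$\TG_1,\ldots,\TG_p$ (disjoint vertex sets~$V_1,\ldots,V_p$) along the time axis into one one-edge-layer instance with~$\ell'=1$ and~$k'=k+1$, inserting short \emph{breathing} blocks---single-edge layers on fresh vertices, in the spirit of the $W$-gadget of \cref{prop:oneedgelayers}---between the real layers. In the forward direction, given $\ell=1$ solutions for each~$\TG_i$ (normalized by \cref{obs:largesolutions} so all covers have size in~$\{k-1,k\}$), I would play them block by block, using the one extra budget slot and the breathing layers to park a dummy vertex while I walk the $V_i$-part of the cover from one layer's solution to the next and, at block boundaries, unload~$V_i$ and load~$V_{i+1}$ one vertex at a time. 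In the backward direction I would take a size-normalized solution of the combined instance and project each cover onto~$V_i$ during~$\TG_i$'s layers; the breathing layers are engineered so that any size-$(k+1)$ cover sitting on an~$\TG_i$-layer is forced to already contain a non-$V_i$ vertex (otherwise the next breathing layer's fresh edge could not be covered within budget), which caps~$|S\cap V_i|\leq k$ and makes the projection a size-$\leq k$ cover of each~$\TG_i$-layer.

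The main obstacle is entirely in this last direction, and it is the tension between two requirements the breathing layers must meet simultaneously. To keep the projected per-block budget at~$k$ (not~$k+1$) at \emph{every} layer---not merely at block ends---one needs a forcing edge adjacent to each real layer, which necessarily \emph{separates} consecutive real layers; a genuine obstruction is that a contiguous block with budget~$k+1$ is \emph{not} equivalent to budget~$k$ (e.g.\ the path instance with layer edges $\{a,b\},\{b,c\},\{c,d\}$ is a \no-instance for~$k=1$ yet admits an~$\ell=1$ walk $\{b\},\{b,c\},\{c\}$ of maximum size~$2$ with size-$1$ endpoints), so transient use of the extra slot must be forbidden layer-by-layer. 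At the same time, to keep the projected drift at~$1$ (matching~$\ell=1$) the $V_i$-part must not change during a breathing block, whereas the extra breathing steps a priori give the solver room to swap two $V_i$-vertices and reach drift~$2$. Reconciling these is the crux: I would resolve it by choosing breathing edges to be \emph{fresh pairs}~$\{w,w'\}$ (each used in exactly one layer), so that covering a breathing layer \emph{forces} the single allowed change at the incident step to be a $w$-operation rather than a $V_i$-swap, and then by a normalization/exchange argument---pushing each breathing vertex to appear only in its own layer, analogous to the counting argument in \cref{prop:oneedgelayers}---that pins the number of $V_i$-changes per block to the amount a genuine~$\ell=1$ solution of~$\TG_i$ would spend. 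Getting the breathing gadget to grant \emph{exactly} one $V_i$-change per real transition (enough for the forward embedding, no more for the backward projection) is the delicate step I expect to spend the most effort on; once it is in place, equivalence follows and \cref{prop:drucker} yields the claimed kernelization lower bound for both cases.
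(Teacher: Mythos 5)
Your proposal does not prove the statement at issue. The statement is Drucker's theorem itself (\cref{prop:drucker}): that an AND-composition for a parameterized problem with \NP-hard unparameterized version already implies \unlessPK{}. The paper imports this result from \cite{Drucker15} without proof, and any genuine proof attempt would have to engage with Drucker's machinery for ruling out AND-compression of \NP-hard problems (his probabilistic/nondeterministic compression arguments resolving what was previously the AND-distillation conjecture). Your very first sentence instead \emph{invokes} \cref{prop:drucker} as a black box, so relative to the target statement the argument is circular: you assume exactly what you were asked to prove. Note also that this gap cannot be patched by elementary means of the kind in your text --- the classical composition framework of Bodlaender~et~al.~\cite{BodlaenderDFH09} yields kernel lower bounds only for OR-compositions, and the AND-variant was open for years precisely because no such direct argument was known.

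What you did write is, in substance, a proof of a \emph{different} statement, namely \cref{thm:preprock}, and there it is worth a comparison. Your case~(ii) composition (concatenating the planar single-layer instances along the time axis with $\ell'=2k$) coincides with \cref{prop:nopkkelltwok}. Your case~(i), however, diverges from \cref{constr:ANDc}: the paper keeps $k'=k$ and inserts between consecutive instances a buffer of $2k$ layers that merely \emph{repeat} the adjacent real layers, so that an $\ell=1$ solution can morph one cover into the next one vertex at a time; no extra budget slot is introduced, hence none of the forcing-edge difficulties you spend your final paragraph wrestling with ever arise. Your choice $k'=k+1$ with fresh-pair breathing edges creates precisely the backward-direction obstruction you identify (a transient size-$(k+1)$ cover on real layers is not excluded, as your own path example shows), and your sketch of how the fresh edges pin down the drift is left at the level of intention rather than argument --- so even read as a proof of \cref{thm:preprock}, case~(i) is incomplete where the paper's simpler construction is not.
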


\noindent
Note that $\unlessPK$ implies a collapse of the polynomial-time hierarchy to its third level~\cite{Yap83}.

In the proof of \cref{thm:preprock}(i), 
we use an AND-composition.
The idea is to take $p$ instances of \msvc{} on the same vertex set with~$\ell=1$ and identical~$k$,
and stack all these instances one after the another in the time dimension.
Here, we connect the $i$-th instance with $(i+1)$-th instance by just repeating 
the first layer of the $(i+1)$-st instance so often such that there is enough time to
\emph{transfer} from a solution of the $i$-th instance to a solution of the $(i+1)$-th instance
without violating the upper bound on the symmetric difference between two consecutive vertex covers.
Formally, we use the following construction.

\begin{construction}
 \label[construction]{constr:ANDc}
 Let~$(\TG_1,k,\ell),\ldots,(\TG_p,k,\ell)$ be~$p$ instances of \MSVC{} where~$\ell=1$ and 
 each layer of each~$\TG_q=(V,\TE_q,\tau_q)$, $q\in\set{p}$, consists of one edge.
 We construct an instance~$(\TG=(V,\TE,\tau),k,\ell)$ of \MSVC{} as follows.
 Denote by~$(G^i_1,\ldots,G^i_{\tau_i})$ the sequence of layers of~$\TG_i$.
 Initially, let~$\TG$ be the temporal graph with layer sequence~$((G^i_j)_{1\leq j\leq \tau_i})_{1\leq i\leq p}$.
 Next, for each~$i\in\set{p-1}$, insert between~$G^i_{\tau_i}$ and~$G^{i+1}_1$ the sequence~$(H_1^i,H_2^i,\ldots,H_{2k}^i)\ceq (G^i_{\tau_i},G^{i+1}_1,\ldots, G^{i+1}_1)$.
 This finishes the construction.
 Note that~$\tau\ceq 2k(p-1)+\sum_{i=1}^p \tau_i$.
 \cqed
\end{construction}

\noindent
In the next two propositions,
we prove that~\cref{constr:ANDc} forms AND-compositions,
used in the proof of \cref{thm:preprock}(i).
\begin{proposition}%
 \label{prop:knoPKononeedge}
  \MSVC{} where each layer consists of one edge and~$\ell=1$ 
  admits an AND-composition when
  parameterized by~$k$.
\end{proposition}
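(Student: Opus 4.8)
The plan is to verify that \cref{constr:ANDc} meets all three requirements of an AND-composition (\cref{def:andcomposition}). Two of them are immediate: the construction only concatenates the given layer sequences and inserts $2k$ layers between consecutive temporal graphs, so it runs in time polynomial in $\sum_{i=1}^{p}|\TG_i|$, and the parameter is untouched ($k'=k$), hence trivially polynomially bounded in $k$. The substance lies in the equivalence: $(\TG,k,\ell)$ is a \yes-instance if and only if every $(\TG_i,k,\ell)$ is.

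For the forward direction I would start from a solution $\calS=(S_1,\dots,S_\tau)$ of $(\TG,k,\ell)$ and use that, by construction, the layers $G^i_1,\dots,G^i_{\tau_i}$ of each $\TG_i$ form a contiguous block of $\TG$ (the gadgets are inserted between blocks, never interleaved with them). The restriction of $\calS$ to the positions of this block is a sequence of $\tau_i$ vertex covers, one per layer of $\TG_i$, each of size at most $k$ and with consecutive symmetric differences at most $\ell$ (they are already consecutive inside $\calS$). Hence it solves $(\TG_i,k,\ell)$, so every $\TG_i$ is a \yes-instance.

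The backward direction is the crux and the main obstacle. Given solutions $\calS_i=(S^i_1,\dots,S^i_{\tau_i})$ for all $i$ (which, by \cref{obs:largesolutions}, I may assume to consist of covers of size $k-1$ or $k$), I would concatenate them and only need to fill in covers for the inserted layers $H^i_1,\dots,H^i_{2k}$ so as to turn $S^i_{\tau_i}$ into $S^{i+1}_1$. Because $H^i_1=G^i_{\tau_i}$ and $H^i_2=\dots=H^i_{2k}=G^{i+1}_1$, there are exactly $2k+1$ transitions between the cover $S^i_{\tau_i}$ of the last layer of block $i$ and the cover $S^{i+1}_1$ of the first layer of block $i+1$, each permitted to change the cover by symmetric difference at most $\ell=1$, while $\sydic{S^i_{\tau_i}}{S^{i+1}_1}\leq 2k$. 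The only difficulty is that every intermediate cover must still cover the single present edge. Writing $\{a,b\}$ for the edge of $G^{i+1}_1$ with, without loss of generality, $a\in S^{i+1}_1$, the idea is to first bring $a$ into the cover and keep it there: if $a\in S^i_{\tau_i}$ nothing is needed; if $|S^i_{\tau_i}|\leq k-1$ I simply add $a$; and in the remaining case $|S^i_{\tau_i}|=k$ with $a,b\notin S^i_{\tau_i}$ I exploit the buffer layer $H^i_1=G^i_{\tau_i}$ to first delete a vertex not needed to cover the edge $\{c,d\}$ of $G^i_{\tau_i}$ (one exists as $k\geq 2$) and only then add $a$ on $H^i_2$. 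From that point $a$ stays in the cover, so all layers $G^{i+1}_1$ are covered, and I finish by deleting the remaining vertices outside $S^{i+1}_1$ and inserting those of $S^{i+1}_1$ still missing, performing deletions before insertions to keep every size at most $k$ and padding any leftover transitions by repeating the current cover. A short count shows at most $2k\leq 2k+1$ transitions are spent, so this yields a valid solution of $(\TG,k,\ell)$.

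Finally, I would dispose of the degenerate case $k=1$: with $k=1$ and $\ell=1$ every cover is a single vertex that can never change, so $(\TG_i,k,\ell)$ is a \yes-instance exactly when one vertex covers all of its edges, which is decidable in polynomial time; the composition can then be realized trivially by solving all inputs and emitting a fixed \yes- or \no-instance. The main obstacle is precisely the size-$k$, non-incident subcase of the backward direction, which is exactly what the $2k$ inserted layers and the buffer copy $H^i_1$ of $G^i_{\tau_i}$ are designed to overcome; everything else is routine.
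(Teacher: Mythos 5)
Your proof is correct and follows essentially the same route as the paper's: the same composition (\cref{constr:ANDc}), the same block-restriction argument for the forward direction, and the same delete-one-vertex-then-add-one-vertex transition through the $2k$ buffer layers for the backward direction. You are in fact slightly more careful than the paper in two places---the explicit treatment of the degenerate case $k=1$ (where the buffer gadget alone would not preserve the AND-semantics, so one must solve the instances outright and emit a trivial instance) and the size-$k$ non-incident subcase---but these refinements do not change the approach.
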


\begin{proof}
 We AND-compose \MSVC{} where each layer consists of one edge.
 Let~$I_1=(\TG_1=(V,\TE_1,\tau_1),k,\ell),\ldots,I_p=(\TG_p=(V,\TE_p,\tau_p),k,\ell)$ be~$p$ instances of \MSVC{} with~$\ell=1$  where each layer consists of one edge.
 Apply~\cref{constr:ANDc} to obtain instance~$I=(\TG=(V,\TG,\tau),k,\ell)$ of \MSVC{}.
 We claim that~$I$ is a \yes-instance if and only if~$I_i$ is a \yes-instance for all~$i\in\set{p}$.
 
 \RD{}
 If~$I$ is a \yes-instance, then for each~$i\in\set{p}$, the subsequence of the solution restricted to the layers~$(G^i_j)_{1\leq j\leq \tau_i}$ forms a solution to~$I_i$.
 
 \LD{}
 Let~$(S^i_1,\ldots,S^i_{\tau_i})$ be a solution to~$I_i$ for each~$i\in\set{p}$.
 Clearly,~$(S^i_1,\ldots,S^i_{\tau_i})$ forms a solution to the layers~$(G^i_j)_{1\leq j\leq \tau_i}$.
 For~$H_1^i$, let~$T_1^i= S^i_{\tau_i}\setminus\{v\}$ for some~$v$ such that the unique edge of~$H_1^i$ is still covered.
 Next, set~$T_2^i=T_1^i\cup\{w\}$, where~$w\in S^{i+1}_1$ with~$w$ being incident with the unique edge of~$H_2^i$.
 Now, over the next~$2k-2$ layers, transform~$T_2^i$ into~$S^{i+1}_1$ by first removing layer by layer the vertices in~$T_2^i\setminus S^{i+1}_1$ (at most~$k-1$ many vertices), and then layer by layer add the vertices in $S^{i+1}_1\setminus T_2^i$ (again, at most~$k-1$ vertices).
 This forms a solution to~$I$.
\lqed
\end{proof}

\noindent
Turning a set of input instances of 
\MSVC{} with only one layer ($\tau=1$) which additionally is a planar graph
into a sequence gives an AND-composition used in the proof of \cref{thm:preprock}(ii).

\begin{proposition}%
 \label{prop:nopkkelltwok}
 \MSVC{} where each layer is planar and $\ell\geq 2k$ 
 admits an AND-composition when parameterized by~$k$.
\end{proposition}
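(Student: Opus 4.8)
The plan is to use \cref{constr:ANDc} as the basis of the AND-composition, but applied to single-layer planar instances and with the new requirement $\ell \geq 2k$, which makes the ``transfer'' between consecutive instances essentially free. Formally, I would state the following construction analogue: given $p$ instances $I_1=(\TG_1,k,\ell),\ldots,I_p=(\TG_p,k,\ell)$ of \MSVC{} where each $\TG_q$ consists of a single planar layer $G_q=(V,E_q)$ and $\ell\geq 2k$, we build $\TG$ by simply concatenating the layers $G_1,\ldots,G_p$ in the time dimension to form a temporal graph with $\tau = p$ layers, keeping the same $k$ and the same $\ell$. This is even simpler than \cref{constr:ANDc}, because when $\ell\geq 2k$ no interpolating layers are needed: the symmetric difference of any two size-at-most-$k$ vertex covers is at most $2k\le\ell$, so every pair of consecutive layers automatically satisfies condition~(ii) of \MSVC{} regardless of how the individual covers are chosen.

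Next I would verify the two requirements of an AND-composition (\cref{def:andcomposition}). First, the construction runs in time polynomial in $\sum_{q=1}^p |x_q|$, since we only relabel time indices; each layer remains planar, so the output is a valid instance of the restricted problem. Second, the parameter is unchanged, $k'=k$, which is trivially polynomially bounded in $k$. It remains to prove the correctness equivalence: $(\TG,k,\ell)\in\MSVC{}$ if and only if $I_q\in\MSVC{}$ for all $q\in\set{p}$. For the forward direction \RD{}, given a solution $(S_1,\ldots,S_\tau)$ for $\TG$, each $S_q$ is a size-at-most-$k$ vertex cover of $G_q$, hence $(S_q)$ is a (one-layer) solution witnessing that $I_q$ is a \yes-instance. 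For the backward direction \LD{}, given for each $q$ a size-at-most-$k$ vertex cover $S^q$ of $G_q$, the sequence $(S^1,\ldots,S^p)$ is a solution for $\TG$: each $S^q$ covers $G_q$, and for consecutive layers $\sydic{S^q}{S^{q+1}}\le |S^q|+|S^{q+1}|\le 2k\le\ell$, so condition~(ii) holds. This direction is exactly the observation already used in \cref{obs:turedu}.

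Finally I would invoke the machinery: the unparameterized version of this restricted \MSVC{} is \NP-hard (for instance by \cref{thm:npahrdcases}, or more directly since \VC{} on planar graphs is \NP-hard and corresponds to the $\tau=1$ case), so by \cref{prop:drucker} (Drucker) the existence of the AND-composition above implies that the problem admits no polynomial kernel in $k$ unless $\unlessPK$, which is what \cref{thm:preprock}(ii) asserts once combined with this proposition. I do not expect a genuine technical obstacle here; the $\ell\geq 2k$ hypothesis does all the work by decoupling the consecutive-layer constraint entirely, so the only thing to be careful about is confirming that planarity is preserved (it is, since layers are copied verbatim) and that the \NP-hardness of the unparameterized restricted problem is correctly cited rather than assumed. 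The main ``conceptual'' point, in contrast to \cref{prop:knoPKononeedge}, is precisely that here no bridging layers $(H_1^i,\ldots,H_{2k}^i)$ are required, so $\tau$ equals $p$ exactly rather than growing by an additive $2k(p-1)$ term.
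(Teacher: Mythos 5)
Your proposal is correct and matches the paper's proof essentially verbatim: the paper also simply concatenates the $p$ planar layers into a single temporal graph with $\tau=p$, keeps $k$, sets $\ell=2k$, and relies on the fact that any two size-at-most-$k$ covers have symmetric difference at most $2k\leq\ell$ (the argument of \cref{obs:turedu}) to decouple the layers. Your additional remarks on Drucker's theorem and \NP-hardness of planar \vc{} belong to the proof of \cref{thm:preprock} rather than this proposition, but they are accurate.
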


\begin{proof}
 We AND-compose \MSVC{} with one layer being a planar graph (and~$\ell\geq 2k$) into \MSVC{} with~$\ell\geq 2k$.
 Let~$(G_1,k,\ell'),\ldots,(G_p,k,\ell')$ be~$p$-instances of~\MSVC{} with one layer being a planar graph.
 Construct a temporal graph~$\TG$ with layers~$(G_1,\ldots,G_p)$.
 Set~$\ell=2k$.
 This finishes the construction.
 It is not difficult to see that~$(\TG,k,\ell)$ is a \yes-instance of~\MSVC{} if and only if $(G_i,k)$ is a \yes-instance of~\VC{} for all~$i\in\set{p}$.
\lqed
\end{proof}

\noindent
\cref{prop:knoPKononeedge,prop:nopkkelltwok} at hand,
we are set to prove this section's main result.

\ifarxiv{}
\begin{proof}[Proof of \cref{thm:preprock}]
		\else{}
\begin{proof}[of \cref{thm:preprock}]
\fi{}
 Using Drucker's result~\cite{Drucker15} for AND-compositions, 
 \cref{prop:knoPKononeedge,prop:nopkkelltwok} prove \cref{thm:preprock}(i) and (ii), respectively.
 Recall that 
 \MSVC{} where each layer consists of one edge~(\cref{thm:npahrdcases})
 and \MSVC{} on one layer being a planar graph 
 (basically,
 \vc{} on planar graphs)~\cite{GJS76} are \NP-hard.
\lqed
\end{proof}

\subsection[A cubic problem kernel regarding k and tau]{A Problem Kernel of Size~$O(k^2\tau)$}
\label{ssec:ktaukernel}

\MSVC{} remains \NP-hard for~$\tau=2$, 
even if each layer is a tree (\cref{thm:npahrdcases}).
Moreover, \MSVC{} does not admit a problem kernel of size polynomial in~$k$, 
even if each layer consists of only one edge (\cref{thm:preprock}).
Yet, when combining both parameters we obtain a problem kernel of cubic size.

\begin{theorem}
  \label{thm:PKktau}
 There is an algorithm that maps any instance~$(\TG,k,\ell)$ of~\MSVC{} in $O(|V(\TG)|^2\tau)$~time
 to an instance~$(\TG',k,\ell)$ of~\MSVC{} with at most~$2k^2\tau(\TG)$ vertices and at most~$k^2\tau(\TG)$~temporal edges.
\end{theorem}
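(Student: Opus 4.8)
The plan is to design a polynomial-time reduction that only \emph{deletes} edges and vertices while leaving $k$ and $\ell$ untouched, mimicking the classical degree-based kernelization of \vc{} applied layer by layer. The essential difficulty compared with \vc{} is that we cannot use the standard ``high-degree rule'' of putting a vertex of degree exceeding~$k$ into the cover and decrementing~$k$: the budget~$k$ is shared by all $\tau$ layers and the covers are coupled through the symmetric-difference constraint, so decrementing~$k$ in one layer is meaningless. Instead I would \emph{keep} every high-degree vertex in the instance but shrink its neighbourhood just enough that it provably stays forced into every size-at-most-$k$ vertex cover of its layer.

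Concretely, for a layer~$G_i$ let $H_i\ceq\{v\mid \deg_{G_i}(v)>k\}$, and apply the following rules. (R1) If some $|H_i|>k$, output a fixed small no-instance (e.g.\ a single layer equal to a clique on $k+2$ vertices): every vertex of~$H_i$ must lie in~$S_i$, so $|S_i|>k$. (R2) The graph $G_i-H_i$ has maximum degree at most~$k$, and since the part $S_i\setminus H_i$ of size at most $k-|H_i|$ must cover all its edges, if $|E(G_i-H_i)|>(k-|H_i|)\cdot k$ output the same no-instance. (R3) For each $h\in H_i$ choose $k+1$ of its incident edges (to $k+1$ distinct neighbours) and delete from this layer every $H_i$-incident edge chosen by neither endpoint. (R4) Finally delete every vertex that is isolated in all trimmed layers. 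Since $|H_i|\le k$, after (R2)--(R3) layer~$G_i$ has at most $(k-|H_i|)k+|H_i|(k+1)=k^2+|H_i|\le k^2+k$ edges; keeping only the endpoints of surviving edges bounds the number of non-isolated vertices per layer accordingly, and summing over the $\tau$ layers gives the claimed $O(k^2\tau)$ bounds on vertices and temporal edges. Every step is a degree computation or an edge scan, so the total running time is $O(|V(\TG)|^2\tau)$.

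For correctness I would show equivalence of $(\TG,k,\ell)$ and the reduced instance $(\TG',k,\ell)$ on vertex set~$V'$. The forward direction is immediate: given a solution $(S_1,\dots,S_\tau)$, the restriction $(S_1\cap V',\dots,S_\tau\cap V')$ covers each layer of~$\TG'$ (every surviving edge lies inside~$V'$, so its covering endpoint is retained), has sets of size at most~$k$, and satisfies $\sydic{S_i\cap V'}{S_{i+1}\cap V'}\le\ell$ since restriction can only shrink symmetric differences. The crux is the backward direction, where forcedness is exploited: by (R3) every $h\in H_i$ keeps degree at least $k+1$ in~$G_i'$, so excluding~$h$ from a size-at-most-$k$ cover of~$G_i'$ would force its $k+1$ distinct neighbours in, which is impossible; hence any solution $(S_1',\dots,S_\tau')$ of~$\TG'$ satisfies $H_i\subseteq S_i'$ for all~$i$. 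Consequently each $S_i'$ already covers the \emph{original} layer~$G_i$: the only edges of~$G_i$ absent from~$G_i'$ are $H_i$-incident (either trimmed by (R3), or incident to a vertex deleted by (R4), whose surviving edges elsewhere are themselves $H_i$-incident), and all of these are covered by $H_i\subseteq S_i'$. Thus $(S_1',\dots,S_\tau')$ is, verbatim, a solution of the original instance, so the symmetric-difference constraints need no adjustment at all.

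The main obstacle, and essentially the only place care is needed, is exactly this preservation of forcedness: because~$k$ must stay fixed, the whole reduction hinges on the fact that retaining $k+1$ witnessing edges keeps each high-degree vertex mandatory, which is what collapses the backward direction to ``do nothing''. A secondary point to verify is that (R3) is safe when two high-degree vertices are adjacent or when a chosen neighbour is itself high-degree; here one checks, using the ``chosen by neither endpoint'' formulation, that each $h$ retains its $k+1$ chosen edges and that deleting an edge whose high-degree endpoint is forced removes no covering obligation, while (R2) correctly certifies no-instances so that the untrimmed low-degree part cannot inflate the edge count. Tightening the constants from $k^2+k$ down to the stated per-layer bounds is then routine bookkeeping.
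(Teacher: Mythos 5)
Your kernelization is correct, and it follows the same overall skeleton as the paper (temporal analogues of the classical \vc{} rules: handle high-degree vertices, cap the number of edges per layer, discard globally isolated vertices), but your high-degree rule works by a genuinely different mechanism. The paper's \cref{rr:highdeg} \emph{replaces} the entire neighbourhood of a high-degree vertex~$v$ (in the layers where it has degree~$>k$) by a single pendant edge $\{v,w_v\}$ to a freshly added vertex; since $v$ is then no longer forced, the backward direction must repair the solution by swapping $w_v$ for $v$, and one has to check that this swap does not increase any symmetric difference. You instead add no new vertices and \emph{preserve forcedness} by retaining $k+1$ witnessing edges per high-degree vertex, which makes your backward direction entirely "do nothing": the solution of the reduced instance is verbatim a solution of the original, so the symmetric-difference constraints need no argument at all. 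The price you pay is twofold. First, you need the extra no-instance tests (R1) and (R2) up front (the paper defers all infeasibility detection to a single edge count in \cref{rr:no}, which is correct there because after its high-degree rule every vertex has degree at most~$k$ in every layer). Second, your per-layer edge bound is $k^2+|H_i|\le k^2+k$ rather than the stated $k^2$, because the forced vertices keep degree $k+1$; this only affects constants and still yields a kernel of size $O(k^2\tau)$, but as written it does not literally meet the bounds of \cref{thm:PKktau}, and shaving it to exactly $k^2$ is not quite "routine bookkeeping" within your scheme (you cannot retain fewer than $k+1$ witnesses without losing forcedness), so you would either restate the theorem with the slightly larger constant or switch to a pendant-style gadget for the final count. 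Your treatment of the corner cases (adjacent high-degree vertices, edges removed because an endpoint becomes globally isolated) is sound.
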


\noindent
To prove~\cref{thm:PKktau}, we apply three polynomial-time data reduction rules.
These reduction rules can be understood as temporal variants of the folklore reduction rules for \prob{Vertex Cover}.
Our first reduction rule is immediate.

\begin{rrule}[Isolated vertices]
\label{rr:iso}
 If there is some vertex~$v\in V$ such that~$e\cap v=\emptyset$ 
 for all~$e\in E(\UG)$, then delete~$v$.
\end{rrule}

\noindent
For~\prob{Vertex Cover},
when asking for a vertex cover of size~$q$, 
there is the well-known reduction rule dealing with high-degree vertices:
If there is a vertex~$v$ of degree larger than~$q$, then delete~$v$ and its incident edges and decrease~$q$ by one.
For~\MSVC{} a high-degree vertex can only appear in some layers, and hence deleting this vertex is in general not correct.
However, 
the following is a temporal variant of the high-degree rule
(see \cref{fig:highdeg} for an illustration).

\begin{rrule}[High degree]
 \label{rr:highdeg}
 If there exists a vertex~$v$ such that there is an inclusion-maximal subset~$J\subseteq \set{\tau}$ such that~$\deg_{G_i}(v)>k$ for all~$i\in J$, 
 then add a vertex~$w_v$ to~$V$ and for each~$i\in J$, 
 remove all edges incident to~$v$ in~$G_i$, 
 and add the edge~$\{v,w_v\}$.
\end{rrule}

\begin{figure}[t]
  \centering
  \begin{tikzpicture}
  \tikzstyle{xnode}=[circle,fill,scale=1/2,draw]
  \tikzstyle{xnode2}=[fill=lightgray,scale=2/3,draw]
  \tikzstyle{xedge}=[-]

  \def\xr{0.75}
  \def\yr{0.72}

  \newcommand{\tkznodes}[1]{
    \draw[rounded corners, gray] (0,0) rectangle (3*\xr,2*\yr);
    \node at (-0.3*\xr,1.25*\yr)[label=90:{#1}]{};
    \node (v) at (0.25*\xr,1*\yr)[xnode,label=90:{$u$}]{};
    \node (w) at (2.75*\xr,1*\yr)[xnode,label=90:{$v$}]{};
    \foreach \x in {1,...,8}{
      \node (G\x) at (1.5*\xr,2*\yr-0.2*\x*\yr)[circle,scale=0.1,draw]{};	
    }
  }

  \newcommand{\tkznodesX}[1]{
    \tkznodes{#1};
    \node (wv) at (0.25*\xr,0.25*\yr)[xnode2,label=0:{$w_u$}]{};
      \node (ww) at (2.75*\xr,0.25*\yr)[xnode2,label=180:{$w_v$}]{};
  }
  \begin{scope}
    \begin{scope}
      \tkznodes{$G_1$}
      \foreach \x in {1,...,6}{\draw (v) to (G\x);}
      \foreach \x in {1,2}{\draw (w) to (G\x);}
      \draw[fill=white] (1.5*\xr, 1.1*\yr) ellipse [x radius=0.5*\xr,y radius=0.8*\yr];
    \end{scope}

    \begin{scope}[xshift=4*\xr cm]
      \tkznodes{$G_2$}
      \foreach \x in {1}{\draw (v) to (G\x);}
      \foreach \x in {2,...,7}{\draw (w) to (G\x);}
      \draw[fill=white] (1.5*\xr, 1.1*\yr) ellipse [x radius=0.5*\xr,y radius=0.8*\yr];
    \end{scope}

    \begin{scope}[xshift=8*\xr cm]
      \tkznodes{$G_3$}
      \foreach \x in {3,7}{\draw (v) to (G\x);}
      \foreach \x in {2,5,8}{\draw (w) to (G\x);}
      \draw[fill=white] (1.5*\xr, 1.1*\yr) ellipse [x radius=0.5*\xr,y radius=0.8*\yr];
    \end{scope}

    \begin{scope}[xshift=12*\xr cm]
      \tkznodes{$G_4$}
      \foreach \x in {1,...,6}{\draw (v) to (G\x);}
      \foreach \x in {3,...,8}{\draw (w) to (G\x);}
      \draw[fill=white] (1.5*\xr, 1.1*\yr) ellipse [x radius=0.5*\xr,y radius=0.8*\yr];
    \end{scope}
  \end{scope}

  \begin{scope}[yshift=-3.0*\yr cm]
    \begin{scope}
      \tkznodesX{$G_1'$}
      \draw[xedge] (v) -- (wv);
      \foreach \x in {1,2}{\draw (w) to (G\x);}
      \draw[fill=white] (1.5*\xr, 1.1*\yr) ellipse [x radius=0.5*\xr,y radius=0.8*\yr];
      \node at (1.5*\xr,2.5*\yr)[scale=1.2,rotate=-90]{$\leadsto$};
    \end{scope}

    \begin{scope}[xshift=4*\xr cm]
    \tkznodesX{$G_2'$}
      \foreach \x in {1}{\draw (v) to (G\x);}
      \draw[xedge] (w) -- (ww);
      \draw[fill=white] (1.5*\xr, 1.1*\yr) ellipse [x radius=0.5*\xr,y radius=0.8*\yr];
      \node at (1.5*\xr,2.5*\yr)[scale=1.2,rotate=-90]{$\leadsto$};
    \end{scope}

    \begin{scope}[xshift=8*\xr cm]
      \tkznodesX{$G_3'$}
      \foreach \x in {3,7}{\draw (v) to (G\x);}
      \foreach \x in {2,5,8}{\draw (w) to (G\x);}
      \draw[fill=white] (1.5*\xr, 1.1*\yr) ellipse [x radius=0.5*\xr,y radius=0.8*\yr];
      \node at (1.5*\xr,2.5*\yr)[scale=1.2,rotate=-90]{$\leadsto$};
    \end{scope}

    \begin{scope}[xshift=12*\xr cm]
      \tkznodesX{$G_4'$}
      \draw[xedge] (v) -- (wv);
      \draw[xedge] (w) -- (ww);
      \draw[fill=white] (1.5*\xr, 1.1*\yr) ellipse [x radius=0.5*\xr,y radius=0.8*\yr];
      \node at (1.5*\xr,2.5*\yr)[scale=1.2,rotate=-90]{$\leadsto$};
    \end{scope}
  \end{scope}
  \end{tikzpicture}
  \caption{Illustration of~\cref{rr:highdeg}, 
  exemplified for \emph{two} vertices~$u,v$ and~$k=5$. 
  Each ellipse for a graph~$G_i$ and~$G_i'$,
  respectively,
  represents~$G_i-\{u,v\}$ and~$G_i'-\{u,v,w_u,w_v\}$.
  The vertices~$w_v,w_u$ (gray squares) are introduced by the application of~\cref{rr:highdeg}.
  Note that $u$ ($v$) has a high degree in $G_1$ ($G_2$) and $G_4$.
  }
  \label{fig:highdeg}
\end{figure}

\noindent
We now show how \cref{rr:highdeg} can be applied 
and that it does not turn a \yes-instance 
into a \no-instance or vice versa.

\begin{lemma}%
  \label{lem:highdeg}
 \cref{rr:highdeg} is correct and exhaustively applicable in~$O(|V|^2\tau)$~time.
\end{lemma}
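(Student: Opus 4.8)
The plan is to treat the lemma's two claims—correctness of a single application, and exhaustive applicability within the stated time—separately, fixing throughout a vertex~$v$ together with its inclusion-maximal index set~$J$, and writing $(\TG',k,\ell)$ for the instance produced by one application of \cref{rr:highdeg}, with~$w_v$ the new vertex and $\{v,w_v\}$ the new edge present in exactly the layers of~$J$.

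For the forward direction of correctness I would start from a solution~$(S_1,\ldots,S_\tau)$ of $(\TG,k,\ell)$ and first argue that~$v\in S_i$ for every~$i\in J$: since $\deg_{G_i}(v)>k$ and the neighbours of~$v$ are pairwise distinct, covering all edges at~$v$ without using~$v$ itself would force more than~$k$ vertices into~$S_i$, contradicting $|S_i|\le k$. Because the only edges altered in a layer~$i\in J$ are those incident with~$v$, and the replacement edge $\{v,w_v\}$ is covered by~$v$, the very same sequence~$(S_1,\ldots,S_\tau)$ remains a solution for~$\TG'$; layers outside~$J$ are untouched and all symmetric differences are literally unchanged.

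For the backward direction I would take a solution~$(S_1',\ldots,S_\tau')$ of $(\TG',k,\ell)$ and define $S_i\ceq(S_i'\setminus\{w_v\})\cup\{v\}$ when $w_v\in S_i'$, and $S_i\ceq S_i'$ otherwise. Since $\{v,w_v\}$ must be covered in every layer of~$J$, this forces~$v\in S_i$ there, so~$S_i$ covers every original edge incident with~$v$; all remaining edges of~$G_i$ already occur in~$G_i'$ and are unaffected by exchanging the (per-layer) degree-at-most-one vertex~$w_v$ for~$v$, and the bound $|S_i|\le k$ is immediate because we never add a vertex without deleting one. The one delicate point—which I expect to be the crux of the whole proof—is the symmetric-difference constraint. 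Here I would exploit that the transformation touches only~$v$ and~$w_v$, so for each consecutive pair it suffices to compare the contribution of $\{v,w_v\}$ to $\sydi{S_i}{S_{i+1}}$ with that to $\sydi{S_i'}{S_{i+1}'}$. Writing $a,b$ for the indicators of $w_v\in S_i'$ and $w_v\in S_{i+1}'$, and~$c,d$ for those of~$v$, the old contribution equals $[a\neq b]+[c\neq d]$, while—since~$w_v$ is removed from every set—the new one equals $[(a\lor c)\neq(b\lor d)]$; a short case distinction then gives $[(a\lor c)\neq(b\lor d)]\le[a\neq b]+[c\neq d]$, whence $\sydic{S_i}{S_{i+1}}\le\sydic{S_i'}{S_{i+1}'}\le\ell$ and the equivalence follows.

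Finally, for exhaustive applicability and the running time I would rely on a monotonicity observation: applying \cref{rr:highdeg} to~$v$ only deletes edges incident with~$v$ and inserts one edge to a fresh vertex whose degree is at most one in every layer, so it never increases any vertex's degree in any layer and never creates a new high-degree vertex. Hence afterwards~$v$ has degree at most~$k$ in every layer, and a single sweep over the at most~$|V|$ original vertices already leaves an instance to which the rule no longer applies. For the time bound I would, for each such vertex, scan the~$\tau$ layers, compute $\deg_{G_i}(v)$, and delete the incident edges wherever it exceeds~$k$, at cost~$O(|V|)$ per layer, for a total of $O(|V|^2\tau)$.
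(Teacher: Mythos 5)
Your proposal is correct and follows essentially the same route as the paper's proof: the forward direction forces $v\in S_i$ for all $i\in J$ by the degree bound, the backward direction replaces $w_v$ by $v$ in each cover, and the running time is obtained by a per-layer degree computation over all vertices. The only difference is that you make the symmetric-difference bound explicit via the indicator inequality $[(a\lor c)\neq(b\lor d)]\le[a\neq b]+[c\neq d]$, a step the paper merely asserts, so your write-up is, if anything, slightly more complete.
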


  \begin{proof}
  (\emph{Correctness})
  Let~$I=(\TG,k,\ell)$ be an instance with~$\TG=(G_1,\dots,G_\tau)$, and let~$I'=(\TG',k,\ell)$ be the instance with~$\TG'=(G_1',\dots,G_\tau')$ obtained from~$I$ applying \cref{rr:highdeg} with vertex~$v$ and index set~$J$.
  We prove that~$I$ is a \yes-instance if and only if~$I'$ is a \yes-instance.
  
  \RD{}
  Let~$(S_1,\ldots,S_\tau)$ be a solution to~$I$.
  Observe that for all~$i\in J$, $\deg_{G_i}(v)>k$ and hence~$v\in S_i$.
  It follows that~$(S_1,\ldots,S_\tau)$ is a solution to~$I'$.
  
  \LD{}
  Let~$(S_1',\ldots,S_\tau')$ be a solution to~$I'$.
  Observe that for each~$i\in J$,
  $S_i'\cap\{v,w_v\}\neq \emptyset$.
  Set~$S_i= (S_i'\setminus\{w_v\})\cup \{v\}$ for all~$i\in J$.
  Note that~$S_i$ is a vertex cover for~$G_i$
  since~$v$ covers all its incident edges and~$S_i\setminus\{v\}$ is a vertex cover for~$G_i-\{v\}=G_i'-\{v,w_v\}$.
  For each~$i\in \set{\tau}\setminus J$, 
  set~$S_i= S_i'$ if~$w_v\not\in S_i'$, 
  and~$S_i=(S_i'\setminus\{w_v\})\cup \{v\}$ otherwise. 
  Note that~$S_i$ is a vertex cover of~$G_i=G_i'-\{w_v\}$.
  Finally,
  observe that~$|S_i|\leq |S_i'|$ for all~$i\in\set{\tau}$, and that~$\sydic{S_i}{S_{i+1}}\leq \ell$ for all~$i\in\set{\tau-1}$.
  It follows that~$(S_1,\ldots,S_\tau)$ is a solution to~$I$.
  
  (\emph{Running time})
  For each vertex, we count the number of edges in each layer.
  If there are more than~$k$ edges in one layer, then we remember the index of the layer.
  For each layer, we compute for each vertex the degree and make the modification.
  Once for some~$v$ vertex~$w_v$ is introduced, we add a pointer from~$v$ to~$w_v$, and add the edge~$\{v,w_v\}$ in subsequent layers when needed.
  Hence, in each layer we touch each edge at most twice, yielding~$O(|V(\TG)|^2)$~time per layer.
  \lqed
\end{proof}

\noindent
Similarly as in the reduction rules for~\prob{Vertex Cover},
we now count the number of edges in each layer:
if more than~$k^2$ edges are contained in one layer, 
then no set of~$k$ vertices,
each of degree at most~$k$,
can cover more than~$k^2$ edges.

\begin{rrule}[\no-instance]
 \label{rr:no}
 If neither~\cref{rr:iso} nor~\cref{rr:highdeg} is applicable and there is a layer with more than~$k^2$ edges, then output a trivial \no-instance.
\end{rrule}

\noindent
We are ready to prove that when none of the \cref{rr:iso,rr:highdeg,rr:no} can be applied, then
the instance contains ``few'' vertices and temporal edges.

\begin{lemma}%
  \label{lem:norrsmallsize}
 Let~$(\TG,k,\ell)$ be an instance of~\MSVC{} such that none of \cref{rr:iso,rr:highdeg,rr:no} is applicable.
 Then~$\TG$ consists of at most~$2k^2\tau(\TG)$ vertices and~$k^2\tau(\TG)$ temporal~edges.
\end{lemma}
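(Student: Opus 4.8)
The plan is to read both size bounds off directly from the inapplicability of the three rules, with essentially no work beyond bookkeeping. First I would analyse the consequence of \cref{rr:no} being inapplicable. Under the hypothesis of the lemma, neither \cref{rr:iso} nor \cref{rr:highdeg} applies, so the guard of \cref{rr:no} (``neither \cref{rr:iso} nor \cref{rr:highdeg} is applicable'') is satisfied; hence the only reason \cref{rr:no} can fail to apply is that no layer has more than~$k^2$ edges. Thus $|E(G_i)|\le k^2$ for every~$i\in\set{\tau(\TG)}$, and the bound on temporal edges is then immediate, since the number of temporal edges of~$\TG$ equals $\sum_{i=1}^{\tau(\TG)}|E(G_i)|\le k^2\tau(\TG)$.

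For the vertex bound I would invoke the inapplicability of \cref{rr:iso}: every vertex of~$\TG$ is incident with at least one edge of the underlying graph~$\UG$, and hence with at least one edge in at least one layer. Writing $V_i$ for the set of endpoints of the edges of~$G_i$, this means $V(\TG)=\bigcup_{i=1}^{\tau(\TG)}V_i$. Since each layer has at most~$k^2$ edges and each edge contributes two endpoints, $|V_i|\le 2|E(G_i)|\le 2k^2$. Summing over all layers yields $|V(\TG)|\le\sum_{i=1}^{\tau(\TG)}|V_i|\le 2k^2\tau(\TG)$, as claimed.

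There is no real obstacle in this argument; the only point that needs care is the logical reading of \cref{rr:no}. One must check that its guard clause is exactly met under the lemma's hypotheses, so that the failure of \cref{rr:no} genuinely forces the per-layer edge bound rather than holding vacuously. Note that the degree bound ensured by exhaustive application of \cref{rr:highdeg}---that every vertex has degree at most~$k$ in every layer---is \emph{not} needed for the counting above; it serves only to justify the soundness of \cref{rr:no} (a size-$k$ vertex set whose vertices each have degree at most~$k$ covers at most~$k^2$ edges), so I would not invoke it in this proof.
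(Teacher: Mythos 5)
Your proof is correct and follows essentially the same route as the paper's: inapplicability of \cref{rr:no} (whose guard is met since \cref{rr:iso,rr:highdeg} do not apply) gives at most $k^2$ edges per layer and hence at most $k^2\tau(\TG)$ temporal edges, and inapplicability of \cref{rr:iso} lets you charge every vertex to some layer edge, giving at most $2k^2\tau(\TG)$ vertices. Your added remark that the degree bound from \cref{rr:highdeg} is only needed for the soundness of \cref{rr:no}, not for the counting, is accurate but does not change the argument.
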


  \begin{proof}
  Since none of~\cref{rr:iso,rr:highdeg} is applicable,
  for each layer it holds true that there is no isolated vertex and no vertex of degree larger than~$k$.
  Since \cref{rr:no} is not applicable,
  each layer consists of at most~$k^2$ edges. 
  Hence,
  there are at most~$k^2\tau$ temporal edges in~$\TG$.
  Consequently, 
  due to~\cref{rr:iso}, 
  there are at most~$2k^2\tau$ vertices in~$\TG$.
  \lqed
\end{proof}

\noindent
We are ready to prove the main result of this section.

\ifarxiv{}
\begin{proof}[Proof of~\cref{thm:PKktau}]
		\else{}
\begin{proof}[of~\cref{thm:PKktau}]
\fi{}
 Given an instance~$I=(\TG,k,\ell)$ of~\MSVC{},
 apply \cref{rr:iso,rr:highdeg,rr:no} exhaustively in~$O(|V(\TG)|^2\tau(\TG))$ time either to  decide that~$I$ is a trivial \no-instance or
 to obtain an instance~$(\TG',k,\ell)$ equivalent to~$I$.
 Due to~\cref{lem:norrsmallsize}, 
 $\TG'$ consists of at most~$2k^2\tau(\TG)$ vertices and at most~$k^2\tau(\TG)$ temporal edges.
\lqed
\end{proof}

\subsection[A linear problem kernel regarding tau]{A Problem Kernel of Size \(5\tau\)}
\label{ssec:taulinker}

\MSVC{},
even when each layer is a tree,
does not admit a problem kernel of any size in~$\tau$ unless~$\classP=\NP$.
Yet,
when each layer consists of only one edge, 
then each instance of \MSVC{} contains at most~$\tau$ edges and, 
hence, 
at most~$2\tau$ non-isolated vertices.
Thus,
\MSVC{} admits a straight-forward problem kernel of size linear in~$\tau$.

\begin{observation}
 \label[observation]{thm:preproctau}
 Let~$(\TG,k,\ell)$ be an instance of~\MSVC{} where each layer consists of one edge.
 Then we can compute in~$O(|V(\TG)|\cdot \tau)$ time an instance $(\TG',k,\ell)$ of size at most~$5\tau(\TG)$.
\end{observation}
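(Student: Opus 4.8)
The plan is to reduce the instance by eliminating all vertices that never appear in any edge, and then bound the remaining size by a simple count. First I would observe that, since every layer consists of exactly one edge, the temporal graph~$\TG$ has exactly~$\tau(\TG)$ temporal edges and its underlying graph~$\UG(\TG)$ has at most~$\tau(\TG)$ edges; consequently at most~$2\tau(\TG)$ vertices are incident with some temporal edge, while every other vertex is isolated in all layers.

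Next I would apply \cref{rr:iso} exhaustively, deleting every vertex~$v$ that is isolated in all layers. Correctness is immediate in this setting: such a vertex is never needed in, and never forced into, any layer's vertex cover, so deleting it changes neither the set of admissible covers per layer nor any of the symmetric differences; hence the resulting instance~$(\TG',k,\ell)$ is equivalent to~$(\TG,k,\ell)$, and the parameters~$k$ and~$\ell$ remain untouched. To meet the claimed running time I would avoid testing each vertex against each layer naively: instead a single pass over the~$\tau(\TG)$ temporal edges marks the (at most~$2\tau(\TG)$) endpoints as non-isolated, after which the unmarked vertices are removed. This is well within the stated~$O(|V(\TG)|\cdot\tau(\TG))$ time bound.

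Finally I would bound the size of~$\TG'$. After the reduction, $V(\TG')$ consists only of endpoints of temporal edges, so~$|V(\TG')|\le 2\tau(\TG)$, while the number of temporal edges is still at most~$\tau(\TG)$. Measuring the encoding of~$\TG'$ by its vertices together with its temporal edges—where each temporal edge is described by its two endpoints and its time stamp—yields a total of at most~$2\tau(\TG)+3\tau(\TG)=5\tau(\TG)$, as claimed. I do not expect any genuine obstacle here: the only points requiring care are the size accounting (making explicit how vertices and temporal edges are counted so that the constant~$5$ is met) and the near-linear-time implementation of the isolated-vertex removal, while the correctness of \cref{rr:iso} in this restricted setting is trivial. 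As a sanity check one may note that if~$k\ge\tau(\TG)$ then~\cref{obs:tau2oneedgelayer} already certifies a \yes-instance, so the interesting regime is~$k<\tau(\TG)$, consistent with a kernel bounded purely in~$\tau(\TG)$.
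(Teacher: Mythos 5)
Your proposal follows essentially the same route as the paper: apply \cref{rr:iso} exhaustively to the one-edge layers and count what remains. The one step you are missing is the paper's preliminary reduction of the \emph{numeric} parameters: before kernelizing, the paper observes that one can immediately output a trivial \yes-instance if $k\geq\tau(\TG)$ (by \cref{obs:tau2oneedgelayer}) or if $\ell\geq 2$ (by \cref{obs:turedu}, since every one-edge layer has a vertex cover of size one and any two such singletons have symmetric difference at most two), and hence one may assume $k\leq\tau(\TG)-1$ and $\ell\leq 1$ from then on. This matters for the stated bound: the output is the \emph{instance} $(\TG',k,\ell)$, so its size includes the encodings of $k$ and $\ell$, whereas your accounting $2\tau+3\tau=5\tau$ covers only the vertices and the temporal edges. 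Without first bounding $k$ and $\ell$ in terms of $\tau(\TG)$, the claimed bound of $5\tau(\TG)$ on the whole instance does not follow. You mention the $k\geq\tau$ case only as a sanity check rather than as a step of the algorithm, and the $\ell\geq 2$ case is absent. Everything else — the correctness of deleting vertices that are isolated in every layer, the single-pass marking of endpoints within the $O(|V(\TG)|\cdot\tau)$ time budget, and the bound of $2\tau(\TG)$ on the surviving vertices — matches the paper's argument.
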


\begin{proof}
 Let~$(\TG,k,\ell)$ be an instance of~\MSVC{} where each layer of~$\TG=(V,\TE,\tau)$ consists of one edge.
 Observe that we can immediately output a trivial \yes-instance
 if~$k\geq \tau$ (\cref{obs:tau2oneedgelayer}) or~$\ell\geq 2$~(\cref{obs:turedu}).
 Hence, assume that~$k\leq\tau-1$ and~$\ell\leq 1$.
 Apply~\cref{rr:iso} exhaustively on~$(\TG,k,\ell)$ to obtain~$(\TG',k,\ell)$.
 Since there are~$\tau$ edges in~$\TG$, there are at most~$2\tau$ vertices in~$\TG'$.
 It follows that the size of~$(\TG',k,\ell)$ is at most~$5\tau$.
\lqed
\end{proof}

\section{Conclusion}
\label{sec:conclusion}

We introduced \msvc{}, 
proved it to be \NP-hard even on very restricted input instances, 
and studied its parameterized complexity regarding the natural parameters~$k$, $\ell$, and~$\tau$ (each given as input). %
A highlight is the $\wone$-hardness described in \cref{ssec:whardness} 
which, because it holds on very restricted instances of \msvc{}, may turn out to be useful to provide $\wone$-hardness results for other problems in the multistage setting. 
We leave open whether \MSVC{} parameterized by~$k$ is fixed-parameter tractable
when each layer consists of only one edge (see \cref{tab:results}).
Moreover, it is open whether \MSVC{} remains \NP-hard on two layers each being a path (that is, strengthening \cref{thm:npahrdcases}(i)).

  \bibliographystyle{plainnat}
\bibliography{msvc}

\newcommand{\noopsort}[1]{}
\begin{thebibliography}{29}
\providecommand{\natexlab}[1]{#1}
\providecommand{\url}[1]{\texttt{#1}}
\expandafter\ifx\csname urlstyle\endcsname\relax
  \providecommand{\doi}[1]{doi: #1}\else
  \providecommand{\doi}{doi: \begingroup \urlstyle{rm}\Url}\fi

\bibitem[Abu{-}Khzam et~al.(2015)Abu{-}Khzam, Egan, Fellows, Rosamond, and
  Shaw]{AEFRS15}
Faisal~N. Abu{-}Khzam, Judith Egan, Michael~R. Fellows, Frances~A. Rosamond,
  and Peter Shaw.
\newblock On the parameterized complexity of dynamic problems.
\newblock \emph{Theor.~Comput.~Sci.}, 607:\penalty0 426--434, 2015.

\bibitem[Akrida et~al.(2018)Akrida, Mertzios, Spirakis, and
  Zamaraev]{AkridaMSZ18}
Eleni~C. Akrida, George~B. Mertzios, Paul~G. Spirakis, and Viktor Zamaraev.
\newblock Temporal vertex cover with a sliding time window.
\newblock In \emph{Proc.\ of 45th {ICALP}}, volume 107 of \emph{LIPIcs}, pages
  148:1--148:14. Schloss Dagstuhl - Leibniz-Zentrum für Informatik, 2018.

\bibitem[Alman et~al.(2017)Alman, Mnich, and Williams]{AlmanMW17}
Josh Alman, Matthias Mnich, and Virginia~Vassilevska Williams.
\newblock Dynamic parameterized problems and algorithms.
\newblock In \emph{Proc.\ of 44th {ICALP}}, volume~80 of \emph{LIPIcs}, pages
  41:1--41:16. Schloss Dagstuhl - Leibniz-Zentrum für Informatik, 2017.

\bibitem[Bampis et~al.(2018)Bampis, Escoffier, Lampis, and
  Paschos]{BampisELP18}
Evripidis Bampis, Bruno Escoffier, Michael Lampis, and Vangelis~Th. Paschos.
\newblock Multistage matchings.
\newblock In \emph{Proc.\ of 16th {SWAT}}, volume 101 of \emph{LIPIcs}, pages
  7:1--7:13. Schloss Dagstuhl - Leibniz-Zentrum für Informatik, 2018.

\bibitem[Bampis et~al.(2019{\natexlab{a}})Bampis, Escoffier, and
  Kononov]{BEK19}
Evripidis Bampis, Bruno Escoffier, and Alexander~V. Kononov.
\newblock {LP}-based algorithms for multistage minimization problems.
\newblock \emph{CoRR}, abs/1909.10354, 2019{\natexlab{a}}.
\newblock URL \url{http://arxiv.org/abs/1909.10354}.

\bibitem[Bampis et~al.(2019{\natexlab{b}})Bampis, Escoffier, and
  Teiller]{BampisET19}
Evripidis Bampis, Bruno Escoffier, and Alexandre Teiller.
\newblock Multistage knapsack.
\newblock In \emph{Proc.\ of 44th {MFCS}}, volume 138 of \emph{LIPIcs}, pages
  22:1--22:14. Schloss Dagstuhl - Leibniz-Zentrum für Informatik,
  2019{\natexlab{b}}.

\bibitem[Bodlaender et~al.(2009)Bodlaender, Downey, Fellows, and
  Hermelin]{BodlaenderDFH09}
Hans~L. Bodlaender, Rodney~G. Downey, Michael~R. Fellows, and Danny Hermelin.
\newblock On problems without polynomial kernels.
\newblock \emph{J. Comput. Syst. Sci.}, 75\penalty0 (8):\penalty0 423--434,
  2009.

\bibitem[Casteigts et~al.(2012)Casteigts, Flocchini, Quattrociocchi, and
  Santoro]{CasteigtsFQS12}
Arnaud Casteigts, Paola Flocchini, Walter Quattrociocchi, and Nicola Santoro.
\newblock Time-varying graphs and dynamic networks.
\newblock \emph{International Journal of Parallel, Emergent and Distributed
  Systems}, 27\penalty0 (5):\penalty0 387--408, 2012.
\newblock \doi{10.1080/17445760.2012.668546}.
\newblock URL \url{https://doi.org/10.1080/17445760.2012.668546}.

\bibitem[Chimani et~al.(2020)Chimani, Troost, and Wiedera]{CTW20}
Markus Chimani, Niklas Troost, and Tilo Wiedera.
\newblock Approximating multistage matching problems.
\newblock \emph{CoRR}, abs/2002.06887, 2020.
\newblock URL \url{https://arxiv.org/abs/2002.06887}.

\bibitem[Chitnis et~al.(2016)Chitnis, Cormode, Esfandiari, Hajiaghayi,
  McGregor, Monemizadeh, and Vorotnikova]{ChitnisCEHMMV16}
Rajesh Chitnis, Graham Cormode, Hossein Esfandiari, MohammadTaghi Hajiaghayi,
  Andrew McGregor, Morteza Monemizadeh, and Sofya Vorotnikova.
\newblock Kernelization via sampling with applications to finding matchings and
  related problems in dynamic graph streams.
\newblock In \emph{Proc.\ of 27th {SODA}}, pages 1326--1344. {SIAM}, 2016.

\bibitem[Cygan et~al.(2015)Cygan, Fomin, Kowalik, Lokshtanov, Marx, Pilipczuk,
  Pilipczuk, and Saurabh]{cygan2015parameterized}
Marek Cygan, Fedor~V Fomin, {\L}ukasz Kowalik, Daniel Lokshtanov, D{\'a}niel
  Marx, Marcin Pilipczuk, Micha{\l} Pilipczuk, and Saket Saurabh.
\newblock \emph{Parameterized Algorithms}.
\newblock Springer, 2015.

\bibitem[Downey and Fellows(1999)]{DowneyF99}
Rodney~G. Downey and Michael~R. Fellows.
\newblock \emph{Parameterized Complexity}.
\newblock Monographs in Computer Science. Springer, 1999.

\bibitem[Downey and Fellows(2013)]{downey2013fundamentals}
Rodney~G Downey and Michael~R Fellows.
\newblock \emph{Fundamentals of Parameterized Complexity}.
\newblock Springer, 2013.

\bibitem[Drucker(2015)]{Drucker15}
Andrew Drucker.
\newblock New limits to classical and quantum instance compression.
\newblock \emph{{SIAM} J. Comput.}, 44\penalty0 (5):\penalty0 1443--1479, 2015.

\bibitem[Eisenstat et~al.(2014)Eisenstat, Mathieu, and Schabanel]{EMS14}
David Eisenstat, Claire Mathieu, and Nicolas Schabanel.
\newblock Facility location in evolving metrics.
\newblock In \emph{Proc.\ of 41st {ICALP}}, LNCS, pages 459--470. Springer,
  2014.

\bibitem[Fleischner et~al.(2010)Fleischner, Sabidussi, and
  Sarvanov]{FleischnerSS10}
Herbert Fleischner, Gert Sabidussi, and Vladimir~I. Sarvanov.
\newblock Maximum independent sets in 3- and 4-regular {H}amiltonian graphs.
\newblock \emph{Discrete Math.}, 310\penalty0 (20):\penalty0 2742--2749, 2010.

\bibitem[Fluschnik et~al.(2020{\natexlab{a}})Fluschnik, Molter, Niedermeier,
  Renken, and Zschoche]{FluschnikMNZ18}
Till Fluschnik, Hendrik Molter, Rolf Niedermeier, Malte Renken, and Philipp
  Zschoche.
\newblock Temporal graph classes: {A} view through temporal separators.
\newblock \emph{Theor.\ Comput.\ Sci.}, 806:\penalty0 197--218,
  2020{\natexlab{a}}.

\bibitem[Fluschnik et~al.(2020{\natexlab{b}})Fluschnik, Niedermeier, Schubert,
  and Zschoche]{FNSZ20}
Till Fluschnik, Rolf Niedermeier, Carsten Schubert, and Philipp Zschoche.
\newblock Multistage $s$-$t$ path: Confronting similarity with dissimilarity.
\newblock \emph{CoRR}, abs/2002.07569, 2020{\natexlab{b}}.
\newblock URL \url{https://arxiv.org/abs/2002.07569}.

\bibitem[Garey et~al.(1976)Garey, Johnson, and Stockmeyer]{GJS76}
M.~R. Garey, David~S. Johnson, and Larry~J. Stockmeyer.
\newblock Some simplified {NP}-complete graph problems.
\newblock \emph{Theor.\ Comput.\ Sci.}, 1\penalty0 (3):\penalty0 237--267,
  1976.

\bibitem[Gopalan et~al.(2009)Gopalan, Kolaitis, Maneva, and
  Papadimitriou]{gopalan2009connectivity}
Parikshit Gopalan, Phokion~G Kolaitis, Elitza Maneva, and Christos~H
  Papadimitriou.
\newblock The connectivity of boolean satisfiability: computational and
  structural dichotomies.
\newblock \emph{SIAM J.\ Comput.}, 38\penalty0 (6):\penalty0 2330--2355, 2009.

\bibitem[Gupta et~al.(2014)Gupta, Talwar, and Wieder]{GuptaTW14}
Anupam Gupta, Kunal Talwar, and Udi Wieder.
\newblock Changing bases: Multistage optimization for matroids and matchings.
\newblock In \emph{Proc.\ of 41st {ICALP}}, volume 8572 of \emph{LNCS}, pages
  563--575. Springer, 2014.

\bibitem[Hartung and Niedermeier(2013)]{HN13}
Sepp Hartung and Rolf Niedermeier.
\newblock Incremental list coloring of graphs, parameterized by conservation.
\newblock \emph{Theor.\ Comput.\ Sci.}, 494:\penalty0 86--98, 2013.

\bibitem[Heeger et~al.(2019)Heeger, Himmel, Kammer, Niedermeier, Renken, and
  Sajenko]{HHKNRS19}
Klaus Heeger, Anne{-}Sophie Himmel, Frank Kammer, Rolf Niedermeier, Malte
  Renken, and Andrej Sajenko.
\newblock Multistage problems on a global budget.
\newblock \emph{CoRR}, abs/1912.04392, 2019.
\newblock URL \url{http://arxiv.org/abs/1912.04392}.

\bibitem[Ito et~al.(2011)Ito, Demaine, Harvey, Papadimitriou, Sideri, Uehara,
  and Uno]{ito2011complexity}
Takehiro Ito, Erik~D Demaine, Nicholas~JA Harvey, Christos~H Papadimitriou,
  Martha Sideri, Ryuhei Uehara, and Yushi Uno.
\newblock On the complexity of reconfiguration problems.
\newblock \emph{Theor.\ Comput.\ Sci.}, 412\penalty0 (12-14):\penalty0
  1054--1065, 2011.

\bibitem[Iwata and Oka(2014)]{IwataO14}
Yoichi Iwata and Keigo Oka.
\newblock Fast dynamic graph algorithms for parameterized problems.
\newblock In \emph{Proc.\ of 12th {SWAT}}, volume 8503 of \emph{LNCS}, pages
  241--252. Springer, 2014.

\bibitem[Krithika et~al.(2018)Krithika, Sahu, and Tale]{KST18}
R.~Krithika, Abhishek Sahu, and Prafullkumar Tale.
\newblock Dynamic parameterized problems.
\newblock \emph{Algorithmica}, 80\penalty0 (9):\penalty0 2637--2655, 2018.

\bibitem[Mouawad et~al.(2018)Mouawad, Nishimura, Raman, and
  Siebertz]{mouawad2018vertex}
Amer Mouawad, Naomi Nishimura, Venkatesh Raman, and Sebastian Siebertz.
\newblock Vertex cover reconfiguration and beyond.
\newblock \emph{Algorithms}, 11\penalty0 (2):\penalty0 20, 2018.

\bibitem[Mouawad et~al.(2017)Mouawad, Nishimura, Raman, Simjour, and
  Suzuki]{mouawad2017parameterized}
Amer~E Mouawad, Naomi Nishimura, Venkatesh Raman, Narges Simjour, and Akira
  Suzuki.
\newblock On the parameterized complexity of reconfiguration problems.
\newblock \emph{Algorithmica}, 78\penalty0 (1):\penalty0 274--297, 2017.

\bibitem[Yap(1983)]{Yap83}
Chee{-}Keng Yap.
\newblock Some consequences of non-uniform conditions on uniform classes.
\newblock \emph{Theor.\ Comput.\ Sci.}, 26:\penalty0 287--300, 1983.

\end{thebibliography}

\end{document}